\newcommand{\memo}[1]{}
\DeclareMathOperator{\supp}{\rm supp}
\newcommand{\R}{{\mathbf R}}
\newcommand{\ve}{\varepsilon}
\newcommand{\pa}{\partial}
\newcommand{\jb}[1]{\left\langle #1 \right\rangle}
\newcommand{\norm}[2]{\|#1 \!:\! #2\|}
\newcommand{\D}{\Xi}
\newcommand{\Do}{\Omega}
\newcommand{\jbx}{\jb{x}}
\newcommand{\lam}{\lambda}
\newcommand{\vp}{\varphi}
\newcommand{\non}{\nonumber}
\newtheorem{theorem}{Theorem}[section]
\newtheorem{proposition}[theorem]{Proposition}
\newtheorem{lemma}[theorem]{Lemma}
\renewcommand{\thelemma}{\thesection.\arabic{lemma}}
\numberwithin{equation}{section}
\title[Nonlinear wave equations in an exterior domain in 2D]
{Almost global existence for nonlinear wave equations
in an exterior domain 
\\
in two space dimensions}
\author[H. Kubo]{Hideo Kubo}
\date{}
\subjclass{35L70, 35L20}
\begin{document}
\maketitle

\begin{center}
{\small
Graduate School of Information Sciences,
\\
Tohoku University, Sendai 980-8579, Japan
\\
kubo@math.is.tohoku.ac.jp}
\end{center}

\begin{abstract}
In this paper we deal with the exterior problem for a system of nonlinear wave equations 
in two space dimensions, assuming that the initial data is small and smooth.
We establish the same type of lower bound of the lifespan for the problem
as that for the Cauchy problem, despite of the weak decay property of the solution 
in two space dimensions.
\end{abstract}

\hspace{7mm}
{\it Keywords}. Exterior problem, Nonlinear wave equation, Lifespan

\begin{center}
{Dedicated to Professor Yoshihiro Shibata on the occasion of his 60th birthday}
\end{center}

%\tableofcontents

\section{Introduction and statement of main results}
Let $\Do$ be an unbounded domain in $\R^n$~($n \ge 2$)
with compact and smooth boundary $\partial \Do$.
We put ${\mathcal O}:=\R^n \setminus {\Do}$, which is called
an obstacle and is assumed to be non-empty.
We consider the mixed problem for a system of nonlinear wave equations\,:
\begin{align}\label{ap1}
& (\partial_t^2-\Delta) u_i =F_i(\partial u, \nabla_x \pa u),
 & (t,x) \in (0,\infty)\times \Omega,
\\ \label{ap2}
& u(t,x)=0, & (t,x) \in (0,\infty)\times
\partial\Omega,
\\ \label{ap3}
& u(0,x)=\varepsilon \phi(x),\hspace{3mm}
\partial_t u(0,x)=\varepsilon \psi(x), & x\in \Omega
\end{align}
for $i=1, \dots, N$, where $u=(u_1,u_2\dots,u_N)$ is an unknown function,
$\Delta=\sum_{j=1}^n \partial_j^2$, $\partial_t=\partial_0=\partial/\partial t$,
$\partial_j=\partial/\partial x_j$ ($j=1,\ldots,n$), and $\ve>0$.
We assume $\phi$, $\psi \in C^\infty_0(\overline{\Do}\,;\R^N)$,
namely, they are smooth functions on $\overline{\Do}$
vanishing outside some ball. 
We also assume that $F_i(\partial u, \nabla_x \partial u)$ is a smooth function satisfying
\begin{align}\label{vo}
F_i(\partial u, \nabla_x \partial u)=O(|\partial u|^q+|\nabla_x \partial u|^q),
\quad 1 \le i \le N
\end{align}
around $(\partial u,\nabla_x \partial u)=0$ for some integer $q \ge 2$,
together with the energy symmetric condition.

We suppose, in addition, that $(\phi, \psi, F)$ satisfies the 
{\it compatibility condition} to infinite order
for the mixed problem \eqref{ap1}-\eqref{ap3}, that is,  
$(\partial^j_t u)(0,x)$, formally determined by \eqref{ap1} and \eqref{ap3},
vanishes on $\partial\Do$ for any non-negative integer $j$
$($notice that the values $(\partial^j_t u)(0,x)$ are determined by
$(\phi,\psi, F)$ successively {\rm;} for example we have
$\pa_t^2 u(0)=\Delta_x\phi+F(\psi, \nabla_x\phi)$, and so on$)$.

%Let us recall known results.
It was firstly shown by Shibata and Tsutsumi \cite{ShiTsu86} that
the mixed problem for (\ref{ap1})-(\ref{ap3}) admits a unique global solution 
for sufficiently small initial data, when either $n \ge 6$ and $q \ge 2$ or 
$3 \le n \le 5$ and $q \ge 3$,
provided ${\mathcal O}$ is non-trapping.
Although the dispersive property is getting weaker as the spatial dimension is
lower, there are already many contributions for the case where
$3 \le n \le 5$ and $q=2$
(see \cite{God89, God95}, \cite{Ha95}, \cite{KatKub}, \cite{KeSmiSo02G, KeSmiSo04}, \cite{Kub06}, \cite{Met04}, \cite{MetNaSo05b}, \cite{MetSo05}
and the references cited therein).  

However, up to the author's knowledge, there is no literature about the
exterior problem (\ref{ap1})-(\ref{ap3}) for the case $n=2$.
The aim of this paper is to treat the problem in that case,
by assuming that $q=3$ in \eqref{vo} and ${\mathcal O}$ is star-shaped.
We remark that when $n=2$, the cubic nonlinearity is on the critical level concerning
the global existence theorem for small initial data.
Indeed, if $N=1$ and $F_1=(\pa_t u)^3$, then 
one can show a blow-up result from a corresponding result for the Cauchy problem
(see e.g. \cite{Kub94}), because of the domain of dependance.
%the blow-up result for the Cauchy problem given by Agemi \cite{agemi}

Let us denote the lifespan by $T_\ve$, i.e., the
supremum of all $T>0$ such that a classical solution to the problem
(\ref{ap1})-(\ref{ap3}) exists in $[0,T) \times \overline{\Omega}$. 
Then we find that $T_\ve \le \exp(A \ve ^{-2})$ holds for some positive constant 
$A$, in view of the argument in \cite{Kub94}.
Therefore, it is natural to ask whether the above upper bound of the lifespan 
is optimal with respect to $\varepsilon$ or not.
% (the detail will be published elsewhere).
In this paper we shall establish an affirmative answer to this question as follows.

\begin{theorem}\label{thm:GE}
Let $n=2$ and let $\phi$, $\psi \in C^\infty(\overline{\Do}\,;\R^N)$
vanish outside certain ball.
Assume that $(\phi,\psi, F)$ satisfies the compatibility condition
to infinite order for the problem \eqref{ap1}-\eqref{ap3},
${\mathcal O}$ is star-shaped,
and $F$ satisfies \eqref{vo} with $q=3$.
Then there exist positive constants $\varepsilon_0$, $C$
such that for all $\ve \in (0,\ve_0]$,
we have $T_\ve \ge \exp(C\ve^{-2})$.
\end{theorem}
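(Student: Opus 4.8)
The plan is to combine three ingredients: (i) an almost‐global existence result for the Cauchy problem in $\R^2$ with cubic nonlinearity, (ii) a local energy decay estimate for the wave equation outside a star‐shaped obstacle, and (iii) a cutoff/iteration scheme gluing the two together. I would first fix a large ball $B_R$ containing both the obstacle ${\mathcal O}$ and the supports of $\phi,\psi$, and split the solution $u = u_{\mathrm{int}} + u_{\mathrm{ext}}$ (or rather analyze it through a partition of unity $\chi + (1-\chi)$ where $\chi$ is supported near the obstacle). In the far region the dynamics is essentially that of the free wave equation on $\R^2$, where Klainerman–Sobolev inequalities give the sharp pointwise decay $|\pa u|\lesssim \ve \jb{t+|x|}^{-1/2}\jb{t-|x|}^{-1/2}$, and the cubic nonlinearity produces, upon integrating the energy inequality, a factor $\int_0^T \jb{t}^{-1}\,dt \sim \log T$; demanding that $\ve^2\log T \lesssim 1$ yields $T \sim \exp(C\ve^{-2})$. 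This is the mechanism behind the claimed lower bound, and it must be reproduced in the presence of the boundary.

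Next I would set up the weighted energy / ghost-weight norm adapted to the exterior domain. The key point is to control a norm of the form $M(T) = \sup_{0\le t\le T}\big( \jb{t}^{-C\ve^2} \|\jb{t+|x|}^{1/2}\pa Z^{\le k} u(t)\|_{L^2} + \text{local energy terms}\big)$, where $Z$ ranges over the admissible vector fields — here only $\pa_t$, $\pa_j$ and the rotation $\Omega_{12} = x_1\pa_2 - x_2\pa_1$ (the scaling and Lorentz boosts do not preserve the obstacle, so the generalized energy method must work with this reduced set, which is why one needs $q=3$ rather than $q=2$ and why one only gets almost‐global rather than global). For the boundary, elliptic regularity near $\pa\Do$ together with the equation converts missing tangential-derivative control into interior control, in the spirit of Keel–Smith–Sogge; this requires the compatibility conditions to infinite order. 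The star-shapedness enters through the exterior local energy decay estimate: for solutions of $\dal v = G$ vanishing on $\pa\Do$, one has a bound on $\|\pa v\|_{L^2(|x|<R)}$ by the initial energy plus a spacetime integral of $G$ against a suitable weight, with a decay rate in $t$ (a $1/t$-type decay suffices in 2D when combined with the cubic gain). I would prove this via the Morawetz multiplier $(x\cdot\nabla + \tfrac{n-1}{2})v$ and the Lax–Phillips / Melrose–type argument for non-trapping (star-shaped) obstacles, or simply cite the appropriate lemma if established earlier in the paper.

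The iteration then runs as follows. Assume $M(T)\le 2A\ve$ on a maximal interval; feed this into the nonlinear term $F(\pa u,\nabla\pa u) = O(|\pa u|^3 + |\nabla\pa u|^3)$. In the far zone, two of the three factors are estimated in $L^\infty$ using Klainerman–Sobolev (with the rotation and translations this still gives the product of the two light-cone weights), leaving an $L^2$ factor; the resulting spacetime integral is $\int_0^T \jb{t}^{-1} M(t)^2 \, \|\pa Z u\|_{L^2}\,dt$, which after Gronwall produces the weight $\jb{t}^{C\ve^2}$ and closes provided $\ve$ is small and $T\le \exp(C\ve^{-2})$. In the near zone $|x|\lesssim R$, the pointwise decay of $\pa u$ comes not from Klainerman–Sobolev but from the local energy decay estimate combined with Sobolev embedding in the compact region and the fact that the far-zone waves arriving at the obstacle are themselves small and decaying; here one uses that $(1-\chi)u$ solves a free-type equation and $\chi u$ is handled by the boundary decay estimate, absorbing the commutator terms $[\dal,\chi]u$ supported in the annulus where both estimates are available. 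I expect the main obstacle to be precisely this gluing in the interior: reconciling the slow $2D$ decay with the loss of the scaling vector field, and arranging the bootstrap constants so that the logarithmic-in-$T$ growth from the nonlinearity and the algebraic-in-$t$ loss built into $M(T)$ are mutually consistent, so that continuity of $M$ and $M(T)\le 2A\ve$ together force $M(T)\le A\ve$, giving $T_\ve \ge \exp(C\ve^{-2})$. A secondary technical point is verifying that the commutators $[\dal, \Omega_{12}]$, $[\Omega_{12},\chi]$ and the boundary traces of $\Omega_{12}^j u$ are all controlled, which is where the infinite-order compatibility hypothesis and smoothness of $\pa\Do$ are genuinely used.
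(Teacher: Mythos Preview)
Your outline captures the broad architecture correctly---cutoff method, restricted vector fields $\{\partial_t,\partial_j,O_{12}\}$, local energy decay from star-shapedness, and a bootstrap closing when $\ve^2\log T\lesssim 1$. However, there is a genuine gap at exactly the point you flag as ``the main obstacle.''

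When you run the generalized energy identity for $Z^\alpha u$ with $Z$ containing the rotation, the boundary term $\int_{\partial\Omega}(\nu\cdot\nabla Z^\alpha u)(\partial_t Z^\alpha u)\,dS$ does not vanish. Your plan is to control it ``in the spirit of Keel--Smith--Sogge'' via elliptic regularity and local energy decay. In $n=3$ that works, but in $n=2$ the available pointwise decay near the obstacle is only of order $\jb{t}^{-1/2}$ for $\partial u$ (the local energy decay rate is merely $\jb{t}^{-1}(\log(2+t))^{-2}$, see \eqref{obstacle}), so the time integral of the boundary term grows like $t^{+}$ rather than $\log t$, and the bootstrap does not close in a single pass. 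The paper states this explicitly: the pointwise estimate for $\partial u$ is ``not good enough, unlike the case of $n=3$, for handling the boundary term.''

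The missing idea is this: since the Dirichlet condition is preserved under $\partial_t$, the function $\partial_t u$ solves the same mixed problem with shifted data, and one can re-apply the whole linear machinery to it. This yields a strictly better local decay for $\partial\partial_t u$ (roughly $\jb{t}^{-1+\eta}$ near $\partial\Omega$, estimate \eqref{ba4t}) than for $\partial\nabla_x u$. Because the boundary integrand always contains one factor $\partial_t Z^\alpha u$, this asymmetry is enough to make the boundary contribution integrable---but only after a \emph{multi-stage} iteration (the paper does four passes, Sections~5.2--5.5), in which each pass feeds the improved boundary decay back into the generalized energy, trading regularity for decay until the exponent drops below $-1$. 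Your single-pass Gronwall scheme with a $\jb{t}^{-C\ve^2}$ weight will not reach that threshold without this mechanism.
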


Our proof of the theorem is based on the cut-off method
used in \cite{ShiTsu86}.
Because the decaying rate of the local energy is actually weak when $n=2$
(see \eqref{obstacle} below), we need a careful treatment
for getting weighted pointwise estimates given in 
Theorem \ref{main} below from those for the corresponding 
Cauchy problem due to Kubota \cite{kubota}, Di Flaviano \cite{DiF03}
and Hoshiga and Kubo \cite{hk2}.
Unfortunately, the resulting pointwise estimate for derivatives of the solution
is not good enough, unlike the case of $n=3$,
for handling the boundary term arising from the integration-by-parts argument.
The main idea to overcome the difficulty is to make use of the stronger
decay property for the time derivative of the solution
than that for the space derivatives of the solution.
This stronger decay property for the time derivative 
is deduced from the fact that the boundary condition is preserved 
under the differentiation in time.

As in the work of  \cite{KatKub}, we shall use a
part of the vector fields of the Lorentz invariance: $\partial_t$, $\partial_j$~$(j=1,2)$, and 
$O_{12}=x_1 \partial_2-x_2\partial_1$,
%so that one can extend the result for the system of nonlinear wave equation
%with mutiple speeds. 
because the boundary condition makes difficult to use
$t \partial_j+x_j \partial_t$ ($j=1, 2$) and $t\partial_t+x \cdot \nabla_x$.
We also remark that the geometric assumption on the obstacle will be used 
for assuring the decay of the local energy.

This paper is organized as follows. In the next section we collect several
notation.
In the section 3 we give some preliminaries needed later on.
The section 4 is devoted to establish the weighted pointwise estimates
(\ref{ba4}) and \eqref{ba4t} below.
Making use of these estimates, we give a proof of
the almost global existence theorem in the section 5.

%%%%%%%%%%%%%%%%%%%%%%%%%%%%%%%%%%
\section{Notation}
%%%%%%%%%%%%%%%%%%%%%%%%%%%%%%%%%%%

For $\D=(v_0, v_1, f)\in H^1(\Omega)\times L^2(\Omega)\times 
L^\infty\bigl( (0,T); L^2(\Omega) \bigr)$, we denote by $S[\D](t,x)$ the solution
of the mixed problem\,:
\begin{align}\label{eq}
& (\partial_t^2-\Delta_x) v =f, & (t,x) \in (0,T)\times \Do,
\\ \label{dc}
& v(t,x)=0, & (t,x) \in (0,T)\times \partial\Do,
\\ \label{id}
& v(0,x)=v_0(x),\ (\partial_t v)(0,x)=v_1(x), & x\in \Do.
\end{align}
We sometimes write $\vec{v}_0=(v_0, v_1)$ in what follows.
We denote by $X(T)$ the set of all 
$$
\D=(v_0, v_1, f)=(\vec{v}_0,f)\in C^\infty_0(\overline{\Do}; \R^2)\times
C^\infty_D([0,T)\times \overline{\Do};\R)
$$
satisfying the compatibility condition to infinite order for \eqref{eq}-\eqref{id},
i.e., $(\pa_t^j v)(0,x)$, determined formally from \eqref{eq} and \eqref{id} by
\begin{align}\label{data+}
%\partial_\nu v_0(x)=\phi(x),
%v_1(x)=\psi(x),
v_j(x)=\Delta v_{j-2}(x)+(\partial_t^{j-2} f)(0,x) \quad (j\ge 2)
\end{align}
vanishes on $\pa \Do$ for any non-negative integer $j$.
Here $f\in C^\infty_D([0,T)\times \overline{\Do};\R)$
means that $f\in C^\infty([0,T)\times \overline{\Do};\R)$
and $f(t,\cdot)\in C^\infty_0(\overline{\Do})$ for any fixed $t\in [0,T)$.
In addition, for $a>1$, $X_{a}(T)$ denotes 
the set of all $\D=(v_0, v_1, f)\in X(T)$ satisfying
$$ 
 v_0(x)=v_1(x)=f(t,x)\equiv 0\text{ for $|x|\ge a$ and $t\in [0, T)$.}
$$ 
%because the ones of \cite{KeSmiSo02,KeSmiSo04,MetSo06} rely on the weighted space--time $L^2$ estimate
%and it is not knwon if the estimate holds for the two dimensional case.
Besides we set $K[\vec{v}_0]=S[(\vec{v}_0,0)]$ and
$L[f]=S[(0,0,f)]$.

Similarly, for 
$(w_0, w_1, g)\in H^1(\R^2)\times L^2(\R^2) \times 
    L^\infty\bigl((0,T); L^2(\R^2)\bigr)$,
we denote by $S_0[(w_0, w_1, g)](t,x)$ 
the solution of the following Cauchy problem\,:
\begin{align}\label{eq0}
&(\partial_t^2-\Delta_x) w = g, & (t,x) \in (0,T)\times \R^2,
\\ \label{id0}
& w(0,x)=w_0(x),\ (\partial_t w)(0,x)=w_1(x), & x\in \R^2.
\end{align}
Besides we put $K_0[\vec{w}_0]=S_0[(\vec{w}_0,0)]$ and
$L_0[g]=S_0[(0,0,g)]$, where $\vec{w}_0=(w_0, w_1)$.

We denote
\begin{equation}\nonumber 
Z_0=\partial_0=\partial_t, \quad Z_j=\partial_j \ (j=1,2), \quad
Z_3=O_{12}=x_1 \partial_2-x_2\partial_1.
\end{equation}
Then we have
\begin{align}\label{commute}
Z_j(\partial_t^2-\Delta)=(\partial_t^2-\Delta)Z_j
\quad \mbox{for}\  j=0, 1, 2, 3.
\end{align}
Denoting $Z^\alpha=Z_0^{\alpha_0} \cdots Z_{3}^{\alpha_{3}}$ with a multi-index
$\alpha=(\alpha_0, \ldots, \alpha_{3})$, we set
\begin{equation}\label{norm}
|\varphi (t,x)|_m=\sum_{|\alpha| \le m} |Z^\alpha \varphi(t,x)|,
\quad
\|\varphi(t)\|_m=\|\,|\varphi(t,\cdot)|_m\!:\!{L^2(\Do)}\|
\end{equation}
for a real or $\R^N$-valued smooth function $\varphi(t,x)$ and 
a non-negative integer $m$.

For $\nu$, $\kappa \in \R$, we define 
\begin{eqnarray}\label{defPhi}
&& {\Phi}_\nu(t,x)=
   \left\{
      \begin{array}{ll}
        \langle t+|x|\rangle^{\nu} &\quad \mbox{if} \hspace{3mm}  \nu<0,
\\
       \log^{-1}\bigg(2+\displaystyle\frac{\langle t+|x|\rangle}{\langle t-|x|\rangle}\bigg)
         &\quad \mbox{if} \hspace{3mm} \nu=0,
%\\       \langle t-|x|\rangle^{\nu} &\quad \mbox{if} \hspace{3mm}  \nu>0,\  r\geq t
\\
       \langle t-|x|\rangle^{\frac{1}{2}} \langle t-|x|\rangle^{-[\frac{1}{2}-\nu]_+} 
         &\quad \mbox{if} \hspace{3mm}  \nu>0,   %\  r<t,
      \end{array}
   \right. 
\\ \nonumber
\\ \label{defPsi}
&& \Psi_\kappa(t)=
\left\{\begin{array}{ll}
\log(2+t)  &\quad \mbox{if} \hspace{3mm} \kappa=1,
\\
1  &\quad \mbox{if} \hspace{3mm}  \kappa\not=1,
\end{array}\right.
\end{eqnarray}
where we have denoted
$$
A^{[a]_+}=A^a \quad \mbox{if} \quad a>0~;\
A^{[a]_+}=1 \quad \mbox{if} \quad a<0~;
\
A^{[0]_+}=1+\log A
$$
for $A\ge 1$, and $\jb{s}=\sqrt{1+|s|^2}$ for $s \in \R^n$.
Besides, for $\rho$, $\kappa \in {\mathbf R}$ and $c \ge 0$,
we put
\begin{align}\label{defz}
& z_{\rho,\kappa\,;\,c}(t,x)=\langle t+|x|\rangle^\rho
\langle ct-|x| \rangle^\kappa,
\\ \label{defW}
& W_{\rho,\kappa}(t,x)= 
 \langle t+|x|\rangle^\rho \Bigl( \min\{\jb{|x|},\jb{t-|x|} \}\Bigr)^\kappa,                          
\\
& w_\rho(t,x)=\jb{x}^{-1/2} \jb{t-|x|}^{-\rho}+\jb{t+|x|}^{-1/2} \jb{t-|x|}^{-1/2}.
\end{align}
Note that for $1/2 \le \rho \le 1$ we have
\begin{align} \label{el1}
& w_\rho(t,x) \le C \left( W_{1/2,1/2}(t,x) \right)^{-1},
\\ \label{el2}
& w_\rho(t,x) \le C \jb{t}^{-\rho} \quad {\rm if}\ |x| \le 2.
\end{align}

We define
\begin{equation}\label{NfW}
 \|f(t)\!:\!{N_k({\mathcal W})}\|
=\sup_{(s,x) \in [0,t] \times \Do} 
     \jbx^{1/2}\,{\mathcal W}(s,x)\,|f(s,x)|_k
\end{equation}
for $t\in [0,T)$, a non-negative integer $k$ and any non-negative
function $\mathcal{W}(s,x)$.
Similarly we put
\begin{equation}\label{eq:3.5}
 \|g(t)\!:\!{M_k(\mathcal{W})}\| =
   \sup_{(s,x) \in [0,t] \times \R^2} 
     \jbx^{1/2}\,\mathcal{W}(s,x)\,|g(s,x)|_k.
\end{equation}

Let $\rho\ge 0$, and $k$ be a non-negative integer. We define 
\begin{equation}
{\mathcal A}_{\rho, k}[v_0, v_1]=\sup_{x\in \Do} \jbx^{\rho} 
\bigl(|v_0(x)|_{k}+|\nabla_x v_0(x)|_k+|v_1(x)|_k\bigr)
\label{HomWei}
\end{equation}
for a smooth function $(v_0, v_1)$ on $\Do$, and
\begin{equation}
{\mathcal B}_{\rho, k}[w_0, w_1]=\sup_{x\in \R^2} \jbx^{\rho} 
\bigl(|w_0(x)|_{k}+|\nabla_x w_0(x)|_k+|w_1(x)|_k\bigr)
\label{HomWei0}
\end{equation}
for a smooth function $(w_0, w_1)$ on $\R^2$. 

For $a \ge 1$, let $\psi_a$ be a smooth radially symmetric function
on $\R^2$ satisfying
\begin{equation}\label{cutoff}
\psi_a(x)=0 \ (|x| \le a), \quad
\psi_a(x)=1 \ (|x| \ge a+1).
\end{equation}

We put $B_R=\{x\in \R^2;\, |x|< R\}$ for $R>0$.
We may assume, without loss of generality, that 
${\mathcal O}\subset B_{1}$ by the translation and scaling.
Hence we always assume it in the following.
For $R\ge 1$, we set $ \Do_R=\Do\cap {B_R}$.

%%%%%%%%%%%%%%%%%%%%%%%%%%%%%%%%%%%%%
%%%%%%%%%%%%%%%%%%%%%%%%%%%%%%%%%%%%%
\section{Preliminaries}

First we introduce an elliptic estimate,
whose proof will be given in Appendix A for
the sake of completeness.

\begin{lemma}\label{elliptic}\
Assume that ${\mathcal O}$ is star-shaped.
For $\varphi \in H^m(\Do) \cap H_0^1(\Do)$ with an integer $m\,(\ge 2)$, 
we have
\begin{equation}\label{ap10}
\sum_{|\alpha|=m} \norm{\partial_x^\alpha \varphi}{L^2(\Do)} \le
C(\|\Delta_x \varphi\!:\!{H^{m-2}(\Do)}\| + \|\nabla_x \varphi\!:\!{L^2(\Do)}\|).
\end{equation}
\end{lemma}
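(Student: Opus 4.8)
The plan is to prove the estimate \eqref{ap10} by induction on $m$, using the star-shapedness of $\mathcal{O}$ at the base case $m=2$ and a standard interior/boundary regularity bootstrap for the inductive step. For $m=2$ I would start from the identity $\Delta_x\varphi = \sum_j \partial_j^2\varphi$ and integrate by parts twice over $\Do$: writing $\|\Delta_x\varphi\|_{L^2(\Do)}^2 = \sum_{j,k}\int_{\Do}\partial_j^2\varphi\,\partial_k^2\varphi\,dx$, each integration by parts produces a bulk term $\sum_{j,k}\int_{\Do}(\partial_j\partial_k\varphi)^2\,dx = \sum_{|\alpha|=2}\|\partial_x^\alpha\varphi\|_{L^2(\Do)}^2$ plus boundary integrals over $\pa\Do$. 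Because $\varphi\in H_0^1(\Do)$, the tangential derivatives of $\varphi$ vanish on $\pa\Do$, so only the normal-normal second derivative survives in the boundary term; this is exactly where the geometry enters. The star-shaped hypothesis (say with respect to the origin, with $\mathcal{O}\subset B_1$) gives $x\cdot\nu\le 0$ on $\pa\Do$ (with $\nu$ the outward normal to $\Do$), and inserting the Rellich-type multiplier $x\cdot\nabla_x\varphi$ — or directly estimating the boundary term using $\partial_\nu\varphi$ and the sign of $x\cdot\nu$ — lets one absorb or bound the boundary contribution by the bulk gradient term $\|\nabla_x\varphi\|_{L^2(\Do)}^2$. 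This yields \eqref{ap10} for $m=2$.

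For the inductive step, assume \eqref{ap10} holds for $m-1$ (with $m\ge 3$) and let $\varphi\in H^m(\Do)\cap H_0^1(\Do)$. Away from $\pa\Do$ and from infinity the estimate is classical interior elliptic regularity: any $\partial_x^\alpha\varphi$ with $|\alpha|=m$ is controlled by $\|\Delta_x\varphi\|_{H^{m-2}}$ plus lower-order terms on compact interior pieces. Near the boundary I would differentiate the equation $\Delta_x\varphi = \Delta_x\varphi$ by a tangential derivative $\partial_\tau$: since the Dirichlet condition is preserved under tangential differentiation, $\partial_\tau\varphi\in H^{m-1}(\Do)\cap H_0^1(\Do)$ solves $\Delta_x(\partial_\tau\varphi) = \partial_\tau(\Delta_x\varphi)\in H^{m-3}$, so the inductive hypothesis applied to $\partial_\tau\varphi$ controls all of its $(m-1)$-st derivatives, hence all $m$-th derivatives of $\varphi$ that involve at least one tangential direction. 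The purely normal $m$-th derivative $\partial_\nu^m\varphi$ is then recovered algebraically from the equation: $\partial_\nu^2\varphi = \Delta_x\varphi - (\text{tangential second derivatives})$, and differentiating this relation $m-2$ more times expresses $\partial_\nu^m\varphi$ in terms of $\Delta_x\varphi$ (to order $m-2$) and derivatives already controlled. Patching the interior and boundary estimates with a partition of unity and handling the unbounded end (where $\varphi$ need not have compact support, but the estimate is again the flat-space interior estimate) completes the induction.

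The main obstacle is the base case $m=2$: making the Rellich/integration-by-parts computation on the unbounded star-shaped domain rigorous, in particular justifying the boundary integrals for general $\varphi\in H^2(\Do)\cap H_0^1(\Do)$ (by density from smooth compactly supported functions) and checking that the boundary term coming from the normal second derivative genuinely has a favorable sign from $x\cdot\nu\le 0$ rather than merely being bounded — one must be careful that the Rellich multiplier $2(x\cdot\nabla\varphi)\Delta\varphi$ integrates to $\int_{\pa\Do}(x\cdot\nu)|\partial_\nu\varphi|^2$ (up to bulk terms of the right sign and size) using that $\nabla\varphi = (\partial_\nu\varphi)\nu$ on $\pa\Do$. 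Once the sign is secured the rest is bookkeeping. I expect the inductive step to be routine given Lemma \ref{elliptic} for $m-1$ and standard localization, so I would spend the bulk of the write-up on the $m=2$ Rellich identity and the density argument; this is presumably why the author defers the full proof to Appendix A.
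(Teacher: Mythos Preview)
Your inductive scheme is reasonable, but the base case $m=2$ has a real gap, and the paper's Appendix~A takes a quite different route. When you integrate $\|\Delta\varphi\|_{L^2(\Do)}^2$ by parts twice, the boundary contribution on $\partial\Do$ (after using $\nabla\varphi=(\partial_\nu\varphi)\nu$ there) involves the second fundamental form of $\partial\Do$, not $x\cdot\nu$; star-shapedness of $\mathcal O$ gives no sign control on the curvature --- only convexity would. The Rellich identity in $n=2$ does read $\int_\Do (x\cdot\nabla\varphi)\Delta\varphi=\tfrac12\int_{\partial\Do}(x\cdot\nu)|\partial_\nu\varphi|^2\le 0$, but this neither bounds $\int_{\partial\Do}|\partial_\nu\varphi|^2$ (there is no strict lower bound on $|x\cdot\nu|$) nor makes sense globally on the exterior domain, since $x\cdot\nabla\varphi\notin L^2(\Do)$ for generic $\varphi\in H^2(\Do)$. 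Any fix --- localizing the multiplier, or bounding the curvature term by trace plus interpolation --- either abandons star-shapedness or amounts to a cutoff argument anyway.

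That cutoff argument is exactly what the paper does: write $\varphi=\chi\varphi+(1-\chi)\varphi$ with $\chi\in C_0^\infty(\R^2)$ equal to $1$ near $\mathcal O$ and $\supp\chi\subset B_R$. For $(1-\chi)\varphi\in H^m(\R^2)$ the estimate is the flat-space Fourier bound $\|\partial_x^\alpha w\|_{L^2(\R^2)}\le C\|\Delta w\|_{L^2(\R^2)}$ for $|\alpha|=2$, followed by induction on $m$. For $\chi\varphi\in H^m(\Do_R)\cap H^1_0(\Do_R)$ the paper invokes the bounded-domain estimate $\|v\|_{H^{k+2}(\Do_R)}\le C(\|\Delta v\|_{H^k(\Do_R)}+\|v\|_{L^2(\Do_R)})$ from Gilbarg--Trudinger as a black box; no boundary sign computation is needed. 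Star-shapedness enters only to prove the Poincar\'e-type inequality $\|v\|_{L^2(\Do_R)}\le C\|\nabla v\|_{L^2(\Do)}$ for $v\in H^1_0(\Do)$, obtained by integrating $v$ along rays from the origin (each of which meets $\partial\Do$ exactly once), which converts the unwanted zeroth-order term into $\|\nabla\varphi\|_{L^2(\Do)}$.
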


Next we derive an estimate for the local energy of solutions to \eqref{eq}-\eqref{id}. 
We put ${\mathcal H}^{m}(\Do)=H^{m+1}(\Do)\times H^{m}(\Do)$.

\begin{lemma}\label{LocalEnergy}
Assume that ${\mathcal O}$ is star-shaped.
Let $a$, $b>1$, $\gamma\in (0, 1]$, and $m$ be a non-negative
integer.
Then for $(\vec{v}_0, f) \in X_{a}(T)$, there exists a positive constant
$C=C(\gamma,a,b,m)$ such that for $t\in [0,T)$, we have
\begin{align}\label{LE}
& \sum_{|\alpha| \le m} \jb{t}^\gamma \norm{\pa^\alpha S[(\vec{v}_0, f)](t)}{L^2(\Do_b)}
\\ \nonumber
&\quad \le C \Bigl( \norm{\vec{v}_0}{{\mathcal H}^{m-1}(\Do)}
 {}+\sum_{|\alpha| \le m-1} \sup_{0\le s \le t} \jb{s}^{\gamma}  
  \norm{\pa^\alpha f(s)}{L^2(\Do)}
\Bigr).
\end{align}
\end{lemma}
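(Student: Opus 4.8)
The plan is to combine the standard local energy decay for the wave equation outside a star-shaped obstacle with a cutoff argument that reduces matters to the boundaryless case near the obstacle, plus the classical exterior energy estimate to propagate control of the initial data.

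First I would recall the known local energy decay: for solutions of the homogeneous mixed problem $(\partial_t^2-\Delta)v=0$, $v|_{\partial\Do}=0$, with data $\vec v_0$ supported in $B_a$, one has $\|\vec\partial v(t)\|_{L^2(\Do_b)}\le C\jb{t}^{-1}\log\jb{t}$ (up to the weak 2D rate) times the energy norm of $\vec v_0$; for a star-shaped obstacle this is essentially Morawetz-type decay, and integrating a decay-rate-$\gamma$ version with $\gamma\in(0,1]$ is harmless. Together with the commutation relation \eqref{commute} one gets the same bound for $\pa^\alpha v$ in terms of $\|\vec v_0\|_{{\mathcal H}^{m-1}}$, provided one also uses the elliptic estimate of Lemma~\ref{elliptic} to convert tangential/time derivatives back into full spatial derivatives near the boundary (this is where star-shapedness and $H_0^1$ enter). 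So the homogeneous piece $K[\vec v_0]$ is handled directly.

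For the inhomogeneous piece $L[f]=S[(0,0,f)]$ I would use Duhamel: $L[f](t)=\int_0^t v^{(s)}(t)\,ds$ where $v^{(s)}$ solves the homogeneous mixed problem with data $(0,f(s,\cdot))$ prescribed at time $s$. Applying the homogeneous local-energy bound on the time interval $[s,t]$ gives, for $|\alpha|\le m$,
\begin{align*}
\jb{t}^\gamma\norm{\pa^\alpha L[f](t)}{L^2(\Do_b)}
&\le C\jb{t}^\gamma\int_0^t \jb{t-s}^{-\gamma}
  \sum_{|\beta|\le m-1}\norm{\pa^\beta f(s)}{L^2(\Do)}\,ds,
\end{align*}
and then estimate $\norm{\pa^\beta f(s)}{L^2(\Do)}\le \jb{s}^{-\gamma}\sup_{0\le\sigma\le t}\jb{\sigma}^\gamma\norm{\pa^\beta f(\sigma)}{L^2(\Do)}$, so the $s$-integral becomes $\int_0^t\jb{t-s}^{-\gamma}\jb{s}^{-\gamma}\,ds$. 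Since $\gamma>0$ one has $\jb{t}^\gamma\int_0^t\jb{t-s}^{-\gamma}\jb{s}^{-\gamma}\,ds\le C$ (split at $s=t/2$; on each half one factor is $\gtrsim\jb{t}^{-\gamma}$ and the other integrates against $\jb{\cdot}^{-\gamma}$ to at most $C\jb{t}^{\max\{1-\gamma,0\}}\log$, but after multiplying by the extracted $\jb{t}^{-\gamma}$... ) — I must be careful here: for $\gamma<1$ this integral grows like $\jb{t}^{1-\gamma}$ rather than staying bounded, so the clean bound as stated forces either $\gamma=1$ or an extra cancellation. The resolution is that the decay rate in the underlying homogeneous estimate is actually better than $\jb{t}^{-\gamma}$ for any $\gamma\le 1$ (it is $\jb{t}^{-1}$ up to logarithm in 2D with a star-shaped obstacle), so one should run the Duhamel bound with the genuine rate $\jb{t}^{-1}\log\jb{t}$ and only at the end weaken to $\jb{t}^{-\gamma}$; then $\jb{t}^\gamma\int_0^t\jb{t-s}^{-1}\log\jb{t-s}\jb{s}^{-\gamma}\,ds\le C$ holds for every $\gamma\in(0,1]$ by the split-at-$t/2$ argument.

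The main obstacle is precisely this last convolution bookkeeping together with the derivative count: one needs the homogeneous local energy decay not just for $\vec\partial v$ but for $\pa^\alpha v$, $|\alpha|\le m$, and the loss of one derivative (the appearance of ${\mathcal H}^{m-1}$ and $\sum_{|\alpha|\le m-1}$ rather than $m$) is exactly what the elliptic estimate \eqref{ap10} buys us — we control $m$ spatial derivatives of $v$ in $L^2(\Do_b)$ by $m-1$ derivatives of $\Delta v=\pa_t^2 v-\text{data}$ plus lower order, iterating down in the number of spatial derivatives while trading them for time derivatives that commute cleanly with the boundary condition. I would also need a finite-propagation-speed remark to legitimately localize $f$ (it is already compactly supported in space in $X_a(T)$) and to ensure $v^{(s)}$ has data supported in a fixed ball so the homogeneous estimate applies uniformly in $s$. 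Assembling these, adding the $K$ and $L$ contributions, gives \eqref{LE}.
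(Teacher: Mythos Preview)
Your overall architecture---local energy decay for the homogeneous problem, Duhamel for the inhomogeneous term, and the elliptic estimate of Lemma~\ref{elliptic} to trade spatial for time derivatives---is exactly the paper's approach. The genuine gap is in the decay rate you quote and the convolution step that depends on it.

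You state the 2D local energy decay as $\jb{t}^{-1}\log\jb{t}$, i.e.\ a logarithmic \emph{loss}. In fact the known result for star-shaped obstacles (Morawetz, Vainberg; this is \eqref{obstacle} in the paper) gives a logarithmic \emph{gain}:
\[
\sum_{|\alpha|\le 1}\norm{\pa^\alpha K[\vec\phi_0](t)}{L^2(\Do_b)}
\le C\,\jb{t}^{-1}\bigl(\log(2+t)\bigr)^{-2}\norm{\vec\phi_0}{{\mathcal H}^0(\Do)}.
\]
This sign is not cosmetic. Your claimed bound
$\jb{t}^\gamma\int_0^t\jb{t-s}^{-1}\log\jb{t-s}\,\jb{s}^{-\gamma}\,ds\le C$
is false: on the half $s\in[t/2,t]$ one has $\jb{s}^{-\gamma}\sim\jb{t}^{-\gamma}$ and
$\int_0^{t/2}\jb{\tau}^{-1}\log\jb{\tau}\,d\tau\sim(\log\jb{t})^2$, so the whole expression grows like $(\log\jb{t})^2$ for every $\gamma\in(0,1]$. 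With the correct rate the same split works because
$\int_0^\infty\jb{\tau}^{-1}(\log(2+\tau))^{-2}\,d\tau<\infty$, and on $[0,t/2]$ the factor $(\log\jb{t})^{-2}$ absorbs the at-worst-$\log\jb{t}$ growth of $\int_0^{t/2}\jb{s}^{-\gamma}\,ds$. That extra $(\log)^{-2}$ is precisely what makes the Duhamel integral close and yields \eqref{LE} for all $\gamma\in(0,1]$.

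A secondary point: to get $|\alpha|\le m$ on the left with only $m{-}1$ derivatives on the right, the paper does not commute all $\pa^\alpha$ through the boundary problem. It first differentiates in time (writing $\pa_t^j S[(\vec v_0,f)]$ via \eqref{DP}, which preserves the Dirichlet condition thanks to the compatibility hypothesis), applies \eqref{obstacle} to that, and only then recovers the spatial derivatives through the localized elliptic estimate \eqref{LE2} using $\Delta_x S=\pa_t^2 S-f$. Your sketch gestures at this but conflates it with ``commutation via \eqref{commute}''; the vector fields $Z$ play no role in this lemma.
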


\noindent{\it Proof.}\ 
It is known that there exists a
positive constant $C$ depending on $a$, $b$ such that
\begin{equation}
 \sum_{|\alpha|\le 1} \norm{\pa^\alpha K[\vec{\phi}_0](t)}{L^2(\Do_b)}
\le C \jb{t}^{-1} ({\rm log}(2+t))^{-2} \norm{\vec{\phi}_0}{{\mathcal H}^0(\Do)}
\label{obstacle}
\end{equation}
for any $\vec{\phi}_0=(\phi_0, \phi_1)\in H^1_0(\Do)\times L^2(\Do)$
satisfying $\phi_0(x)=\phi_1(x)\equiv 0$ for $|x|\ge a$
(see for instance Morawetz \cite{Mor75}, Vainberg \cite{Vai75}).

Now let $(\vec{v}_0, f)=(v_0, v_1, f)\in X_{a}(T)$.
Then, by Duhamel's principle, it follows that 
\begin{align}\label{DP} %label{ap17+}
& \partial_t^j S[(\vec{v}_0,f)](t,x)
\\
& \qquad =K[(v_j,v_{j+1})](t,x)+
\int_0^t K[(0,(\partial_t^j f)(s))](t-s,x) ds \nonumber
\end{align}
for any non-negative integer $j$ and any $(t,x) \in [0,T) \times \Do$, 
where $v_j$ are given by (\ref{data+}).
Apparently we have $(\pa_t^j f)(s,\cdot)\in L^2(\Do)$
for $0\le s\le t$.
Thanks to the compatibility condition, we also find $v_j\in H^1_0(\Do)$
for any $j\ge 0$.
Therefore, by (\ref{obstacle}), for ${|\alpha| \le 1}$ and $\gamma \in (0,1]$, we have
\begin{align}\label{obstacle1}
& \norm{\partial^\alpha K[\vec{v}_j](t)}{L^2({\Do_b})}
 \le  \jb{t}^{-1} ({\rm log}(2+t))^{-2}  \norm{\vec{v}_j}{{\mathcal H}^{0}(\Do)}
\\
& \quad \le C  \jb{t}^{-\gamma} \bigl(\norm{\vec{v}_0}{{\mathcal H}^{j}(\Do)}
{}+\sum_{|\alpha| \le j-1} \norm{(\partial^\alpha f)(0)}{L^2(\Do)}\bigr)
\nonumber
\end{align}
with $\vec{v}_j=(v_j, v_{j+1})$, and
\begin{align}\label{obstacle2}
& \int_0^t \|\partial^\alpha K[(0,(\partial_t^j f)(s))](t-s) :L^2({\Do_b})\| ds
\\ \nonumber
&\qquad \le C \int_0^t  \jb{t-s}^{-1} ({\rm log}(2+t-s))^{-2} 
\,\norm{(\partial_t^j f)(s)}{L^2(\Do)} ds
\\ \nonumber
&\qquad \le C \jb{t}^{-\gamma} \sup_{0\le s \le t} \jb{s}^\gamma 
  \norm{(\partial_t^j f)(s)}{L^2(\Do)}.
\end{align}
Hence for ${|\alpha| \le 1}$ and $j \ge 0$, 
we get from \eqref{DP}
\begin{align}\label{LE1}
& 
 \| \partial^\alpha \partial^j_{t} S[(\vec{v}_0, f)](t)\!:\!{L^2(\Do_b)}\|
\\ \nonumber
& \quad \le C \jb{t}^{-\gamma} \bigl( \norm{\vec{v}_0}{{\mathcal H}^{j}(\Do)}
 +\sum_{|\alpha| \le j} \sup_{0\le s \le t} \jb{s}^\gamma 
\norm{\partial^\alpha f(s)}{L^2(\Do)}).
\end{align}

In order to evaluate $\pa_t^j \partial_x^\alpha v$ 
with ${|\alpha| \ge 2}$ and ${j+|\alpha| \le m}$, 
we make use of the following variant of (\ref{elliptic})\,:
\begin{equation}\label{LE2}
 \|\varphi\!:\!{H^m(\Do_b)}\| \le
C(\norm{\Delta_x \varphi}{H^{m-2}(\Do_{b^\prime})}
{}+\norm{\varphi}{H^1(\Do_{b^\prime})},
\end{equation}
where $1<b<b^\prime$ and $\varphi \in H^m(\Do) \cap %H_{\nabla}(\Do)$. 
H_0^1(\Do)$ with $m \ge 2$, together with the fact that
$\Delta_x S[(\vec{v}_0, f)]=\pa^2_t S[(\vec{v}_0, f)]+f$.
In this way, we obtain \eqref{LE}. 
This completes the proof.
\hfill$\qed$

\vspace{2mm}

Next we prepare three lemmas concerning the Cauchy problem. 
The first one is the decay estimate for
solutions of the homogeneous wave equation, due to \cite[Proposition 2.1]{kubota}
(observe that the general case can be reduced to the case $m=0$,
 thanks to (\ref{commute})).
We recall that ${\Phi}_\nu(t,x)$ and ${\Psi}_\kappa(t)$ were defined by
\eqref{defPhi} and \eqref{defPsi}, respectively.

\begin{lemma}\label{lem:freeH}\
For $\vec{v}_0\in (C_0^\infty({\mathbf
R}^2))^2$, $\nu>0$ and a non-negative integer $m$, 
there is a positive constant $C=C(\nu,m)$ such that
\begin{align}\label{decay}
\langle t+|x| \rangle^{1/2}\,{\Phi}_{\nu-1}(t,x)
   |K_{0}[\vec{v}_0](t,x)|_m
  \le C {\mathcal B}_{\nu+(1/2), m}[\vec{v}_0]
\end{align}
%and
%\begin{eqnarray}\label{decayD}
%&& \hspace{4mm} \langle t+|x| \rangle^{1/2}\langle t-|x| \rangle\,{\Phi}_{\nu-1}(t,x)
%  |\partial_{t,x} K_{0}[\vec{v_0}](t,x)|
%\\ \nonumber
%&&  \le C \bigg( \sum_{|\alpha| \le 2}
%     \|\,\langle \cdot \rangle^{\nu+(3/2)} \partial^\alpha v_0\!:\!L^\infty({\mathbf R}^2)\|
%     +\sum_{|\alpha| \le 1}
%     \|\,\langle \cdot \rangle^{\nu+(3/2)} \partial^\alpha v_1\!:\!{L^\infty({\mathbf R}^2)}\| %%\bigg)
%\end{eqnarray}
for $(t,x) \in [0,T) \times {\mathbf R}^2$.
\end{lemma}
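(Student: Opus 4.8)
The plan is to reduce the estimate \eqref{decay} to the scalar case $m=0$ and $N=1$, and then to appeal directly to the pointwise decay estimate of Kubota \cite[Proposition 2.1]{kubota} for the free wave equation in two space dimensions. The reduction in $m$ is immediate: since the vector fields $Z_j$ commute with $\dal=\pa_t^2-\Delta$ by \eqref{commute}, and since $Z_j$ applied to Cauchy data of the form $(\vec v_0,0)$ produces again Cauchy data of the same type, we have $Z^\alpha K_0[\vec v_0]=K_0[Z^\alpha \vec v_0]$ for every multi-index $\alpha$. Hence
\begin{align*}
|K_0[\vec v_0](t,x)|_m=\sum_{|\alpha|\le m}|K_0[Z^\alpha\vec v_0](t,x)|,
\end{align*}
and it suffices to bound each term by $C\,\jb{t+|x|}^{-1/2}\Phi_{\nu-1}(t,x)^{-1}\mathcal B_{\nu+(1/2),0}[Z^\alpha\vec v_0]$, after which we sum over $|\alpha|\le m$ and absorb the finitely many $Z^\alpha\vec v_0$-norms into $\mathcal B_{\nu+(1/2),m}[\vec v_0]$, using that differentiation by $Z_j$ raises the order index in \eqref{HomWei0} by at most one while the spatial weight $\jbx^{\nu+1/2}$ is unchanged (the $O_{12}$ field does not worsen the $\jbx$-weight because it is tangential). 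Since the estimate is to hold for the $\R^N$-valued case, we simply apply the scalar estimate componentwise.

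For the scalar case, the core is the explicit representation of the solution of the 2D wave equation via the Poisson–Kirchhoff formula and the resulting weighted $L^\infty$–$L^\infty$ estimate. The classical statement is that for data supported in a ball, $\jb{t+|x|}^{1/2}\jb{t-|x|}^{\nu-1}$ times $|K_0[\vec v_0](t,x)|$ — with the obvious modifications at $\nu=1$ given by $\Phi_0$ involving the logarithmic factor $\log^{-1}(2+\jb{t+|x|}/\jb{t-|x|})$ — is controlled by the weighted sup-norm $\mathcal B_{\nu+(1/2),0}[\vec v_0]$ of the data. This is precisely \cite[Proposition 2.1]{kubota}, so for a self-contained writeup I would state that proposition in the present notation and check that the weight functions $\Phi_{\nu-1}$ and the data functional $\mathcal B_{\nu+(1/2),0}$ match up with Kubota's normalization (a matter of translating between $\jb{t-r}^{\nu-1}$ and the three-branch definition \eqref{defPhi}; note the index shift, $\Phi_{\nu-1}$ corresponds to exponent $\nu-1$ on $\jb{t-|x|}$ when $\nu-1>0$, to the log term when $\nu-1=0$, i.e. $\nu=1$, and to the ingoing weight $\jb{t+|x|}^{\nu-1}$ when $\nu-1<0$, i.e. $0<\nu<1$).

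The only genuine point requiring care — and the step I expect to be the main obstacle — is the behavior near the light cone and at $\nu=1$, where the logarithmic loss $\Phi_0$ appears. In two dimensions the fundamental solution has a tail $\sim(t^2-r^2)^{-1/2}$ outside the forward cone, and integrating the compactly supported data against it produces exactly the borderline logarithm when the decay exponent of the data is tuned to $\nu=1$; for $\nu\ne 1$ one instead gets a clean power $\jb{t-|x|}^{-\min\{\nu,1\}}$ (reflected in the $[\frac12-\nu]_+$ bookkeeping in \eqref{defPhi}). Rather than re-derive this, I would cite Kubota's proposition as a black box, remarking only that the reduction above is valid because (i) the $Z_j$ commute with the d'Alembertian, (ii) $O_{12}$ preserves compact support and radial weights, and (iii) the norm \eqref{HomWei0} already includes the gradient of $w_0$ so that no integration by parts is needed to match the first-order term in the Kirchhoff formula. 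This completes the proof.
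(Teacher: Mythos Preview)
Your proposal is correct and follows exactly the paper's approach: the paper simply cites \cite[Proposition 2.1]{kubota} for the case $m=0$ and remarks that the general case reduces to it via the commutation relations \eqref{commute}. The only slip is notational---when $Z^\alpha$ involves $\partial_t$, the identity $Z^\alpha K_0[\vec v_0]=K_0[Z^\alpha \vec v_0]$ should be read as $Z^\alpha K_0[\vec v_0]=K_0[(\phi_\alpha,\psi_\alpha)]$ with data obtained from the free equation via $\partial_t^2\mapsto\Delta$, but as you correctly indicate these data are controlled by $\mathcal B_{\nu+(1/2),m}[\vec v_0]$.
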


The second one is the decay estimates for the inhomogeneous wave equation.

\begin{lemma}\label{free}\
If $\nu>0$, $\kappa \ge 1$ and $m$ is a non-negative integer, then
there exists a positive constant $C=C(\nu,\kappa,m)$ such that
\memo{
\begin{eqnarray}\label{po}
&& \langle t+|x| \rangle^{1/2}\,{\Phi}_{\nu-1}(t,x) |L_{0}[g](t,x)|_m
 \le C \Psi_\kappa(t+|x|)\,\|g(t)\!:\!{M_m(z_{\nu,\kappa\,;\,c})}\|
\\ \nonumber
&& \hspace{65mm} 
  +C \sum_{|\alpha| \le m-1}
      \|\,\langle \cdot \rangle^{\nu+(1/2)} (\Gamma^\alpha g)(0)\!:\!L^\infty({\mathbf R}^2)\|
\end{eqnarray}
and
}
\begin{align}\label{ba1}
 & \langle t+|x| \rangle^{1/2}\,{\Phi}_{\nu-1}(t,x) |L_{0}[g](t,x)|_m
\\ \notag  & \quad 
\le C \Psi_\kappa(t+|x|)\,\|g(t)\!:\!{M_m(W_{\nu,\kappa})}\|
 % +C \sum_{|\alpha| \le m-1}
 %    \|\,\langle \cdot \rangle^{\nu+(1/2)} (\Gamma^\alpha g)(0)\!:\!L^\infty({\mathbf R}^2)\|
\end{align}
for $(t,x) \in [0,T) \times {\mathbf R}^2$.
%Here the second terms on the right--hand side do not appear when $m=0$.
\end{lemma}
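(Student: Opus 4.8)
The plan is to reduce the estimate \eqref{ba1} for $L_0[g]$ to the representation formula for the solution of the two-dimensional inhomogeneous wave equation, together with the pointwise weighted bounds on the fundamental solution that underlie the Kubota--Di Flaviano--Hoshiga--Kubo decay estimates. Since \eqref{commute} lets me commute $Z^\alpha$ through $\dal$, it suffices to treat $m=0$: indeed $Z^\alpha L_0[g] = L_0[Z^\alpha g]$ for the vector fields $Z_0,\dots,Z_3$ (using that these are the ones that commute with $\dal$), so the left-hand side with $|\alpha|\le m$ is controlled by the $m=0$ estimate applied to $Z^\alpha g$, and $\sum_{|\alpha|\le m}\|Z^\alpha g(t):M_0(W_{\nu,\kappa})\|=\|g(t):M_m(W_{\nu,\kappa})\|$ up to a constant. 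So the whole problem collapses to: bound $\jb{t+|x|}^{1/2}\Phi_{\nu-1}(t,x)\,|L_0[g](t,x)|$ by $C\Psi_\kappa(t+|x|)\,\|g(t):M_0(W_{\nu,\kappa})\|$.

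Next I would write out Duhamel's formula in 2D, $L_0[g](t,x)=\int_0^t (W(t-s)g(s))(x)\,ds$, where $W(\tau)$ is convolution against $(2\pi)^{-1}(\tau^2-|y|^2)_+^{-1/2}\mathbf 1_{|y|<\tau}$, and then insert the hypothesis $|g(s,y)|\le \jb{y}^{-1/2} W_{\nu,\kappa}(s,y)^{-1}\|g(t):M_0(W_{\nu,\kappa})\|$ for $s\le t$. What remains is the purely computational heart of the matter: estimating the double integral
\[
J(t,x)=\jb{t+|x|}^{1/2}\Phi_{\nu-1}(t,x)\int_0^t\!\!\int_{|y|<t-s}\frac{\jb{y}^{-1/2}\,\jb{s+|y|}^{-\nu}\,\bigl(\min\{\jb{|y|},\jb{s-|y|}\}\bigr)^{-\kappa}}{\sqrt{(t-s)^2-|y|^2}}\,dy\,ds
\]
and showing $J(t,x)\le C\Psi_\kappa(t+|x|)$. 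This is exactly the type of weighted $L^\infty$--$L^\infty$ estimate established for the Cauchy problem in \cite{kubota}, \cite{DiF03}, \cite{hk2}; I would cite those and then indicate the reduction. Concretely, one passes to polar coordinates in $y$, performs the angular integration against $((t-s)^2-r^2)^{-1/2}$ (which produces, after the classical substitution, a one-dimensional kernel supported on $|t-r|\lesssim |x|+\cdots$), and is left with a two-variable integral in $(s,r)$ that one splits according to the regions $r\le s/2$, $|r-s|\le$ small, $r\ge 2s$, etc. In each region the weight $W_{\nu,\kappa}$ is comparable to a product of powers of $\jb{s+r}$ and either $\jb{r}$ or $\jb{s-r}$, and the integral is estimated by elementary one-dimensional bounds; the factor $\Psi_\kappa(t+|x|)=\log(2+t+|x|)$ appears precisely when $\kappa=1$, from a logarithmically divergent $\int dr/\jb{s-r}$ near the light cone.

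The main obstacle I anticipate is the borderline behaviour at $\kappa=1$ and the interplay between the two competing weights in $W_{\nu,\kappa}$: near the origin $|y|\lesssim 1$ one has $\min\{\jb{|y|},\jb{s-|y|}\}\sim 1$ so the $\kappa$-weight gives nothing and one relies on $\jb{y}^{-1/2}\jb{s+|y|}^{-\nu}$ together with the $\nu>0$ decay, while near the light cone $|y|\sim s$ the factor $\jb{s-|y|}^{-\kappa}$ is the only thing preventing a genuine divergence, so the two regimes must be glued carefully. A secondary subtlety is verifying that the output weight $\jb{t+|x|}^{1/2}\Phi_{\nu-1}(t,x)$ — which changes form across $\nu-1<0$, $\nu-1=0$, $\nu-1>0$ by the definition \eqref{defPhi} — is indeed dominated in every one of those cases; this is where the structure of $\Phi$ (in particular the $\log^{-1}(2+\jb{t+|x|}/\jb{t-|x|})$ branch at $\nu=1$) is designed to match exactly the loss coming from the Duhamel integral. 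Since all of these computations are carried out in the cited references for the Cauchy problem in $\R^2$, the proof here is essentially a citation plus the observation that the inequality is stated in a form (with $W_{\nu,\kappa}$ rather than $z_{\nu,\kappa;c}$) that is weaker, hence follows, from what is already known; I would present it as such rather than redo the region-by-region analysis.
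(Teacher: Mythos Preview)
Your reduction to $m=0$ contains a genuine gap. The commutation relation \eqref{commute} gives $\dal(Z^\alpha L_0[g])=Z^\alpha g$, but it does \emph{not} give $Z^\alpha L_0[g]=L_0[Z^\alpha g]$: the right-hand side has zero Cauchy data by definition of $L_0$, while the left-hand side need not. Already for $Z_0=\pa_t$ one has $\pa_t L_0[g](0,\cdot)=0$ but $\pa_t^2 L_0[g](0,\cdot)=g(0,\cdot)$, so $\pa_t L_0[g]=L_0[\pa_t g]+K_0[(0,g(0,\cdot))]$. In general the correct identity is
\[
Z^\alpha L_0[g]=L_0[Z^\alpha g]+K_0[(\phi_\alpha,\psi_\alpha)],
\]
where $\phi_\alpha,\psi_\alpha$ are built from $(Z^\beta g)(0,\cdot)$ with $|\beta|\le|\alpha|-1$. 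The paper handles this extra homogeneous piece via Lemma~\ref{lem:freeH}, which bounds it by $C\,{\mathcal B}_{\nu+1/2,m}[\phi_\alpha,\psi_\alpha]\le C\|g(0)\!:\!M_{m-1}(W_{\nu,\kappa})\|$, hence by the right-hand side of \eqref{ba1}. Without this step your argument simply loses a term.

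For the $m=0$ core, your plan to treat the Duhamel integral with weight $W_{\nu,\kappa}^{-1}$ by a region-by-region analysis is workable but heavier than what the paper actually does. The paper exploits the pointwise inequality
\[
W_{\nu,\kappa}(s,y)^{-1}\le z_{\nu,\kappa;0}(s,y)^{-1}+z_{\nu,\kappa;1}(s,y)^{-1},
\]
so that $|L_0[g]|\le \|g(t)\!:\!M_0(W_{\nu,\kappa})\|\bigl(|L_0[z_{\nu,\kappa;0}^{-1}]|+|L_0[z_{\nu,\kappa;1}^{-1}]|\bigr)$. The $c=1$ piece is exactly \cite[Proposition~3.1]{DiF03}, and only the $c=0$ piece (weight $\jb{s+|y|}^\nu\jb{|y|}^\kappa$) requires a new computation, which the paper carries out by following Di Flaviano's scheme and proving the single kernel bound \eqref{v1}. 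This decomposition is what lets one quote the literature cleanly rather than redoing the full analysis you sketch.
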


\noindent{\it Proof.}\ 
Let $|\alpha| \le m$.
It follows from (\ref{commute}) that
\begin{eqnarray}\label{comI}
Z^\alpha L_0[g]=L_0[Z^\alpha g]+K_0[(\phi_\alpha,\psi_\alpha)],
\end{eqnarray}
where we put
$$
\phi_\alpha(x)=(Z^\alpha L_0[g])(0,x), \quad
\psi_\alpha(x)=(\partial_t Z^\alpha L_0[g])(0,x).
$$
From the equation (\ref{eq0}) we get
$$
\phi_\alpha(x)= \sum_{|\beta| \le |\alpha|-2} C_\beta
(Z^\beta g)(0,x), 
\quad \psi_\alpha(x)= \sum_{|\beta| \le
|\alpha|-1} C_\beta^\prime (Z^\beta g)(0,x)
$$
with suitable constants $C_\beta$, $C_\beta^\prime$. 
Therefore, by (\ref{decay}), we get
\begin{align*}
&   \langle t+|x| \rangle^{1/2}\,{\Phi}_{\nu-1}(t,x) |K_0[(\phi_\alpha,\psi_\alpha)]| 
\\
& \quad  \le  C \sum_{|\alpha| \le m-1}
   \|\,\langle \cdot \rangle^{\nu+(1/2)} (Z^\alpha g)(0)\!:\!L^\infty({\mathbf R}^2)\|
\\
& \quad \le C \|g(0)\!:\!{M_{m-1}(W_{\nu,\kappa})}\|.
\end{align*}
Hence, in view of \eqref{comI}, it is enough to show \eqref{ba1} for $m=0$.

If we set $z_{\nu,\kappa;c}(s,y)=\jb{s+|y|}^{\nu} \jb{|y|-cs}^{\kappa}$, then we have
$$
(W_{\nu,\kappa}(s,y))^{-1} \le (z_{\nu,\kappa,0}(s,y))^{-1}+(z_{\nu,\kappa;1}(s,y))^{-1},
$$
so that
\begin{align}\label{ba1-}
& |L_{0}[g](t,x)| 
\le \|g(t)\!:\!{M_0(W_{\nu,\kappa})}\|\, |L_{0}[W_{\nu,\kappa}^{-1}](t,x)| 
\\ \notag
& \quad \le 
\|g(t)\!:\!{M_0(W_{\nu,\kappa})}\|
 \,( |L_{0}[z_{\nu,\kappa;0}^{-1}](t,x)| + |L_{0}[z_{\nu,\kappa;1}^{-1}](t,x)| )
\end{align}
for $(t,x) \in [0,T) \times {\mathbf R}^2$.
Since it was shown by \cite[Proposition 3.1]{DiF03} that
\begin{align}\label{DiF}
\langle t+|x| \rangle^{1/2}\,{\Phi}_{\nu-1}(t,x) |L_{0}[z_{\nu,\kappa;1}^{-1}](t,x)| 
 \le C \Psi_\kappa(t)
\end{align}
holds, we have only to show 
\begin{align}\label{DiF0}
\langle t+|x| \rangle^{1/2}\,{\Phi}_{\nu-1}(t,x) |L_{0}[z_{\nu,\kappa;0}^{-1}](t,x)| 
 \le C \Psi_\kappa(t).
\end{align}
%
%Recalling (\ref{defW}) and using the fact that
%$$
%L_0[w+w^*] \le L_0[w]+L_0[w^*]
%$$
%for any non-negative functions $w$ and $w^*$, we see from
%(\ref{po}) with $c=1$ and $c=0$ that the needed estimate follows,
%hence (\ref{ba1}) is valid.

Following the proof of \cite[Proposition 3.1]{DiF03}, we obtain
\begin{align}\nonumber
  |L_{0}[z_{\nu,\kappa;0}^{-1}](t,x)|
 & \le C \left\{
      \int_{|t-r|}^{t+r} \langle \alpha \rangle^{-\nu+(1/2)}
      (\alpha+r-t)^{-1/2}\,V(\alpha) d\alpha
      \right.
\\ \nonumber
& \qquad  \left. 
      +\phi(r,t) \int_{0}^{t-r} \langle \alpha \rangle^{-\nu+(1/2)}
      (t-r-\alpha)^{-1/{2}}\,V(\alpha) d\alpha
      \right\},
\end{align}
where we put $r=|x|$,
\begin{eqnarray}\nonumber
&&     V(\alpha)=\int_{-\alpha}^{\alpha} \langle \frac{\alpha+\beta}2 \rangle^{-\kappa+(1/2)}
                    (\beta+r+t)^{-1/2}d\beta,
\\ \nonumber
&&       \phi(r,t)=
      \left\{
          \begin{array}{ll}
	  0 & \ \text{if} \ \ 0\leq t\leq r,\\
	  1 & \ \text{if} \ \ t> r
	  \end{array}
      \right. 
\end{eqnarray}
(Notice that $\alpha+r-t>0$ if $\alpha>|t-r|$).
Therefore, once we find
\begin{eqnarray}\label{v1}
V(\alpha) \le C \langle \alpha \rangle^{1/2}\,\langle \alpha 
\rangle^{[1-\kappa]_+}\,\langle t+r \rangle^{-1/2}
\end{eqnarray}
for $0<\alpha<t+r$, then we get \eqref{DiF0} by Lemmas 3.3 and 3.4
in \cite{DiF03}.

We are going to show \eqref{v1}.
When $0<t+r<1$ or $(t+r)/2 <\alpha <t+r$,
it suffices to show that $V(\alpha) \le C \langle \alpha \rangle^{[1-\kappa]_+}$.
Splitting the integral at $\beta=-\alpha+1$, we get
\begin{eqnarray}\nonumber
 V(\alpha)\le \int_{-\alpha}^{-\alpha+1} (\beta+\alpha)^{-1/2}d\beta
          +C\int_{-\alpha+1}^{\alpha} \langle \frac{\alpha+\beta}2 \rangle^{-\kappa}
             d\beta,
\end{eqnarray}
since $\beta+r+t>\beta+\alpha>(\beta+\alpha+1)/2$ if $\alpha<t+r$ and $\beta>-\alpha+1$. 
This estimate yields (\ref{v1}).
On the other hand, when $t+r>1$ and $0<\alpha<(t+r)/2$,
we have $\beta+r+t>(t+r)/2$ if $\beta>-\alpha$, hence
\begin{eqnarray}\nonumber
 V(\alpha)\le C(t+r)^{-1/2}
         \int_{-\alpha}^{\alpha} \langle \frac{\alpha+\beta}2 \rangle^{-\kappa+(1/2)}
             d\beta.
\end{eqnarray}
The last integral is bounded by $C\langle \alpha \rangle^{[-\kappa+(3/2)]_+}
\le C\langle \alpha \rangle^{1/2}$ when $\kappa\ge 1$.
Thus we obtain (\ref{v1}). 
This completes the proof.
\hfill$\qed$

\vspace{2mm}

The third one is the decay estimates for derivatives of solutions of the
inhomogeneous wave equation.

\begin{lemma}\label{freeD}\
If $0<\nu<3/2$, $\mu\ge 0$, $\kappa \ge 1$, $0<\eta<1$,
and $m$ is a non-negative integer, then
there exists a positive constant $C=C(\nu,\kappa,\mu,m)$ such that we have
\begin{align} \label{der1m}
& (w_\nu(t,x))^{-1} |\pa_{t,x} L_0[g](t,x)|_m
\\ \notag
& \quad \le C \Psi_{1+\mu}(t+|x|) 
  \Psi_{\kappa}(t+|x|)\,\|g(t)\!:\!{M_{m+1}(z_{\nu+\mu,\kappa;0})}\|,
\end{align}
\begin{align} \label{ba2}
& (w_{1-\eta}(t,x))^{-1}  |\partial_{t,x} L_{0}[g](t,x)|_m
\\ \notag
& \quad \le C \log(2+t+|x|) \,\|g(t)\!:\!{M_{m+1}(W_{1,1})}\|
\end{align}
for $(t,x) \in [0,T) \times {\mathbf R}^2$.
\end{lemma}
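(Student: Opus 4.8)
\emph{Proof proposal.}
The plan is to follow the scheme of the proof of Lemma~\ref{free}: strip off the vector fields $Z^\alpha$, reduce to the case $m=0$, and then deduce the pointwise bound from the sharp estimates for the two-dimensional retarded potential due to Kubota~\cite{kubota}, Di Flaviano~\cite{DiF03} and Hoshiga--Kubo~\cite{hk2}, after passing to characteristic coordinates and invoking their one-dimensional calculus lemmas. As a preliminary remark I would note that \eqref{ba2} is the special case $\nu=1-\eta$, $\mu=\eta$, $\kappa=1$ of \eqref{der1m}: then $\Psi_{1+\mu}(t+|x|)=1$, $\Psi_\kappa(t+|x|)=\log(2+t+|x|)$, and $W_{1,1}\le z_{1,1;0}$ gives $\|g(t)\!:\!M_{m+1}(z_{1,1;0})\|\le\|g(t)\!:\!M_{m+1}(W_{1,1})\|$, so it suffices to prove \eqref{der1m}.

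For the reduction, I would use \eqref{commute} to write, for any $Z$-monomial of order $\le m$ and any first derivative $\pa_\mu$,
\[
Z^\alpha\pa_\mu L_0[g]=L_0[Z^\alpha\pa_\mu g]+K_0[(\phi_\alpha,\psi_\alpha)],
\]
where $\phi_\alpha$, $\psi_\alpha$ are finite linear combinations of the traces $(Z^\beta g)(0,\cdot)$ with $|\beta|\le m$. Since $z_{\nu+\mu,\kappa;0}(0,y)=\jby^{\nu+\mu+\kappa}$ and $\kappa\ge1$, these data obey ${\mathcal B}_{\widetilde\nu+(1/2),0}[(\phi_\alpha,\psi_\alpha)]\le C\,\|g(0)\!:\!M_m(z_{\nu+\mu,\kappa;0})\|$ for every $\widetilde\nu\le\nu+\mu+\kappa$; choosing such a $\widetilde\nu$ close to $\nu+\mu+\kappa$, Lemma~\ref{lem:freeH} controls $|K_0[(\phi_\alpha,\psi_\alpha)]|$, and the elementary inequalities $(w_\nu(t,x))^{-1}\le\jbx^{1/2}\jb{t-|x|}^{\nu}$ and $(w_\nu(t,x))^{-1}\le\jb{t+|x|}^{1/2}\jb{t-|x|}^{1/2}$ then give
\[
(w_\nu(t,x))^{-1}\,\jb{t+|x|}^{-1/2}\,\Phi_{\widetilde\nu-1}(t,x)^{-1}\le C\,\Psi_{1+\mu}(t+|x|)\,\Psi_\kappa(t+|x|),
\]
the occasional logarithmic discrepancy at the borderline exponents being absorbed into the factors $\Psi$. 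Thus the $K_0$-contributions are under control, and it remains to prove \eqref{der1m} for $m=0$.

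At $m=0$ I would differentiate the retarded potential directly, $\pa_{x_j}L_0[g]=L_0[\pa_{x_j}g]$ and $\pa_tL_0[g]=L_0[\pa_tg]+K_0[(0,g(0))]$, the last term being treated as above. Since $L_0$ has a nonnegative kernel, $|L_0[\pa_{t,x}g](t,x)|\le L_0\bigl[\,|g|_1\,\bigr](t,x)$; combining this with $|g(s,y)|_1\le\jby^{-1/2}\bigl(z_{\nu+\mu,\kappa;0}(s,y)\bigr)^{-1}\|g(t)\!:\!M_1(z_{\nu+\mu,\kappa;0})\|$ and the identity $\jby^{-1/2}z_{\nu+\mu,\kappa;0}^{-1}=z_{\nu+\mu,\kappa+(1/2);0}^{-1}$ reduces everything to the pointwise estimate
\[
L_0\bigl[z_{\nu+\mu,\kappa+(1/2);0}^{-1}\bigr](t,x)\le C\,w_\nu(t,x)\,\Psi_{1+\mu}(t+|x|)\,\Psi_\kappa(t+|x|).
\]
This is sharper than what \eqref{ba1} would give (namely the weaker weight $\jb{t+|x|}^{1/2}\Phi_{\nu-1}$ in place of $(w_\nu)^{-1}$): the point is that the extra spatial decay $\jby^{-1/2}$ of the source forces the faster interior decay $\jbx^{-1/2}\jb{t-|x|}^{-\nu}$, which is precisely the first term of $w_\nu$. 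I would establish it from the explicit representation of $L_0$ as in \cite[Proposition~3.1]{DiF03}, refined as in \cite{hk2}: after passing to characteristic coordinates one is left with one-dimensional integrals of the kind treated in Lemmas~3.3 and~3.4 of \cite{DiF03}; the estimate \eqref{v1}, already proved in the course of Lemma~\ref{free}, is used again, and the logarithmic losses combine into $\Psi_{1+\mu}(t+|x|)\Psi_\kappa(t+|x|)$.

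The hardest part is exactly this sharp estimate for $L_0$ of a radial source: unlike \eqref{ba1} it must capture the genuine decay of the two-dimensional retarded potential \emph{inside} the light cone, which is governed by the behaviour of the source near the tip of the backward characteristic cone and is delicate because of the weak two-dimensional dispersion; near the cone $|x|\approx t$ the bound must degenerate exactly to $\jbx^{-1/2}\approx\jb{t+|x|}^{-1/2}\jb{t-|x|}^{-1/2}$, the second term of $w_\nu$. Once this pointwise bound is in hand, combining it with the positivity of $L_0$, the differentiation of the potential, and the reduction above yields \eqref{der1m}, hence the lemma.
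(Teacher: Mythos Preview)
There are two genuine gaps in your argument.

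First, your reduction of \eqref{ba2} to \eqref{der1m} is backwards. From $W_{1,1}\le z_{1,1;0}$ you get $\|g\!:\!M_{m+1}(W_{1,1})\|\le\|g\!:\!M_{m+1}(z_{1,1;0})\|$, not the reverse; so \eqref{der1m} with $\nu=1-\eta$, $\mu=\eta$, $\kappa=1$ gives a bound in terms of the \emph{larger} norm $\|g\!:\!M_{m+1}(z_{1,1;0})\|$, which does not imply \eqref{ba2}. In the paper \eqref{ba2} is obtained by combining \eqref{der1} with a companion estimate \eqref{der1Bis} carrying the weight $z_{1,1;1}$ (decay along the light cone), after noting $W_{1,1}^{-1}\le z_{1,1;0}^{-1}+z_{1,1;1}^{-1}$; the $z_{1,1;1}$ piece cannot be recovered from \eqref{der1m} alone.

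Second, and more seriously, the positivity step $|L_0[\partial_{t,x} g]|\le L_0[|g|_1]$ throws away the structure that makes the lemma true, and the pointwise estimate you then claim,
\[
L_0\bigl[z_{\nu+\mu,\kappa+\frac12;0}^{-1}\bigr](t,x)\le C\,w_\nu(t,x)\,\Psi_{1+\mu}(t+|x|)\,\Psi_\kappa(t+|x|),
\]
is false. Take $\mu=0$, $\kappa=1$, $x=0$: a direct computation of the retarded potential shows that the contribution from the region $\lambda$ close to $t-s$ gives
\[
L_0\bigl[z_{\nu,\frac32;0}^{-1}\bigr](t,0)\;\gtrsim\;\int_0^t \frac{ds}{t^{\nu}\sqrt{t-s}}\;\sim\;t^{\frac12-\nu},
\]
whereas $w_\nu(t,0)\sim t^{-\nu}$; the discrepancy is a full power $t^{1/2}$, not a logarithm. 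What actually produces the weight $w_\nu$ in \eqref{der1m} is the derivative on $g$: the paper's Proposition~B.2 (estimate \eqref{der1}) integrates by parts in the representation of $L_0[\partial_\ell g]$, transferring $\partial_\ell$ onto the kernel and picking up the terms $I_1,\dots,I_5$ and the boundary contributions in $Q_2$ that are then estimated via Lemma~B.1. Only after this integration by parts does the source effectively decay fast enough for the sharp interior bound $\jbx^{-1/2}\jb{t-|x|}^{-\nu}$ to emerge. Your sketch should therefore replace the positivity step by this integration-by-parts argument (and invoke \eqref{der1}, \eqref{der1Bis} from Appendix~B), after which the reduction via \eqref{comI}--\eqref{TimeDer} and the treatment of the $K_0$ pieces proceeds as in the paper.
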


\noindent{\it Proof.}\ 
Let $|\alpha| \le m$.      
Then we have
\begin{eqnarray}\label{comII}
\partial_{t,x} Z^\alpha L_0[g]=\partial_{t,x} L_0[Z^\alpha g]
      +\partial_{t,x} K_0[(\phi_\alpha,\psi_\alpha)]
\end{eqnarray}
from (\ref{comI}), and 
\begin{eqnarray}\nonumber
&& \partial_\ell L_{0}[Z^\alpha g]
 =L_{0}[\partial_\ell Z^\alpha g] \quad (\ell=1,2),
\\ \label{TimeDer}
&& \partial_t L_{0}[Z^\alpha g]
 =L_{0}[\partial_t Z^\alpha g]
  +K_0[(0,(Z^\alpha g)(0))].
\end{eqnarray}

First we prove \eqref{der1m}. By \eqref{der1} it suffices to show 
\begin{align} \label{free1}
& (w_\nu(t,x))^{-1}  |\partial_{t,x} K_{0}[(\phi_\alpha,\psi_\alpha)](t,x)|
\\ \notag
& \quad \le C \Psi_{1+\mu}(t+|x|) 
 \,\|g(0)\!:\!{M_{m+1}(z_{\nu+\mu,1;0})}\|,
\\ \label{free2}
& (w_\nu(t,x))^{-1}    |K_0[(0,(Z^\alpha g)(0))](t,x)|
\\ \notag
& \quad \le C \Psi_{1+\mu}(t+|x|) 
  \,\|g(0)\!:\!{M_{m+1}(z_{\nu+\mu,1;0})}\|.
\end{align}
We shall show only \eqref{free1}, because the proof of the other is similar.

When $0<\nu<1/2$, by (\ref{decay}) with $\nu$
replaced by $1+\nu$, we get
\begin{align}\nonumber
& \langle t+|x| \rangle^{1/2}\langle t-|x| \rangle^{\nu}
   |\partial_{t,x} K_{0}[(\phi_\alpha,\psi_\alpha)](t,x)|
\\ \notag
& \quad \le C \sum_{|\alpha| \le m}
      \|\,\langle \cdot \rangle^{(3/2)+\nu} (Z^\alpha g)(0)\!:\!L^\infty({\mathbf R}^2)\|
\\ \notag
& \quad \le C \| g(0): M_m(z_{\nu,1;0})\|.
\end{align}
On the other hand, when $\nu\ge 1/2$, by (\ref{decay}) with $\nu
=(3/2)+\mu$, we get
\begin{align}\label{free3}
& \langle t+|x| \rangle^{1/2}\langle t-|x| \rangle^{1/2}
   |\partial_{t,x} K_{0}[(\phi_\alpha,\psi_\alpha)](t,x)|
\\ \notag
& \quad \le C\Psi_{1+\mu}(t-|x|) \| g(0): M_m(z_{(1/2)+\mu,1;0})\|.
\end{align}
Thus we obtain \eqref{free1}.

Next we prove \eqref{ba2}. It follows from \eqref{der1} and \eqref{der1Bis}
with $\nu=1-\eta$, $\mu=\eta$ and $\kappa=1$ that
\begin{align*} %\label{ba2}
 (w_{1-\eta}(t,x))^{-1} |L_{0}[\partial_{t,x}g](t,x)|_m
 \le C \log(2+t+|x|) \,\|g(t)\!:\!{M_{m+1}(W_{1,1})}\|.
\end{align*}
Since it is easy to see that \eqref{free3} is still valid
if we replace $z_{(1/2)+\mu,1;0}$ by $z_{(1/2)+\mu,1;1}$, we find \eqref{ba2}.
This completes the proof.
\hfill$\qed$

\vspace{2mm}

Finally, we introduce the following Sobolev type inequality, whose
counterpart for the Cauchy problem is due to Klainerman \cite{kl0}.
Since the proof is similar to the case of $n=3$ (see e.g. \cite{KatKub}),
we omit it.

\begin{lemma}\label{KlainermanSobolev}\
Let $\varphi \in C_0^2(\overline{\Do})$.
Then we have
\begin{equation}\label{ap21}
\sup_{x \in \Do} \jbx^{1/2} |\varphi(x)|
\le C \sum_{|\alpha|+k \le 2} \| \pa_x^\alpha O_{12}^k\, \varphi\!:\!{L^2(\Do)}\|.
\end{equation}
\end{lemma}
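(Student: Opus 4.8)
The statement to prove is the Klainerman–Sobolev inequality \eqref{ap21} on the exterior domain $\Do$:
\[
\sup_{x\in\Do}\jbx^{1/2}|\varphi(x)|\le C\sum_{|\alpha|+k\le2}\|\pa_x^\alpha O_{12}^k\varphi:L^2(\Do)\|.
\]

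The plan is to reduce this to the known Cauchy-problem version on all of $\R^2$ by an extension argument, since $\varphi$ is supported in $\overline{\Do}$ and vanishes near $\pa\Do$ in the sense that it is $C_0^2(\overline{\Do})$ (so in particular vanishes in a neighbourhood of the obstacle if one wants, but more importantly it extends). First I would recall the flat statement: for $w\in C_0^2(\R^2)$ one has $\jb{x}^{1/2}|w(x)|\le C\sum_{|\alpha|+k\le2}\|\pa_x^\alpha O_{12}^k w:L^2(\R^2)\|$, which is the stationary (no-$t$) case of Klainerman's inequality \cite{kl0} and whose proof splits into the region $|x|\le1$, handled by the ordinary Sobolev embedding $H^2(\R^2)\hookrightarrow L^\infty$, and the region $|x|\ge1$, where one writes $w$ in polar coordinates $(r,\omega)$ and uses that $\pa_r$ and the angular derivative $O_{12}=\pa_\omega$ together with the weight $r^{1/2}$ reproduce a one-dimensional Sobolev estimate on each ray after a dyadic localization in $r$; the vector fields $\pa_x^\alpha O_{12}^k$ with $|\alpha|+k\le2$ control exactly the needed $r$- and $\omega$-derivatives up to order two.

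The bridge to the exterior domain is a bounded extension operator. Since $\pa\Do$ is compact and smooth and $\mathcal O\subset B_1$, there is a linear extension map $E:H^2(\Do)\to H^2(\R^2)$ which is bounded, is local up to a fixed distance from $\pa\Do$, and (by choosing it to commute suitably with rotations, or simply by the finiteness of $|\alpha|+k\le2$) satisfies $\|\pa_x^\alpha O_{12}^k Ew:L^2(\R^2)\|\le C\sum_{|\beta|+j\le2}\|\pa_x^\beta O_{12}^j w:L^2(\Do)\|$; this last point uses that $O_{12}$ is tangent to spheres centered at the origin and that on $B_1$ it is a smooth vector field with bounded coefficients, so commutators of $O_{12}$ with the (compactly localized) extension are lower order and absorbed. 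Applying the flat inequality to $E\varphi$ and restricting to $x\in\Do$ gives $\sup_{x\in\Do}\jbx^{1/2}|\varphi(x)|=\sup_{x\in\Do}\jbx^{1/2}|E\varphi(x)|\le\sup_{x\in\R^2}\jb{x}^{1/2}|E\varphi(x)|\le C\sum_{|\alpha|+k\le2}\|\pa_x^\alpha O_{12}^k E\varphi:L^2(\R^2)\|\le C\sum_{|\alpha|+k\le2}\|\pa_x^\alpha O_{12}^k\varphi:L^2(\Do)\|$, which is \eqref{ap21}.

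The main obstacle is the interaction between the rotation vector field $O_{12}$ and the extension operator: a generic Sobolev extension need not commute with $O_{12}$, and one must check that the extra terms produced by commuting $O_{12}$ past $E$ are genuinely controlled by at most two vector fields applied to $\varphi$. The cleanest route is to note that on the fixed compact collar where $E$ acts nontrivially the field $O_{12}$ has smooth bounded coefficients, so $O_{12}E=EO_{12}+[\,O_{12},E\,]$ with the commutator of order $\le1$; iterating once more for the $k=2$ case keeps everything within $|\alpha|+k\le2$ because each commutator drops the order by one while the extension contributes nothing new. Alternatively, since the paper cites this as routine and analogous to $n=3$, one may simply quote the flat result and the standard collar-extension construction, remarking that the boundedness of the obstacle and the smoothness of $\pa\Do$ make the commutator terms harmless; that is exactly why the authors say they omit the proof.
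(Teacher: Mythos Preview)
Your extension-operator approach is correct, but the paper's route (sketched in the appendix material after the main text, and in the cited reference \cite{KatKub}) is more direct and avoids exactly the obstacle you flagged. Rather than extending $\varphi$ across $\partial\Do$ and then tracking how $O_{12}$ interacts with the extension, the paper splits $\varphi=\psi_1\varphi+(1-\psi_1)\varphi$ using the radial cutoff $\psi_1$ from \eqref{cutoff}, which vanishes on $B_1\supset\mathcal O$. The piece $\psi_1\varphi$ then extends by zero to all of $\R^2$ with no extension operator at all, so the flat Klainerman inequality applies verbatim and produces no commutator terms; the piece $(1-\psi_1)\varphi$ is supported in the bounded set $\Do_2$, where $\jbx^{1/2}$ is harmless and the ordinary Sobolev embedding $H^2\hookrightarrow L^\infty$ already gives the bound. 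Both arguments rest on the same flat-space lemma, but the cutoff sidesteps the extension machinery and the commutator bookkeeping entirely. Your approach does go through---on the compact collar $O_{12}$ is a bounded-coefficient combination of $\pa_1,\pa_2$, so everything reduces to the $H^2$-boundedness of $E$---but it is a detour the paper's decomposition makes unnecessary.
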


%%%%%%%%%%%%%%%%%%%%%%%%%%%%%%%%%
%%%%%%%%%%%%%%%%%%%%%%%%%%%%%%%%%
\section{Basic estimates}
%%%%%%%%%%%%%%%%%%%%%%%%%%%%%%%%%

First of all, we prepare the following lemma which will be used to 
prove Theorem \ref{main} below.
Recall that we have assumed ${\mathcal O}\subset B_1$.
%%%%%%%%%%%%%%%%%%%%%%%%%%%%%%

%%%%%%%%%%%%
\begin{lemma}\label{KataLem}
Let ${\mathcal O}$ be star-shaped.
Let $a$, $b>1$, $0<\rho\le 1$, $\mu\ge 0$, $\kappa\ge 1$, $\eta>0$,
and $m$ is a non-negative integer.

\noindent
{\rm (i)} Suppose that $\chi$ is a smooth function on $\R^2$ satisfying $\supp \chi \subset B_b$. 
If $\D=(\vec{v_0},f) \in X_{a}(T)$,  
then there exists a positive constant $C=C(\rho, a, b, m)$ such that
\begin{align} \label{KataL01}
 & \langle t \rangle^\rho
   |\chi S[ \D  ](t,x)|_m
\\ \nonumber
& \quad \le C  {\mathcal A}_{\rho+1, m+1}[\vec{v}_0]
 +C \sum_{|\beta|\le m+1} \sup_{(s,x)\in [0,t]\times\Do_a} 
 \langle s\rangle^{\rho} |\pa^\beta f(s,x)|
\end{align}
for $(t,x)\in[0, T)\times \overline{\Do}$.

\medskip

\noindent
{\rm (ii)} Let $\vec{w}$ and $g$ are smooth functions on $\R^2$ and on $[0, T)\times \R^2$,
respectively.
If  %$\supp \vec{w}_0 \cup 
$\supp g(t,\cdot)\subset \overline{B_a\setminus B_1}$
for any $t\in [0,T)$,
then there exists a positive constant $C=C(\rho, \mu, a, m)$ such that
\begin{align} 
\label{KataL02}
   |L_0[g](t,x)|_m
% \\ \nonumber & \quad \le C {\mathcal A}_{1, m}[\vec{w}_0]+
 \le C \sum_{|\beta|\le m} \sup_{(s,x)\in [0,t]\times\Do_a} 
 \langle s\rangle^{1/2} |\pa^\beta g(s,x)|
\end{align}
and
\begin{align} 
\label{KataL03}
  & (w_{\rho}(t,x))^{-1} |\pa L_0[g ](t,x)|_m
 % \le C {\mathcal A}_{2+\mu, m+1}[\vec{w}_0] \\ \nonumber  & \quad \hspace{10mm}
\\ \notag
& \quad 
 \le C\Psi_{1+\mu}(t+|x|)\!\! \sum_{|\beta|\le m+1} \sup_{(s,x)\in [0,t]\times\Do_a}
 \langle s\rangle^{\rho+\mu} |\pa^\beta g(s,x)|
\end{align}
for $(t,x)\in [0,T)\times \overline{\Do}$.

On the other hand, if $\vec{w_0}(x)=g(t,x)=0$ for any $(t, x) \in [0,T) \times B_1$,
then there exists a positive constant $C=C(\kappa,b,m)$ such that 
\begin{align} 
\label{KataL04}
 & \langle t \rangle^{1/2} |S_0[(\vec{w_0}, g) ](t,x)|_m
\\ \nonumber
& \quad \le C {\mathcal A}_{3/2, m}[\vec{w}_0]
 +C \Psi_\kappa(t+|x|)  \norm{g(t)}{N_m(W_{1,\kappa})}
\end{align}
and
\begin{align} 
\label{KataL05}
  & \langle t \rangle^{1-\eta} |\pa S_0[(\vec{w_0}, g) ](t,x)|_m
  \le C {\mathcal A}_{2, m+1}[\vec{w}_0]
\\ \notag
& \quad \quad
  +C\log(2+t+|x|) \norm{g(t)}{N_{m+1}(W_{1,1})}
\end{align}
for $(t,x)\in [0,T)
\times \overline{\Do_b}$.
\end{lemma}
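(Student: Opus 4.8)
\textbf{Plan of proof of Lemma \ref{KataLem}.}
The strategy is to reduce each of the five estimates to the corresponding Cauchy-problem facts already assembled in Section 3 (Lemmas \ref{lem:freeH}--\ref{freeD}) by exploiting that the various cut-off functions force the relevant data and inhomogeneities to live in a bounded region, so that spatial weights $\jbx$, $\jb{t+|x|}$ and $\jb{t-|x|}$ are all comparable to $\jb{t}$ there. For part (i), I would write $\chi S[\D]$ and commute the $Z_j$ ($j=0,1,2,3$) through the wave operator using \eqref{commute}; since $\supp\chi\subset B_b$ and $\D\in X_a(T)$, the pointwise bound on $|\chi S[\D](t,x)|_m$ is really a local-in-space estimate, so I would apply Lemma \ref{LocalEnergy} (with $\gamma$ taken slightly larger than $\rho$, say $\gamma=\min\{\rho,1\}$, and noting $\rho\le1$) followed by the Sobolev inequality \eqref{ap21} to pass from the $L^2(\Do_{b'})$-bound on $\le m+2$ derivatives to the pointwise bound on $\le m$ derivatives; the $L^2$-norms of data and forcing appearing there are dominated by the weighted sup-norms ${\mathcal A}_{\rho+1,m+1}[\vec v_0]$ and $\sup\jb{s}^\rho|\pa^\beta f|$ because the supports are contained in a fixed ball, where $\jbx^{\rho+1}$ integrates against the bounded volume. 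The $+1$ in the weight exponent $\rho+1$ and the extra derivative in $m+1$ are exactly what is needed to absorb the loss in Lemma \ref{LocalEnergy} and in the elliptic regularity step.

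For part (ii), the estimates \eqref{KataL02} and \eqref{KataL03} concern $L_0[g]$ with $g$ supported in the bounded annulus $\overline{B_a\setminus B_1}$; here I would apply Lemma \ref{free} (for \eqref{KataL02}, with $m=0$ after commuting $Z^\alpha$) and Lemma \ref{freeD}, specifically \eqref{der1m} (for \eqref{KataL03}), and then observe that on the support of $g$ one has $\jb{|y|}\asymp1$, hence $W_{\nu,\kappa}(s,y)\asymp\jb{s+|y|}^\nu\asymp\jb{s}^\nu$ and likewise $z_{\nu+\mu,\kappa;0}(s,y)\asymp\jb{s}^{\nu+\mu}$, so that $\|g(t):M_m(W_{\nu,\kappa})\|$ and $\|g(t):M_{m+1}(z_{\nu+\mu,\kappa;0})\|$ collapse to $\sup_{[0,t]\times\Do_a}\jb{s}^{1/2}\,\jbx^{1/2}|\pa^\beta g|$-type quantities, with $\jbx^{1/2}\asymp1$; choosing $\nu=\rho$ in \eqref{der1m} and $\kappa=1$ gives the stated form. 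The estimates \eqref{KataL04} and \eqref{KataL05} are the ``global'' data-and-forcing versions: for $\vec w_0$ and $g$ vanishing on $B_1$ I would apply Lemma \ref{lem:freeH} with $\nu=1$ (so that $\Phi_{\nu-1}=\Phi_0=\log^{-1}(2+\langle t+|x|\rangle/\langle t-|x|\rangle)\ge c\,\jb{t}^{-1/2}\log^{-1}(2+t)$ is harmless on $\Do_b$... actually more simply, on $\Do_b$ we have $|x|\le b$, so $\Phi_0(t,x)\asymp\log^{-1}(2+\jb{t})$ and $\jb{t+|x|}^{1/2}\asymp\jb{t}^{1/2}$, giving a clean $\jb{t}^{1/2}$ bound up to the stated $\Psi_\kappa$ factor) together with Lemma \ref{free} (using \eqref{ba1}) for \eqref{KataL04}, and Lemma \ref{freeD} (using \eqref{ba2}, with $\eta$ as given) for \eqref{KataL05}; the homogeneous data terms produce the ${\mathcal A}_{3/2,m}[\vec w_0]$ and ${\mathcal A}_{2,m+1}[\vec w_0]$ contributions via ${\mathcal B}_{\nu+1/2,m}$, where I identify ${\mathcal A}$ with ${\mathcal B}$ after extending functions on $\Do$ supported away from $B_1$ by zero to all of $\R^2$.

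The main obstacle I anticipate is bookkeeping the weights consistently across the three incompatible scales --- the purely temporal weight $\jb{t}$, the light-cone weight $\jb{t-|x|}$, and the ``interior'' weight $\jb{|x|}$ --- when the region is $\Do_b$ (bounded in $x$, so $\jb{|x|}\asymp1$ and $\jb{t-|x|}\asymp\jb{t+|x|}\asymp\jb{t}$) versus when $g$ is supported in the annulus but we evaluate $L_0[g]$ at arbitrary $(t,x)\in[0,T)\times\R^2$ (so $w_\rho(t,x)$ genuinely carries the full light-cone structure and only the $M$-norm simplifies). In particular, for \eqref{KataL03} one must be careful that the output weight $w_\rho(t,x)$ is the genuine one — not a simplified interior version — because this estimate will later be applied at points $x$ far from the obstacle; only the input norm collapses. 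A second, lesser technical point is checking that the elliptic estimate \eqref{LE2} can indeed be iterated to trade all of the $m$ spatial derivatives for time derivatives plus $f$, which is routine given $\Delta_x S[\D]=\pa_t^2S[\D]+f$ but must be invoked at the right regularity level so that the final derivative count is $m+1$ and not worse. Everything else is a direct substitution into the Section 3 lemmas.
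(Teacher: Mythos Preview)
Your plan matches the paper's proof essentially line for line: part (i) is Sobolev embedding on $\Do_b$ followed by the local energy estimate \eqref{LE} (the elliptic iteration you worry about is already packaged into Lemma \ref{LocalEnergy}, so you may simply quote \eqref{LE} with $m$ replaced by $m+2$); estimates \eqref{KataL02}--\eqref{KataL03} apply \eqref{ba1} and \eqref{der1m} respectively and then collapse the $M$-norms using $\jb{|y|}\asymp 1$ on the annulus; estimates \eqref{KataL04}--\eqref{KataL05} apply \eqref{decay}, \eqref{ba1}, \eqref{ba2} and use $|x|\le b$ to reduce $\Phi_0(t,x)$ and $w_{1-\eta}(t,x)$ to constants and $\jb{t}^{-(1-\eta)}$ via \eqref{el2}. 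The only slip is your parameter choice for \eqref{KataL03}: take $\kappa>1$ in \eqref{der1m}, not $\kappa=1$, so that $\Psi_\kappa\equiv 1$ and no spurious logarithm appears on the right of \eqref{KataL03} --- this costs nothing, since $\jb{|y|}^\kappa\asymp 1$ on $\supp g$ for every $\kappa$.
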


\noindent{\it Proof.}\ \
First we note that 
\begin{equation}
|h(t,x)|_m\le C \sum_{|\beta|\le m} |\pa^\beta h(t,x)|
\label{KataM01}
\end{equation}
holds for any smooth function $h$ on $[0,T)\times \overline{\Do}$
(or on $[0, T)\times \R^2$) with $\supp h(t, \cdot)\subset B_R$
for some $R\,(>1)$.

We start with the proof of \eqref{KataL01}.
Let $\D=(\vec{v}_0, f) \in X_{a}(T)$ and $0<\rho\le 1$.
For $(t,x)\in [0, T)\times \overline{\Omega}$, %In view of \eqref{Equivalence}, 
by \eqref{KataM01}, 
the Sobolev inequality and \eqref{LE}, we get
\begin{align*}
\langle t \rangle^\rho
&|\chi S[ \D  ](t,x)|_m
\\
  \le & C \langle t\rangle^\rho \!\!\! \sum_{|\beta|\le m+2}
        \norm{\pa^\beta S[ \D ](t)}{L^2(\Do_b)}
\\
\le & C \norm{ \vec{v}_0}{{\mathcal H}^{m+1}(\Do)}
 +C \sup_{s\in [0,t]} \langle s\rangle^{\rho} \!\!\! 
\sum_{|\beta|\le m+1} \norm{\pa^\beta f(s)}{L^2(\Do)}.
\end{align*}
Since $\text{supp}\,f(t,\cdot) \subset \overline{{\Do_{a}}}$
implies $\norm{\pa^\beta f(s)}{L^2(\Do)}\le C \norm{\pa^\beta f(s)}{L^\infty(\Do_a)}$,
we obtain \eqref{KataL01}.

%Let $\xi$ and $\eta$ be functions on $\Lambda(\subset \R\times \R^2)$.
%We write $\xi(t,x)\sim \eta(t,x)$ for $(t,x)\in \Lambda$, if there exists a positive constant $C$ such that
%$$
%C^{-1} \xi(t,x)\le \eta(t,x)\le C\xi(t,x) \text{ for any $(t,x)\in \Lambda$}.
%$$
%Observe that 
%\begin{align} \label{Equivalence}
%\langle{t}\rangle^\rho \sim & \jbx^{1/2} W_{\rho, \kappa} (t,x) 
%\\
%\sim & \langle{t+|x|}^{1/2} \rangle \Phi_{\rho-1}(t,x) \sim   \jbx^{1/2} \jb{t-|x|}^\rho
%\nonumber
%\end{align}
%for $(t,x)\in [0, \infty)\times B_R$, where $R>0$, $\rho\ge 0$, and $\kappa\ge 0$. % with $\kappa\ne 1$.

Next we prove \eqref{KataL02}.
By \eqref{ba1} with $\nu=1/2$ and $\kappa>1$, we find that the left-hand side
on \eqref{KataL02} is estimated by
\begin{align*}
 C \norm{g(t)}{M_m(W_{1/2,\kappa})} 
 \le C \norm{g(t)}{N_m(W_{1/2,\kappa})}
\end{align*}
because $\supp g(t,\cdot) \subset \overline{B_a\setminus B_1}$.
Since $|x| \le a$ on $\supp g(t,x)$, we get \eqref{KataL02}.
Similarly, if we use \eqref{der1m} with $\nu=\rho$ and $\kappa>1$,
instead of \eqref{ba1}, then we get \eqref{KataL03}. 

Next we prove \eqref{KataL04}.
From \eqref{decay} and \eqref{ba1} we have
\begin{align*}
& \langle t+|x| \rangle^{1/2} \Phi_0(t,x)
  |S_0[(\vec{w}_0, g) ](t,x)|_m
\\
& \quad \le C {\mathcal B}_{3/2, m}[\vec{w}_0]
 +C \Psi_\kappa(t+|x|)\,\norm{g(t)}{M_m(W_{1,\kappa})}
\end{align*}
for $(t,x)\in [0, T)\times \R^2$.
Since $\Phi_0(t,x)$ is equivalent to a constant
when $x \in \overline{\Omega_b}$,
we get \eqref{KataL04}, because of the assumption on $\vec{w}_0$ and $g$.

Finally, we prove \eqref{KataL05}.
From %\footnote{linear} 
\eqref{decay} and \eqref{ba2} we have
\begin{align*}
& (w_{1-\eta}(t,x))^{-1}  |\pa S_0[(\vec{w}_0, g) ](t,x)|_m
\\
& \quad \le C {\mathcal B}_{2, m+1}[\vec{w}_0]
 +C \log(2+t+r)\,\norm{g(t)}{M_{m+1}(W_{1,1})}
\end{align*}
for $(t,x)\in [0, T)\times \R^2$.
Recalling \eqref{el2},
we get \eqref{KataL04}.
This completes the proof.
\hfill$\qed$

\vspace{2mm}

Now we are in a position to state our basic estimates for solutions
to the linear mixed problem.

%%%%%%%%%%%%%%%%%%%%%%%%%%%%%
\begin{theorem}\ \label{main}
Let ${\mathcal O}$ be star-shaped and $\D=(\vec{v}_0,f) \in X(T)$.
Let $k$ be a nonnegative integer and $\kappa \ge 1$.

\noindent 
{\rm (i)}\ It holds that
%There exists a constant $C=C(\kappa)>0$ such that
\begin{align}\label{ba3}
    |S[\D](t,x)|_k
%\\ \notag & \quad  
\le C( {\mathcal A}_{3/2,k+3}[\vec{v_0}]
  +\Psi_\kappa(t+|x|) \| f(t)\!:\!{N_{k+3}(W_{1,\kappa})}\|)
\end{align}
for $(t,x)\in [0,T)\times \overline{\Do}$.

\vspace{1mm}

\noindent
{\rm (ii)}\ For $0<\rho \le 1/2$ and $\delta>0$,  we have
\begin{align}\label{ba4}
&   |\partial S[\D](t,x)|_k
       \le C w_{1/2}(t,x)  \left({\mathcal A}_{2+\delta,k+4}[\vec{v_0}]  \right.
\\ \notag
& \quad\quad  \ \left.
    +\log (2+t+|x|)\,\|f(t)\!:\!{N_{k+3}(W_{1,1})}\| \right)
\\ \notag
& \quad\quad 
  +C w_{\rho}(t,x) \Psi_{(3/2)-\rho}(t+|x|)\, \Psi_\kappa(t+|x|)\,\|f(t)\!:\!{N_{k+4}(W_{1,\kappa})}\| 
\end{align}
for $(t,x)\in [0,T)\times \overline{\Do}$.

Moreover, for $0< \eta <1$ and $\delta>0$, we have
\begin{align}\label{ba4t}
& |\pa \partial_t S[\D](t,x)|_k
     \le C w_{1-\eta}(t,x) \left({\mathcal A}_{2+\delta,k+5}[\vec{v_0}] \right. 
\\ \notag
& \quad \quad \left.
 +({\rm log} (2+t+|x|))^2\,\|f(t)\!:\!{N_{k+5}(W_{1,1})}\| \right)
\end{align}
for $(t,x)\in [0,T)\times \overline{\Do}$.
\end{theorem}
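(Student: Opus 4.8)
\textbf{Proof proposal for Theorem \ref{main}.}

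The plan is to realize $S[\D]$ as a sum of two pieces by the standard cut-off device of Shibata--Tsutsumi: write $S[\D]=\psi_1 w + v_{\rm loc}$, where $w$ solves a Cauchy problem on $\R^2$ obtained by multiplying the equation by the cut-off $\psi_1$ (supported away from the obstacle, so $\psi_1 w$ makes sense on all of $\R^2$), and $v_{\rm loc}$ is a remainder that is supported, up to exponentially small tails, in a fixed ball $\Do_b$ and is controlled by the local-energy decay Lemma \ref{LocalEnergy}. Concretely, $\psi_1 S[\D]$ satisfies a wave equation on $\R^2$ whose source is $\psi_1 f$ plus commutator terms $[\Delta,\psi_1]S[\D]=(\Delta\psi_1)S[\D]+2\nabla\psi_1\cdot\nabla S[\D]$, which are supported in the annulus $\overline{B_2\setminus B_1}$; the near-field factors there are handled by part (i) of Lemma \ref{KataLem} (for the undifferentiated piece) and by the local-energy estimate (for the $\nabla S[\D]$ piece). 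The far-field behaviour of the Cauchy part is then read off from Lemmas \ref{lem:freeH}, \ref{free}, \ref{freeD}, together with parts (ii) of Lemma \ref{KataLem} which package exactly the annulus-supported source terms into the weights $w_\rho$, $\Psi_{1+\mu}$.

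For \eqref{ba4t} specifically, the key new ingredient — and the reason this estimate is separated from \eqref{ba4} — is that one differentiates the mixed problem in $t$ first. Since the Dirichlet condition \eqref{dc} is preserved under $\partial_t$, the function $\partial_t S[\D]$ again solves a mixed problem of the same type, with data $(v_1,v_2)$ and source $\partial_t f$; hence one may \emph{apply the spatial-derivative estimate \eqref{ba4} to $\partial_t S[\D]$ itself}. This is the mechanism that converts the weak $w_{1/2}$-type decay of a bare space derivative into the stronger $w_{1-\eta}$-type decay for $\partial\partial_t S[\D]$: the time derivative already "costs" one unit of decay that the space derivatives alone cannot buy. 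So the proof of \eqref{ba4t} is: first verify $\partial_t S[\D]=S[(\vec v_1, \partial_t f)]$ with $(\vec v_1,\partial_t f)\in X(T)$ (using \eqref{data+} and the compatibility condition), then invoke \eqref{ba4} with $k$ replaced by $k$ but data shifted by one, and finally re-express ${\mathcal A}_{2+\delta,k+4}[\vec v_1]$ and $\|\partial_t f(t):N_{k+4}(\cdot)\|$ in terms of ${\mathcal A}_{2+\delta,k+5}[\vec v_0]$ and $\|f(t):N_{k+5}(\cdot)\|$; the extra $\log$ factor in \eqref{ba4t} comes from feeding the $\log(2+t+|x|)$ already present in \eqref{ba4} through a second application, or more precisely from the $\Psi$-factors stacking, hence the $(\log)^2$.

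I would carry the steps out in this order: (1) set up the cut-off decomposition and identify all source terms and their supports; (2) prove \eqref{ba3} — the undifferentiated estimate — by combining \eqref{KataL01}, \eqref{KataL04} and \eqref{decay}, \eqref{ba1}, which is the template case; (3) prove \eqref{ba4} by the same decomposition applied to $\partial S[\D]$, now using \eqref{KataL02}, \eqref{KataL03}, \eqref{KataL05} for the annulus terms and \eqref{der1m}, \eqref{ba2} for the Cauchy part, being careful that the commutator term $2\nabla\psi_1\cdot\nabla S[\D]$ is itself a \emph{first} space derivative of $S[\D]$ localized in $\Do_b$ and so is controlled by \eqref{LE} with a genuine $\jb t^{-\gamma}$ gain rather than by the weaker global pointwise bound — this is where the "$2+\delta$" weight on the data and the splitting of exponents $\rho$ vs $1/2$ in the two lines of \eqref{ba4} originate; (4) deduce \eqref{ba4t} from \eqref{ba4} via the $\partial_t$-invariance of the boundary condition as described above. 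The main obstacle is step (3): the localized gradient term $\nabla\psi_1\cdot\nabla S[\D]$ contributes to the Cauchy source $g$ a factor that only decays like $\jb s^{-\gamma}$ for $\gamma<1$ coming from \eqref{obstacle}/\eqref{LE}, and one must check that after running it through \eqref{KataL03} (which demands $\langle s\rangle^{\rho+\mu}|\partial^\beta g|$ bounded) the budget still closes for $\rho\le 1/2$ with a small loss $\delta$ absorbed into the homogeneous norm ${\mathcal A}_{2+\delta,\cdot}$; balancing this loss against the available local-energy decay, while simultaneously keeping enough regularity ($k+4$, resp.\ $k+5$, derivatives) for the elliptic estimate \eqref{LE2} to upgrade $L^2$ control of $\Delta$ to $L^2$ control of all second-order space derivatives, is the delicate bookkeeping that the two-dimensional weak decay forces on us.
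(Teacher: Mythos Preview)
Your overall plan for part (i) and the first half of (ii) is in the right spirit, though the paper actually uses a five-term cut-off formula $S[\D]=\psi_1 S_0[\psi_2\D]+\sum_{i=1}^4 S_i[\D]$ rather than a two-term one; in particular it first splits the \emph{data} via $\psi_2$, feeding the far part into a free-space solver $S_0$ and the compactly supported near part $(1-\psi_2)\D\in X_3(T)$ into the mixed solver $S$, for which Lemma~\ref{LocalEnergy} is then legitimately available. Your simpler scheme $S[\D]=\psi_1 S[\D]+(1-\psi_1)S[\D]$ runs into the difficulty that the local-energy estimate \eqref{LE} requires $\D\in X_a(T)$, which a general $\D\in X(T)$ need not satisfy.

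The real gap, however, is your step (4). Applying \eqref{ba4} to $\partial_t S[\D]=S[\widetilde\D]$ yields only the weights $w_{1/2}$ and $w_\rho$ with $0<\rho\le 1/2$; it cannot produce $w_{1-\eta}$. The constraint $\rho\le 1/2$ in \eqref{ba4} is intrinsic: it is forced by the term $S_2[\D]$, whose annulus-supported source $g_{2,1}[f]$ decays only like $\langle s\rangle^{-1/2}$ (the two-dimensional free rate, from \eqref{KataL04}), so that \eqref{KataL03} can be invoked with at best $\rho+\mu=1/2$. What changes for $\widetilde\D$ is that the analogous source now involves $\partial_t L_0[\psi_2 f]$ rather than $L_0[\psi_2 f]$, and by \eqref{KataL05} this decays like $\langle s\rangle^{-(1-\eta)}$ in $\Do_b$; only then may one take $\rho=1-\eta$ in \eqref{KataL03}. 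Likewise $S_3[\widetilde\D]$ and $S_4[\widetilde\D]$ profit from $S[(1-\psi_2)\widetilde\D]=\partial_t S[(1-\psi_2)\D]$ together with \eqref{LE} at $\gamma=1$. Thus \eqref{ba4t} is \emph{not} a corollary of \eqref{ba4}: you must rerun the decomposition estimates for $\widetilde\D$ term by term, using the improved local decay of the time-differentiated free and mixed solutions to push $\rho$ from $1/2$ up to $1-\eta$ inside \eqref{KataL03}. The $(\log)^2$ in \eqref{ba4t} then arises because \eqref{KataL05} already carries one $\log$ and \eqref{KataL03} with $\mu=0$ contributes $\Psi_1=\log$ as a second factor --- not from iterating \eqref{ba4}.
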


\noindent{\it Proof.}\ \
First we prove \eqref{ba3}.
We use the following representation formula
based on the cut-off method\,:
\begin{equation}\label{v31}
S[{\D}](t,x)=\psi_1(x) S_0[\psi_2 \D](t,x)
{}+\sum_{i=1}^4 S_i[\D](t,x)
\end{equation}
for $(t,x)\in [0,T)\times \overline{\Do}$,
where $\psi_a$ is defined by (\ref{cutoff}) and we have set 
\begin{align}\label{S1}
& S_1[\D](t,x)=(1-\psi_2(x))L\bigl[\,[\psi_1,-\Delta_x]
              S_0[\psi_2 \D ]\bigr](t,x),
\\ \label{S2}
& S_2[\D](t,x) 
=-L_0\bigl[\,[\psi_2,-\Delta_x]
  L\bigl[\,[\psi_1,-\Delta_x]
             S_0[\psi_2 \D]\bigr]\bigr](t,x),
\\ \label{S3}
& S_3[\D](t,x)=(1-\psi_3(x)) S[(1-\psi_2) \D](t,x),
\\ \label{S4}
& S_4[\D](t,x)
=-L_0\bigl[\,[\psi_3,-\Delta_x] S[(1-\psi_2)\D]\bigr](t,x).
\end{align}

By \eqref{decay} and \eqref{ba1} we have
\begin{align*}
& \jb{t+|x|}^{1/2} \Phi_0(t,x) \left| \psi_1(x) S_0[\psi_2\D](t,x) \right|_k 
       % \le C  \left| S_0[\psi_2\D](t,x) \right|_k 
\\
& \quad\le C {\mathcal A}_{3/2,k}[\vec{v}_0]
 + C \Psi_\kappa(t+|x|) \|f(t):N_{k}(W_{1,\kappa})\|.
\end{align*}
%\ref{free}, and \ref{freeD}, together with \eqref{KataL04}, 

Next we shall estimate $S_1[\D]$ and $S_3[\D]$.
It is easy to check that
$$
[\psi_a,-\Delta_x]h(t,x)=
   h(t,x) \Delta_x \psi_a(x)+2\nabla_{\!x}\,h(t,x) \cdot \nabla_{\!x}\, \psi_a(x)
$$
for $(t, x) \in [0,T)\times \overline{\Do}$, $a \ge 1$ and any smooth function $h$.
Note that this identity implies $(0,0, [\psi_a, -\Delta_x]h)\in X_{a+1}(T)$
for any smooth function $h$ and $a\ge 1$, because 
$\supp \nabla_x \psi_a\cup \supp \Delta_x \psi_a\subset 
\overline{B_{a+1}\setminus B_a}$.
Therefore, by \eqref{KataL01} with $\rho=1/2$ and \eqref{KataL04}, we obtain
\begin{align}\label{poD1}
& 
\langle t \rangle^{1/2} |S_1[\D](t,x)|_k
\\ \nonumber
 &\quad \le C \sum_{|\beta|\le k+2} 
\sup_{(s,x)\in [0,t]\times\Do_2} 
 \langle s\rangle^{1/2} |\pa^\beta S_0[\psi_2 \D](s,x)|
\\ \nonumber
 & \quad \le C {\mathcal A}_{3/2,k+2}[\vec{v_0}]
   +C \Psi_\kappa(t+|x|) \,\|f(t)\!:\!{N_{k+2}(W_{1,\kappa})}\|
   \nonumber
\end{align}
for $(t,x)\in [0,T)\times \overline{\Do}$.
Similarly, since we have $(1-\psi_2)\D\in X_{3}(T)$
for any $\D\in X(T)$, \eqref{KataL01} with $\rho=1/2$ leads to 
\begin{align}\label{poD3}
&  \langle t \rangle^{1/2} |S_3[\D](t,x)|_k
\le C {\mathcal A}_{3/2,k+1}[\vec{v_0}]
%\nonumber
\\
& \qquad 
+C \sum_{|\beta|\le k+1} \sup_{(s,x)\in [0,t]\times\Do_3} 
 \langle s\rangle^{1/2} |\pa^\beta f(s,x)|
\nonumber
\end{align}
for $(t,x)\in [0,T)\times \overline{\Do}$.
Since ${\rm supp}\, S_i[\Xi] \subset B_4$~$(i=1, 3)$, 
\eqref{poD1} and \eqref{poD3} imply 
\begin{align} \notag
&  \langle t+|x| \rangle^{1/2} |S_i[\D](t,x)|_k
\le C {\mathcal A}_{3/2,k+2}[\vec{v_0}]
%\nonumber
\\
& \quad \hspace{10mm} 
+C \Psi_\kappa(t+|x|)  \|f(t):N_{k+2}(W_{1,\kappa})\|
\quad (i=1,3)
\nonumber
\end{align}
for $(t,x)\in [0,T)\times \overline{\Do}$.

Set $g_i[\D]=(\pa_t^2-\Delta_x) S_i[\D]$ for $i=2, 4$.
Observing that $g_2$ and $g_4$ have the almost same structures as $S_1$ and $S_3$, respectively, we find
\begin{align} \notag
&  \sum_{|\beta|\le m} \sup_{(s,x)\in [0,t]\times\Do_4} 
 \langle s\rangle^{1/2} |\pa^\beta g_i[\D] (s,x)|
\le C {\mathcal A}_{3/2,m+2}[\vec{v_0}]
%\nonumber
\\
& \quad \hspace{15mm} 
+C \Psi_\kappa(t+|x|)  \|f(t):N_{m+2}(W_{1,\kappa})\|
\nonumber
\end{align}
for $i=2, 4$ and $(t,x)\in [0,T)\times \overline{\Do}$, in a similar fashion.
Note that $g_2$ and $g_4$ are supported on 
$\overline{B_4\setminus B_2}$.
By \eqref{S2}, \eqref{S4}, and \eqref{KataL02}, we obtain
\begin{align} \notag
 |S_i[\D] (t,x)|_k
\le C {\mathcal A}_{3/2,k+3}[\vec{v_0}]
+C \Psi_\kappa(t+|x|)  \|f(t):N_{k+3}(W_{1,\kappa})\|
\end{align}
for $i=2, 4$ and $(t,x)\in [0,T)\times \overline{\Do}$,
and hence \eqref{ba3}.

%%%%%%%%%%%%%%%%%%%%%%%
Next we prove \eqref{ba4} by using \eqref{v31}.
Let $0<\rho \le 1/2$ and $\delta>0$.
Writing $\zeta_0=S_0[\psi_2\D]$, we get
\begin{align*}
%& \jbx^{1/2} \jb{t-|x|}^{\rho}
& \left|\pa_a \bigl(\psi_1(x) \zeta_0(t,x)\bigr)\right|_k %\\
%& \quad \le \jbx^{1/2} \jb{t-|x|}^{\rho}
%\left(|\psi_1(x)(\pa_a \zeta_0)(t,x)|_k+|(\pa_a\psi_1)(x)\zeta_0(t,x)|_k\right)
%\\
%& \quad 
\le C|\pa_a \zeta_0(t,x)|_k
{}+C|\pa_a \psi_1(x)|_k |\zeta_0(t,x)|_k.
\end{align*}
By \eqref{decay} and \eqref{ba2} with $\eta=1/2$, we see that the first term on the right-hand
side is estimated by 
\begin{align*}
& C \jb{t+|x|}^{-1/2}\jb{t-|x|}^{-1/2} {\mathcal A}_{2+\delta,k+1}[\vec{v}_0]
\\ 
& \quad +C w_{1/2}(t,x)  \log (2+t+|x|)\,\|f(t):N_{k+1}(W_{1,1})\|.
\end{align*}
Since $\jb{t}^{-1/2} \le C \jb{x}^{-1/2}\jb{t-|x|}^{-1/2}$ for
$|x| \le 2$, \eqref{KataL04} shows that the second term on the right-hand
side is estimated by 
\begin{align*}
 C \jb{x}^{-1/2}\jb{t-|x|}^{-1/2} ( {\mathcal A}_{3/2,k}[\vec{v}_0]
 +\log (2+t+|x|)  \|f(t):N_{k}(W_{1,1})\|).
\end{align*}
%\ref{free}, and \ref{freeD}, together with \eqref{KataL04}, 
Therefore we have
\begin{align}\label{pi0}
& |\pa (\psi_1(x) S_0[\psi_2\D](t,x))|_k 
 \le C w_{1/2}(t,x) ( {\mathcal A}_{2+\delta,k+1}[\vec{v}_0]
\\ \notag
& \quad +\log (2+t+|x|)\,\|f(t):N_{k+1}(W_{1,1})\|)
\end{align}
for $(t,x)\in [0,T)\times \overline{\Do}$.

From \eqref{poD1} with $\kappa=1$ and \eqref{poD3} we have
\begin{align} \label{pi1}
&  \jb{x}^{1/2} \langle t-|x| \rangle^{1/2} |\pa S_i[\D](t,x)|_k
\le C {\mathcal A}_{3/2,k+3}[\vec{v_0}]
%\nonumber
\\
& \quad \hspace{10mm} 
+C \log (2+t+|x|)  \|f(t):N_{k+3}(W_{1,1})\|
\quad (i=1,3)
\nonumber
\end{align}
for $(t,x)\in [0,T)\times \overline{\Do}$.

As for $g_4[\D]=(\pa_t^2-\Delta_x) S_4[\D]$, it follows from
\eqref{KataL01} with $\rho=1$ that
\begin{align} \notag
&  \sum_{|\beta|\le m} \sup_{(s,x)\in [0,t]\times\Do_4} 
 \langle s\rangle |\pa^\beta g_4[\D] (s,x)|
%\nonumber
\\
& \quad \le C {\mathcal A}_{2,m+2}[\vec{v}_0]
+C \|f(t):N_{m+2}(W_{1,1})\|.
\nonumber
\end{align}
Therefore, by \eqref{S4} and \eqref{KataL03} with $\rho=\mu=1/2$,
we get
\begin{align} \label{pi2}
&  (w_{1/2}(t,x))^{-1} |\pa S_4[\D](t,x)|_k
\\ \notag
&  \quad \le C {\mathcal A}_{2,k+3}[\vec{v}_0]
+C \|f(t):N_{k+3}(W_{1,1})\|
\end{align}
for $(t,x)\in [0,T)\times \overline{\Do}$.

To estimate $g_2[\D]= 
-[\psi_2,-\Delta_x]  L\bigl[\,[\psi_1,-\Delta_x]
             S_0[\psi_2 \D]\bigr]$,
we define $g_{2,0}[\vec{v}_0]$ and $g_{2,1}[f]$ by replacing $S_0[\psi_2 \D]$ with
$K_0[\psi_2 \vec{v}_0]$ and $L_0[\psi_2 f]$, respectively.
Then we have $g_2[\D]=g_{2,0}[\vec{v}_0]+g_{2,1}[f]$.
By \eqref{KataL01} with $\rho=(1/2)+\mu$ and \eqref{decay} with
$\nu=1+\mu$~$(0<\mu<1/2)$, we get
\begin{align} \notag
&  \jb{t}^{(1/2)+\mu} |g_{2,0}[\vec{v}_0](t,x)|_m
\\ \notag
& \quad  \le C \sum_{|\beta| \le m+3} \sup_{(s,x) \in [0,t] \times \Omega_2}
    \jb{s}^{(1/2)+\mu} |\pa^{\beta} K_0[\psi_2 \vec{v}_0](s,x)|
\\ \notag
& \quad \le C {\mathcal A}_{(3/2)+\mu,m+3}[\vec{v_0}]
\end{align}
for $(t,x)\in [0,T)\times \overline{\Do}$.
Applying \eqref{KataL03} with $\rho=1/2$ and $\mu>0$, we find
\begin{align} \notag
  (w_{1/2}(t,x))^{-1} |\pa L_0[g_{2,0}[\vec{v}_0]](t,x)|_k
\le C {\mathcal A}_{(3/2)+\mu,k+4}[\vec{v_0}]
\end{align}
for $(t,x)\in [0,T)\times \overline{\Do}$.

On the other hand, similarly to the proof of \eqref{poD1}, we get
\begin{align*} %\label{poD1}
\langle t \rangle^{1/2} |g_{2,1}[f](t,x)|_m
 \le C \Psi_\kappa(t+|x|)\,\|f(t)\!:\!{N_{m+3}(W_{1,\kappa})}\|
   \nonumber
\end{align*}
for $(t,x)\in [0,T)\times \overline{\Do}$.
By \eqref{KataL03} with $\mu=(1/2)-\rho$, we obtain
\begin{align} \notag
& (w_{\rho}(t,x))^{-1} |\pa L_0[g_{2,1}[f]](t,x)|_k
\\ \notag
& \quad \le C 
\Psi_{(3/2)-\rho}(t+|x|)\, \Psi_\kappa(t+|x|) \,\|f(t)\!:\!{N_{k+4}(W_{1,\kappa})}\|
\end{align}
for $(t,x)\in [0,T)\times \overline{\Do}$. 
Thus we get
\begin{align} \label{pi3} 
&  |\pa S_2[\Xi](t,x)|_k
\le C w_{1/2}(t,x) {\mathcal A}_{(3/2)+\mu,k+4}[\vec{v_0}]
\\ \notag
& \quad +C w_{\rho}(t,x)
\Psi_{(3/2)-\rho}(t+|x|)\, \Psi_\kappa(t+|x|)\,\|f(t)\!:\!{N_{k+4}(W_{1,\kappa})}\|
\end{align}
for $(t,x)\in [0,T)\times \overline{\Do}$. 
Now \eqref{ba4} follows from \eqref{pi0}, \eqref{pi1}, \eqref{pi2}, and \eqref{pi3}.

%%%%%%%%%%%%%%%%%%%%%%%
Finally we prove \eqref{ba4t}.
Let $0<\eta<1$ and $\delta>0$.
Since $\pa_t S[{\D}]$ satisfies the boundary condition, we have
\begin{equation}\label{v31t}
\pa_t S[{\D}](t,x)=\psi_1(x) S_0[\psi_2 \widetilde{\D}](t,x)
{}+\sum_{i=1}^4 S_i[\widetilde{\D}](t,x)
\end{equation}
for $(t,x)\in [0,T)\times \overline{\Do}$,
where we have set $\widetilde{\D}=(v_1,v_2,\pa_t f)$ with
$v_2=\Delta v_0+f$. 
Since $S_0[\psi_2 \widetilde{\D}]=\pa_t S_0[\psi_2 {\D}]$, we find 
from  \eqref{decay} and \eqref{ba2} that
\begin{align} \label{pit}
& |\pa (\psi_1(x) S_0[\psi_2 \widetilde{\D}](t,x)) |_k
\\ \notag
& \quad   \le C\jb{t+|x|}^{-1/2}\jb{t-|x|}^{-1/2} {\mathcal A}_{2+\delta, k+2}[\vec{v}_0]
\\ \notag
& \quad \ \ 
 +C w_{1/2}(t,x) \log(2+t+|x|)\,\|f(t):N_{k+2}(W_{1,1})\|
\end{align}
for $(t,x)\in [0,T)\times \overline{\Do}$. 

By \eqref{KataL01} and \eqref{KataL05}, we obtain
\begin{align}\label{pi0t}
& 
\langle t \rangle^{1-\eta} |\pa S_1[\widetilde{\D}](t,x)|_k
\\ \nonumber
 &\quad \le C \sum_{|\beta|\le k+3} 
\sup_{(s,x)\in [0,t]\times\Do_2} 
 \langle s\rangle^{1-\eta} |\pa^\beta S_0[\psi_2 \widetilde{\D}](s,x)|
\\ \nonumber
 & \quad \le C {\mathcal A}_{2,k+4}[\vec{v}_0]
 +C \log(2+t+|x|)\, \|f(t)\!:\!{N_{k+4}(W_{1,1})}\|
\end{align}
for $(t,x)\in [0,T)\times \overline{\Do}$.

Since $S[(1-\psi_2)\widetilde{\Xi}]=\pa_t S[(1-\psi_2){\Xi}]$,
by \eqref{KataL01} we have
\begin{align}\label{pi01t}
&  \langle t \rangle |\pa S_3[\widetilde{\D}](t,x)|_k
\le C {\mathcal A}_{2,k+3}[\vec{v}_0]
%\nonumber
\\
& \qquad 
+C \sum_{|\beta|\le k+3} \sup_{(s,x)\in [0,t]\times\Do_3} 
 \langle s\rangle |\pa^\beta f(s,x)|
\nonumber
\end{align}
for $(t,x)\in [0,T)\times \overline{\Do}$.

Next we evaluate $S_4[\widetilde{\Xi}]$. 
Since $S_4[\widetilde{\Xi}]=L_0[\pa_t g_4[\Xi]]$,
analogously to the proof of \eqref{pi2}, we get 
\begin{align} \label{pi02t}
&  (w_{1-\eta}(t,x))^{-1} |\pa S_4[\widetilde{\Xi}](t,x)|_k
\\ \notag
&  \quad \le C {\mathcal A}_{2,k+4}[\vec{v}_0]
+C \|f(t):N_{k+4}(W_{1,1})\|
\end{align}
for $(t,x)\in [0,T)\times \overline{\Do}$, if we use \eqref{KataL03}
with $\rho=1-\eta$, $\mu=\eta$.

Next we evaluate $S_2[\widetilde{\Xi}]$.
%Let $1/2 \le \nu<1$ and take $\mu>0$ so that $\nu+\mu<1$
We define $\widetilde{g}_{2,0}[\vec{v}_0]$ and $\widetilde{g}_{2,1}[f]$ 
by replacing $\pa_t S_0[\psi_2 \D]$ with
$\pa_t K_0[\psi_2 \vec{v}_0]$ and $\pa_t L_0[\psi_2 f]$
in 
$$
g_2[\widetilde{\Xi}]= -[\psi_2,-\Delta_x]  L\bigl[\,[\psi_1,-\Delta_x]
             \pa_t S_0[\psi_2 \D]\bigr],
$$
respectively, so that $g_2[\widetilde{\Xi}]=\widetilde{g}_{2,0}[\vec{v}_0]+
\widetilde{g}_{2,1}[f]$.
By \eqref{KataL01} and \eqref{decay}, we get
\begin{align} \notag
 \jb{t}^{1-\eta} |\widetilde{g}_{2,0}[\vec{v}_0](t,x)|_m
%\\ \notag
%& \quad
&  \le C \sum_{|\beta| \le m+3} \sup_{(s,x) \in [0,t] \times \Omega_2}
    \jb{s}^{1-\eta} |\pa^{\beta}\pa_t K_0[\psi_2 \vec{v}_0](s,x)|
\\ \notag
& \le C {\mathcal A}_{2,m+4}[\vec{v}_0]
\end{align}
for $(t,x)\in [0,T)\times \overline{\Do}$.
Appling \eqref{KataL03} with $\rho=1-\eta$, $\mu=\eta$,
we find
\begin{align} \notag
  (w_{1-\eta}(t,x))^{-1} |\pa L_0[\widetilde{g}_{2,0}[\vec{v}_0]](t,x)|_k
\le C {\mathcal A}_{2,k+5}[\vec{v}_0]
\end{align}
for $(t,x)\in [0,T)\times \overline{\Do}$.
On the other hand, by \eqref{KataL01} and \eqref{KataL05}, we obtain
\begin{align} \notag
\langle t \rangle^{1-\eta} |g_{2,1}[f](t,x)|_m
%\\ \nonumber
% &\quad
 & \le C \sum_{|\beta|\le m+3} 
\sup_{(s,x)\in [0,t]\times\Do_2} 
 \langle s\rangle^{1-\eta} |\pa^\beta \pa_t L_0[\psi_2 f](s,x)|
\\ \nonumber
 & \le C \log(2+t+|x|)\,\|f(t)\!:\!{N_{m+4}(W_{1,1})}\|
   \nonumber
\end{align}
for $(t,x)\in [0,T)\times \overline{\Do}$.
By \eqref{KataL03} with $\rho=1-\eta$ and $\mu=0$, we obtain
\begin{align} \notag
& (w_{1-\eta}(t,x))^{-1} |\pa L_0[g_{2,1}[f]](t,x)|_k
\\ \notag
& \quad \le C (\log(2+t+|x|))^2\,\|f(t)\!:\!{N_{k+5}(W_{1,1})}\|
\end{align}
for $(t,x)\in [0,T)\times \overline{\Do}$. Thus we get
\begin{align} \label{pi03t} 
& (w_{1-\eta}(t,x))^{-1} |\pa S_2[\widetilde{\Xi}](t,x)|_k
\le C {\mathcal A}_{2,k+5}[\vec{v_0}]
\\ \notag
& \quad \quad +C (\log(2+t+|x|))^2\,\|f(t)\!:\!{N_{k+5}(W_{1,1})}\|
\end{align}
for $(t,x)\in [0,T)\times \overline{\Do}$. 
Now \eqref{ba4t} follows from \eqref{pit}, \eqref{pi0t}, \eqref{pi01t}, \eqref{pi02t}, and \eqref{pi03t}.
This completes the proof. \hfill$\qed$

%%%%%%%%%%%%%%%%%%%%%
\section{Proof of Theorem 1.4}

In this section we prove Theorem \ref{thm:GE}.
Let all the assumptions of Theorem \ref{thm:GE} be fulfilled
and let us assume ${\mathcal O}\subset B_{1}$.
Though there is no essential difficulty in treating the quasilinear 
case, we concentrate on the semilinear case to keep our exposition simple.
Namely, we assume that the nonlinearity is of the form 
\begin{align}\label{ap4}
F_i(\partial u) =
\sum_{a=0}^2 \sum_{j,k,l=1}^N g^{a,b,c}_i (\partial_a u_j)(\partial_b u_k)(\partial_c u_l),
\end{align}
where $g^{a,b,c}_i$~$(a,b,c=0,1,2$) are real constants.

Since the local existence for the mixed problem \eqref{ap1}-\eqref{ap3}
can be proved by a standard argument, we have only to deduce 
an {\it apriori} estimate. 
Let $u$ be a smooth solution to \eqref{ap1}-\eqref{ap3} on $[0,T)\times \overline{\Do}$.
For a nonnegative integer $k$, we set
\begin{align*}
e_{k}[u](t,x)= (w_{1/2}(t,x))^{-1} |\pa u(t,x)|_{k}. 
\end{align*}
Since $\phi$, $\psi \in C^\infty_0(\overline{\Do}; \R^N)$,
we have $\norm{e_{k}[u](0)}{L^\infty(\Do)} \le C\ve$.
Let $k\ge 29$, and assume that
\begin{align}
\sup_{0\le t<T} \norm{e_{k}[u](t)}{L^\infty(\Do)} \le M\ve
\label{InductiveAs}
\end{align}
holds for some large $M(>1)$ and small $\ve(>0)$, satisfying $M\ve \le 1$.

Because the decay property is weak when $n=2$, we need to refine 
a treatment of the boundary term arising from the integration-by-parts argument,
compared with the case $n=3$.
Namely, we shall make use of rather stronger decay of the time derivative
based on \eqref{ba4t}.

%%%%%%%%%%%%%%%%%%%%%%%
\subsection{Estimates of the energy}\label{KEE1}
In this subsection we shall prove
\begin{equation}
\sum_{|\alpha|\le 2k} \norm{\pa^\alpha \pa u(t)}{L^2(\Do)}\le 
CM\ve (1+t)^{C_0M^2\ve^2},
\quad t \in [0,T)
\label{ap11}
\end{equation}
where $C_0$ is a universal constant which is independent of $M$, $\ve$ and $T$.

For $0\le m\le 2k$, we define
$z_{m}(t)=\sum_{p=0}^{2k-m} \norm{\pa_t^p \pa u(t)}{H^m(\Do)}$.
To prove \eqref{ap11}, it suffices to show
\begin{equation}
z_{m}(t) \le CM\ve (1+t)^{C_0M^2\ve^2} \quad\text{for $0\le m\le 2k$.}
\label{bird}
\end{equation}
First we evaluate $z_0(t)$.
For $0\le p\le 2k$, from \eqref{InductiveAs} we get 
\begin{equation}\nonumber
 |\pa_t^{p} F(\pa u)(t,x)|
 \le C M^2\ve^2 (1+t)^{-1} \sum_{q=0}^{2k} |\pa_t^{q} \pa u(t,x)|,
\end{equation}
so that 
$$
\|\pa_t^{p} F(\pa u)(t)\!:\!{L^2(\Do)}\| 
\le C_0 M^2\ve^2 (1+t)^{-1} z_{0}(t). 
$$
Therefore, noting that the boundary condition (\ref{ap2}) implies
$\partial_t^p u(t,x)=0$ for $(t,x)\in [0,T) \times \partial \Do$
and $0\le p\le 2k+1$, 
we see from the energy inequality 
for the wave equation that 
\begin{equation}\nonumber%\label{ap9-}
\frac{d z_{0}}{dt}(t) \le C_0 M^2\ve^2 (1+t)^{-1} z_{0}(t),
\end{equation}
which yields
\begin{equation}\label{ene1}
z_{0}(t) \le CM\ve (1+t)^{C_0 M^2\ve^2}.
\end{equation}

Next suppose $m\ge 1$.
Then, from the definition of $z_m$, we have
\begin{align*}
z_m(t)\le & C\sum_{p=0}^{2k-m}\bigl(\norm{\pa_t^p \pa u(t)}{L^2(\Do)}
{}+\sum_{1\le |\alpha|\le m}
\norm{\pa_t^p \pa_x^\alpha \pa_t u(t)}{L^2(\Do)}\\
&\qquad\qquad\qquad{}+\sum_{1\le |\alpha|\le m}\norm{\pa_t^p \pa_x^\alpha \nabla_x u(t)}{L^2(\Do)}
\bigr)\\
\le & C\bigl(z_0(t)+z_{m-1}(t)+\sum_{p=0}^{2k-m}\sum_{2\le |\alpha|\le m+1}
\norm{\pa_t^p\pa_x^\alpha u(t)}{L^2(\Do)}\bigr),
\end{align*}
where we have used
$$
\sum_{1\le |\alpha|\le m}
\norm{\pa_t^p \pa_x^\alpha \pa_t u(t)}{L^2(\Do)}
\le C \sum_{|\alpha'|\le m-1}\norm{\pa_t^{p+1}\pa_x^{\alpha'}\nabla_x u(t)}{L^2(\Do)}.
$$
For $2\le |\alpha|\le m+1$, (\ref{ap10}) yields
\begin{equation}\nonumber
\|\partial_t^p \pa_x^\alpha u(t)\!:\!{L^2(\Do)}\|
\le C( \|\Delta_x \partial_t^{p} u(t)\!:\!{{H}^{m-1}(\Do) }\|
+\|\nabla_{\!x}\, \partial_t^{p} u(t)\!:\!{L^2(\Do)}\|).
\end{equation}
For $0\le p \le 2k-m$, we see 
that the second term on the right-hand side in the above is bounded by $z_0(t)$.
On the other hand, by using (\ref{ap1}), the first term is estimated by
\begin{align*}
& C( \|\partial_t^{p+2} u(t)\!:\!{{H}^{m-1}(\Do) }\|
+\|\partial_t^{p} F(\pa u)(t)\!:\!{H^{m-1}(\Do)}\|)
\\
& \quad \le C(z_{m-1}(t)+M^2\ve^2 (1+t)^{-1} z_{m-1}(t))
\end{align*}
for $0\le p\le 2k-m$.
Since $M\ve \le 1$, we obtain
\begin{equation}\label{nanana}
 z_m(t) \le C\bigl( z_{m-1}(t)+z_0(t) \bigr)
\end{equation}
for $m\ge 1$.
Using \eqref{ene1},
we find \eqref{ap11}.

%%%%%%%%%%%%%%%%%%%%%%%%%%%%%%%%%%%%%%%%%%%%%%%%%%%%%%%%%%%%%%%%%%%%
\subsection{Estimates of the generalized energy, part 1}\label{KEE2}
In this subsection we evaluate the generalized derivatives $\pa Z^\alpha u$ in $L^2(\Do)$
for $1 \le |\alpha| \le 2k-1$.
It follows from (\ref{commute}) that
\begin{eqnarray}\label{ene2}
&& \quad \frac12\frac{d}{dt} 
 \int_{\Do} \left(|\partial_t Z^\alpha u_i|^2+|\nabla_{\!x}\, Z^\alpha u_i|^2
  \right)\,dx
\\ \nonumber
&& =\int_{\Do} Z^\alpha F_i(\partial u)\,\partial_t Z^\alpha u_i\,dx
 +\int_{\partial \Do} (\nu\cdot \nabla_{\!x}\, Z^\alpha u_i)\,(\partial_t
  Z^\alpha u_i)\,dS,
\end{eqnarray}
where $\nu=\nu(x)$ is the unit outer normal vector at $x \in \partial \Do$,
and $dS$ is the surface measure on $\partial \Do$.
Since $\partial \Do \subset B_{1}$ implies
$|\pa Z^\alpha u(t,x)| \le C\sum_{1 \le |\beta| \le |\alpha|} |\partial^\beta\pa u(t,x)|$
for $(t,x)\in [0,T) \times \partial \Do$,
we see from the trace theorem that the second term on the right-hand side
of (\ref{ene2}) is evaluated by
$$
C \sum_{1 \le |\beta| \le |\alpha|+1} \|\partial^\beta \partial u(t)\!:\!{L^2(\Do)}\|^2
 \le CM^2\ve^2 (1+t)^{2C_0M^2\ve^2},
$$
in view of (\ref{ap11}). On the other hand, it follows that
\begin{align}\label{ene3}
\|Z^{\alpha} F(\pa u)(t)\!:\!{L^2(\Do)}\| 
\le & C_1 M^2\ve^2 (1+t)^{-1} \|\pa u(t)\|_{|\alpha|}
\end{align}
for ${|\alpha| \le 2k-1}$, where $C_1$ is a constant independent 
of $\alpha$, $M$, $\ve$, and $T$.
 
Let $\delta>0$ be a sufficiently small number that is fixed later on, 
and take $\ve_0>0$ in such a way that $C_1 M^2 \ve_0^2 + C_0 M^2 \ve_0^2 \le \delta$.
Then for $0<\ve\le \ve_0$, we get
\begin{align*}
\frac{d}{dt}\|\pa u(t)\|_{2k-1}^2 \le 
C_1 M^2\ve^2 (1+t)^{-1} \|\pa u(t)\|_{2k-1}^2
+CM^2\ve^2 (1+t)^{2\delta},
\end{align*}
which leads to 
\begin{equation}\label{ap15}
\|\pa u(t)\|_{2k-1}\le CM\ve (1+t)^{\delta+(1/2)},
\quad t \in [0,T).
\end{equation}

%%%%%%%%%%%%%%%%%%%%%%%%%%%%%%%%%%%%%%%%%
\subsection{Estimates of the generalized energy, part 2}\label{KEE3}
By (\ref{ap21}) and (\ref{ap15}) we have
\begin{align}\label{ap21-}
\jbx^{1/2} |\partial u(t,x)|_{2k-3} 
 \le  CM\ve (1+t)^{\delta+(1/2)},
\end{align}
which yields
\begin{align*} %\label{ap21-}
\| F(\pa u)(t)\,:\,N_{2k-3}(W_{1,1})\|
 \le CM^3 \ve^3 (1+t)^{\delta+(1/2)},
\end{align*}
by \eqref{el1} with $\nu=1/2$.
For a sufficiently small number $\eta$, we set $\nu=1-\eta$, 
in the following.
Applying \eqref{ba4t}, we get
\begin{align*} %\label{ba4t}
 |\pa \partial_t u(t,x)|_{2k-8}
     \le C w_{\nu}(t,x) (\ve + M^3 \ve^3 ({\rm log} (2+t))^2 
    (1+t)^{\delta+(1/2)}),
\end{align*}
because of the finite speed of propagation.
When $(t,x)\in [0,T)\times \overline{\Do_1}$, by \eqref{el2} we have
\begin{align*} %\label{ba4t}
|\pa \partial_t u(t,x)|_{2k-8}
 \le CM\ve (1+t)^{-\nu+2\delta+(1/2)},
\end{align*}
because $M>1$ and $M \ve \le 1$.
Using this inequality, \eqref{ap11}, and \eqref{ene3}, we arrive at
\begin{align*}
\frac{d}{dt}\|\pa u(t)\|_{2k-7}^2 \le 
C_1 M^2\ve^2 (1+t)^{-1} \|\pa u(t)\|_{2k-7}^2
+CM^2\ve^2 (1+t)^{-\nu+3\delta+(1/2)},
\end{align*}
which leads to 
\begin{equation}\label{ap16}
\|\pa u(t)\|_{2k-7}\le CM\ve (1+t)^{-(\nu/2)+(3\delta/2)+(3/4)},
\quad t \in [0,T),
\end{equation}
provided $0<\ve\le \ve_0$.

%%%%%%%%%%%%%%%%%%%%%%%%%%%%%%%%%%%%%%%%%
\subsection{Estimates of the generalized energy, part 3}\label{KEE4}
Repeating the argument in the previous step, we get
\begin{align*} %\label{ap21-}
\| F(\pa u)(t)\,:\,N_{2k-9}(W_{1,1})\|
 \le CM^3 \ve^3  (1+t)^{-(\nu/2)+(3\delta/2)+(3/4)}.
\end{align*}
Using \eqref{ba4} with $\rho=1/2$, $\kappa=1$, and \eqref{ba4t} with $\nu=1-\eta$, we obtain
\begin{align*} %\label{ba4t}
 &|\pa \nabla u(t,x)|_{2k-14}
\\ \notag
& \quad
     \le C w_{1/2}(t,x) (\ve + M^3 \ve^3 ({\rm log} (2+t))^2)
     (1+t)^{-(\nu/2)+(3\delta/2)+(3/4)}
\\ \notag
& \quad \le CM\ve  (1+t)^{-(\nu/2)+2\delta+(1/4)},
\\
 &|\pa \partial_t u(t,x)|_{2k-14}
\\ \notag
& \quad
     \le C w_{\nu}(t,x) (\ve + M^3 \ve^3 ({\rm log} (2+t))^2 )
     (1+t)^{-(\nu/2)+(3\delta/2)+(3/4)}
\\ \notag
& \quad \le CM\ve  (1+t)^{-(3\nu/2)+2\delta+(3/4)}
\end{align*}
for $(t,x)\in [0,T)\times \overline{\Do_1}$.
Therefore, we get
\begin{align*}
\frac{d}{dt}\|\pa u(t)\|_{2k-13}^2 \le 
C_1 M^2\ve^2 (1+t)^{-1} \|\pa u(t)\|_{2k-13}^2
+CM^2\ve^2 (1+t)^{-2\nu+4\delta+1},
\end{align*}
which yields
\begin{equation}\label{ap17}
\|\pa u(t)\|_{2k-13}\le CM\ve (1+t)^{-\nu+2\delta+1},
\quad t \in [0,T),
\end{equation}
provided $0<\ve\le \ve_0$. 

%%%%%%%%%%%%%%%%%%%%%%%%%%%%%%%%%%%%%%%%%
\subsection{Estimates of the generalized energy, part 4}\label{KEE5}
As before, we have
\begin{align} %\label{ap21-}
\| F(\pa u)(t)\,:\,N_{2k-15}(W_{1,1})\|
 \le CM^3 \ve^3  (1+t)^{-\nu+2\delta+1},
\end{align}
so that
\begin{align*} %\label{ba4t}
 &|\pa \nabla u(t,x)|_{2k-20}
   \le CM\ve  (1+t)^{-\nu+3\delta+(1/2)},
\\ \notag
 &|\pa \partial_t u(t,x)|_{2k-20}
   \le CM\ve  (1+t)^{-2\nu+3\delta+1}
\end{align*}
for $(t,x)\in [0,T)\times \overline{\Do_1}$.
Hence we get
\begin{align*}
\frac{d}{dt}\|\pa u(t)\|_{2k-19}^2 \le 
C_1 M^2\ve^2 (1+t)^{-1} \|\pa u(t)\|_{2k-19}^2
+CM^2\ve^2 (1+t)^{-3\nu+6\delta+(3/2)}.
\end{align*}
If we choose $\delta$ so small that $3\nu-6\delta-(3/2) >1$, then we get 
$\|\pa u(t)\|_{2k-19}\le CM\ve (1+t)^{C_1M^2\ve^2}$ for $0<\ve\le \ve_0$.
Taking $T$ in such a way that
\begin{equation}\label{cond1}
(2+T)^{M^2\ve^2} \le e,
\end{equation}
we have
\begin{equation}\label{ap18}
\|\pa u(t)\|_{2k-19}\le CM\ve, \quad t \in [0,T).
\end{equation}

%%%%%%%%%%%%%%%%%%%%%%%%%%%%%%%%%%%%%%%%%
\subsection{Pointwise estimates, part 2}
By virtue of \eqref{ap18}, we get
\begin{align*} %\label{ap21-}
\| F(\pa u)(t)\,:\,N_{2k-21}(W_{1,1})\|  \le CM^3 \ve^3.
\end{align*}
Let $0<\rho <1/2$. Then it follows from \eqref{ba4} that
\begin{align*}
|\pa u(t,x)|_{2k-25}
     \le C w_{\rho}(t,x) (\ve + M^3 \ve^3 {\rm log} (2+t))
\end{align*}
for $(t,x)\in [0,T)\times \overline{\Do}$.
Therefore, assuming \eqref{cond1}, we get
\begin{align} \label{ap20}
|\pa u(t,x)|_{2k-25}
 \le CM\ve \,  w_{\rho}(t,x) 
\end{align}
for $(t,x)\in [0,T)\times \overline{\Do}$, provided $0<\ve\le \ve_0$. 

%%%%%%%%%%%%%%%%%%%%%%%%%%%%%%%%%%%%%%%%%
\subsection{Pointwise estimates, final part}
For $\kappa=1+\rho$, we get 
\begin{align*} %\label{ap21-}
\| F(\pa u)(t)\,:\,N_{2k-25}(W_{1,\kappa})\|  \le CM^3 \ve^3,
\end{align*}
by \eqref{ap20}.
Using \eqref{ba4} with $\rho=1/2$ and $\kappa>1$, we have
\begin{align} \label{KFF}
|\pa u(t,x)|_{2k-29}
     \le C_2 w_{1/2}(t,x) (\ve + M^3 \ve^3 {\rm log} (2+t) )
\end{align}
for $(t,x)\in [0,T)\times \overline{\Do}$, provided $0<\ve\le \ve_0$. 
Here $C_2$ is a constant independent of $M$, $\ve$ and $T$.
From \eqref{KFF} we find that \eqref{InductiveAs} with $M$ replaced by
$M/2$ is true for $M\ge 4C_2$ and $C_2 M^2\ve^2 \log (2+T) \le 1/4$,
Then, for $\ve \in (0,\ve_0]$,
the standard continuity argument implies that $e_k[u](t)$ stays bounded
as long as the solution $u$ exists (observe that $\norm{e_k[u](t)}{L^\infty(\Do)}$
is continuous with respect to $t$, because $u$ is smooth and
$\supp u(t, \cdot)\subset B_{t+R}$ for $t\in[0, T)$ with some $R>0$). Theorem \ref{thm:GE}
follows immediately from this {\it a priori} bound and a restriction on $T$.
This completes the proof.
\hfill$\qed$

%%%%%%%%%%%%%%%%%%
\renewcommand{\theequation}{A.\arabic{equation}}
\setcounter{equation}{0}  % reset counter
\renewcommand{\thelemma}{A.\arabic{theorem}}
\renewcommand{\thetheorem}{A.\arabic{theorem}}
\setcounter{theorem}{0}
\renewcommand{\thesubsection}{A.\arabic{subsection}}
\setcounter{subsection}{0}
%%%%%%%%%%%%%%%%%%%%%%%%%%%%%%%%%%%%%%%
\section*{Appendix A: Proof of Lemma \ref{elliptic}}
%\subsection{Proof of Lemma \ref{elliptic}} 
%We shall show (\ref{ap10}) only for $m=2$, because the general case can be obtained
%analogously by the inductive argument.
Suppose $m\ge 2$ and $\varphi\in H^m(\Omega)\cap H^1_0(\Omega)$.
Let $\chi$ be a $C^\infty_0(\R^2)$ function such that
$\chi \equiv 1$ in a neighborhood of ${\mathcal O}$.
Let $\text{supp}\,\chi \subset B_R$ for some $R>1$.
We set $\varphi_1=\chi \varphi$ and $\varphi_2=(1-\chi) \varphi$, so that $\varphi=\varphi_1+\varphi_2$.

First we estimate $\varphi_1$.
The following elliptic estimate 
(see Chapter 9 in \cite{GiTr} for instance)
\begin{align}
\label{elliptic00}
 \|v\!:\!{H^{k+2}(\Do_R)}\| \le
C(\|\Delta_x w\!:\!{H^k(\Do_R)}\| 
  +\|v\!:\!{L^2(\Do_R)}\|)
\end{align}
holds for $v \in H^{k+2}(\Do_R) \cap H^1_0(\Do_R)$ with a non-negative integer $k$.
On the other hand, we have
\begin{equation}\label{ha}
  \|v\!:\!{L^2(\Do_R)}\| \le C \|\nabla_x v\!:\!{L^2(\Do)}\|
\end{equation}
for $v \in H_0^1(\Do)$.
%%%%%%%%%%%%%%%%%%%%%%%%%%%%%%%%%%%%%%%%%%
Indeed, one can show \eqref{ha} as follows.
We define a positive number $r(\omega)$ for each $\omega \in S^{1}$
so that $r(\omega)\,\omega \in \partial \Omega$, and
put $r_0=\mbox{dist}\,(0,\partial \Omega)$.
Then, for $v \in C^\infty_0({\Omega})$ we have
\begin{align}\nonumber
& |v(r\omega)|^2
  =\left| \int_{r(\omega)}^r (\omega\cdot\nabla v)(s\omega) ds
    \right|^2
\\ \notag
& \hspace{15mm}  \le \left(\int_{r(\omega)}^r \frac{ds}{s}\right)
 \left(\int_{r(\omega)}^r |\nabla v(s\omega)|^2 s ds\right)
\\ \nonumber
& \hspace{15mm} \le \frac{r}{r_0}
\int_{r(\omega)}^R |\nabla v(s\omega)|^2 s ds
\end{align}
for $r(\omega) <r<R$, because $r(\omega)\ge r_0$.
Multiplying it by $r$ and integrating the resulting inequality
over $\Omega_R$, we find \eqref{ha}.

%%%%%%%%%%%%%%%%%%%%%%%%%%%%%%%%%%%%%%%%%%%%%%%%%%%%%%%

Since $\varphi\in H^1_0(\Do)$ and $\text{supp}\,\chi \subset B_R$, we have $\varphi_1 \in H_0^1(\Do_R)$.
Therefore, the application of \eqref{elliptic00} %for $k=0$ 
in combination with (\ref{ha}) gives
\begin{equation}\label{ap10bis2}
 \| \varphi_1\!:\!{H^m(\Do)}\| \le
C(\|\Delta_x \varphi\!:\!{H^{m-2}(\Do)}\| 
  +\|\nabla_x \varphi\!:\!{L^2(\Do)}\|).
\end{equation}

Now our task is to show
\begin{equation}\label{ap10bis}
\sum_{|\alpha|=m} \| \pa_x^\alpha \varphi_2\!:\!{L^2(\Do)}\| \le
C(\|\Delta_x \varphi\!:\!{H^{m-2}(\Do)}\| 
  +\|\nabla_x \varphi\!:\!{L^2(\Do)}\|),
\end{equation}
because it implies \eqref{ap10} in view of \eqref{ap10bis2}.

Since $\|\pa^\alpha w\!:\!{L^2(\R^2)}\| \le
C\|\Delta_x w\!:\!{L^2(\R^2)}\|$ for $|\alpha|=2$ and 
$w \in H^2(\R^2)$, the left-hand side of \eqref{ap10bis} with $m=2$
is estimated by
\begin{equation}\nonumber
 C\|\Delta_x \varphi_2\!:\!{L^2(\Do)}\|
\le C(\|\Delta_x \varphi\!:\!{L^2(\Do)}\|
  +\|\nabla_x \varphi\!:\!{L^2(\Do)}\|
  +\|\varphi\!:\!{L^2(\Do_R)}\|).
\end{equation}
Hence, using \eqref{ha}, we obtain \eqref{ap10bis} for $m=2$.

For $k\ge 3$, similar argument to the above gives
\begin{equation} \notag
\sum_{|\alpha|=k}\norm{\pa_x^\alpha \varphi_2}{L^2(\Omega)}
\le C\bigl(\norm{\Delta_x \varphi}{H^{k-2}(\Do)}+\norm{\nabla_x\varphi}{H^{k-2}(\Omega)}\bigr),
%\label{ap10bis3}
\end{equation}
and the second term on the right-hand side 
is bounded by $C\bigl(\norm{\Delta_x \varphi}{H^{k-3}(\Do)}+\norm{\nabla_x\varphi}{L^2(\Do)}\bigr)$, if we know \eqref{ap10}
for $m=k-1$. Hence we inductively obtain \eqref{ap10bis} %(and consequently \eqref{ap10})
for $m\ge 2$.
\hfill$\qed$

%%%%%%%%%%%%%%%%%%%%%%%%%%%%%%%%%%%%%%%%%
\renewcommand{\theequation}{B.\arabic{equation}}
\setcounter{equation}{0}  % reset counter
\renewcommand{\thelemma}{B.\arabic{theorem}}
\renewcommand{\thetheorem}{B.\arabic{theorem}}
\setcounter{theorem}{1}
%%%%%%%%%%%%%%%%%%%%%%%%%%%%%%%%%%%%%%%%%
\section*{Appendix B: Basic estimates for the Cauchy problem}

Here we prove the basic estimates used in the proof of Lemma \ref{freeD}.
In \cite{kov87}, \cite{ay}, and \cite{hk2}, some weighted $L^\infty$ estimates
for derivatives of solutions to the Cauchy problem are well examined.
However, we need their variants under different assumptions on the
right-hand member.
Although the proof of \eqref{der1} below can be done in a similar way as in the
previous works, we give its proof, because the case where $0<\nu<1$
has not been considered at all.

First of all, we introduce a couple of functions\,:
\begin{align} \notag
& K_1(\lam,\psi;r,t)=(2\pi)^{-1}
\{t^2-r^2-\lam^2+2r\lam\cos\psi\}^{-\frac12},  
\\
& K_2(\lam,\tau;r,t)=(2\pi)^{-1}
\{2r\lam\tau(1-\tau)(2-(1-\cos \vp)\tau)\}^{-\frac12},
\non
\\
\non
& \vp(\lam;r,t)=\arccos \bigg[\frac{r^2+\lam^2-t^2}{2r\lam}\bigg],
\\
& \Psi(\lam,\tau;r,t)=\mbox{arccos} [1-(1-\cos \vp(\lam;r,t))\tau],
\non
\\ \notag %\label{defK3}
& K_3^{(\ell)} (\lam,\psi;r,t)=\frac{-(x_\ell-\lam\xi_\ell)}
 {2\pi (t^2-r^2-\lam^2+2r\lam \cos\psi)^{\frac32}}
\quad (\ell=1,2).
\end{align}
As for these functions, we shall use the following estimates.
For the proof of (\ref{kernel1}), (\ref{kernel1-}), and \eqref{kernel6}, see
for instance Proposition 5.3 in \cite{ay}.
Concerning  \eqref{kernel4-} and \eqref{kernel7}, see the proof of
(4.14) and (4.34) in \cite{hk2}, respectively.

\begin{lemma} \label{kova1}
We set $\lam_-=|t-s-r|$ and $\lam_+=t-s+r$.
If $0<s<t$ and $\lam_-<\lam<\lam_+$, then we have
\begin{align} \label{kernel1} 
& \int_{-\vp}^\vp K_1 (\lam,\psi;r,t-s) d\psi =2\int_0^1 K_2(\lam,\tau;r,t-s) d\tau 
\\ \notag
& \quad \le \frac{C}{(r\lam)^{\frac12}} 
  \log \bigg[2+\frac{r\lam}{(\lam-\lam_-)(\lam_++\lam)} H(t-s-r)\bigg],
\\  \label{kernel1-} 
& \int_0^1 |\pa_\lam K_2(\lam,\tau;r,t-s)| d\tau \le \frac{C}{(r\lam)^{\frac12}
 (\lam+s+r-t)}, 
\\ \label{kernel4-} 
& \int_0^1 |\pa_\lam \Psi \cdot K_2(\lam,\tau;r,t-s) | d\tau  
\\ \notag
& \quad \le \frac{C}{(r\lam)^\frac12} 
  \bigg( \frac1{\sqrt{(\lam_+-\lam)(\lam-\lam_-)}} 
  +\frac1{\sqrt{\lam^2-\lam_-^2}}  \bigg),
\end{align}
where $H(s)=1$ for $s>0$ and $H(s)=0$ otherwise.

On the other hand, if $0<s<t-r$ and $0 <\lam <\lam_-$, then we have 
\begin{align}\label{kernel6}
& \int_{-\pi}^\pi K_1(\lam,\psi;r,t-s) d\psi 
\\ \notag
& \quad \le \frac{C}{\sqrt{(\lam+\lam_-)(\lam_+-\lam)}} \log
\bigg[2+\frac{r\lam}{(\lam_--\lam)(\lam_++\lam)}\bigg],
\\
& \int_{-\pi}^{\pi}  |K_3^{(\ell)} (\lam,\psi;r,t-s)| d\psi
 \le \frac{C}{(\lam_--\lam)\sqrt{(\lam_-+\lam)(\lam_+-\lam)}}.
\label{kernel7}
\end{align}
\end{lemma}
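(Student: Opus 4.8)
The plan is to obtain all five estimates from a single angular change of variables followed by a systematic endpoint analysis; the identities and bounds are essentially those of \cite[Proposition~5.3]{ay} and of \cite{hk2}, and the scheme below is how I would reconstruct them. Throughout, write $T=t-s$, so that the constraint $|T-r|<\lam<T+r$ is exactly $\lam_-<\lam<\lam_+$, and recall $\cos\vp=(r^2+\lam^2-T^2)/(2r\lam)$, the cosine of the half-aperture of the backward characteristic cone at $(t,x)$ as seen from the sphere of radius $\lam$. First I would introduce the substitution $1-\cos\psi=(1-\cos\vp)\tau$, i.e. $\psi=\Psi(\lam,\tau;r,T)$, which maps $[0,\vp]$ onto $[0,1]$; under it one checks the algebraic identities
\[
T^2-r^2-\lam^2+2r\lam\cos\psi=2r\lam(1-\cos\vp)(1-\tau),\qquad
2-(1-\cos\vp)\tau=(1-\cos\vp)(1-\tau)+(1+\cos\vp),
\]
together with $\sin\psi\,d\psi=(1-\cos\vp)\,d\tau$ and $\sin^2\psi=(1-\cos\vp)\tau\bigl(2-(1-\cos\vp)\tau\bigr)$. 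Substituting these into the definition of $K_1$, every power of $1-\cos\vp$ cancels and one is left with $K_1\,d\psi=K_2\,d\tau$; using that $K_1$ is even in $\psi$, this gives precisely the equality asserted in \eqref{kernel1}.

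For the bound in \eqref{kernel1} the problem reduces to estimating
\[
I(\vp)=\int_0^1\tau^{-1/2}(1-\tau)^{-1/2}\bigl((1-\cos\vp)(1-\tau)+(1+\cos\vp)\bigr)^{-1/2}d\tau .
\]
On $[0,1/2]$ the last factor is bounded below, since $\max\{1-\cos\vp,1+\cos\vp\}\ge1$, so that contribution is $O(1)$. On $[1/2,1]$ I would set $u=1-\tau$ and split according to whether $1+\cos\vp$ or $(1-\cos\vp)u$ dominates: the first region contributes $O(1)$, while on the second one $u^{-1/2}\bigl((1-\cos\vp)u\bigr)^{-1/2}=(1-\cos\vp)^{-1/2}u^{-1}$ integrates to a quantity $\le C\log\bigl(2+(1-\cos\vp)/(1+\cos\vp)\bigr)$. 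Thus $I(\vp)\le C\log\bigl(2+(1+\cos\vp)^{-1}\bigr)$, and multiplying by the prefactor yields $2\int_0^1K_2\,d\tau\le C(r\lam)^{-1/2}\log\bigl(2+(1+\cos\vp)^{-1}\bigr)$. Since $1+\cos\vp=\bigl((r+\lam)^2-T^2\bigr)/(2r\lam)$ equals $(\lam-\lam_-)(\lam+\lam_+)/(2r\lam)$ when $T>r$ and is $\ge1$ when $T\le r$, this is exactly the bound displayed in \eqref{kernel1}, the case $t-s\le r$ accounting for the cut-off $H(t-s-r)$ there.

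For \eqref{kernel1-} and \eqref{kernel4-} I would differentiate the explicit formulas for $K_2$ and $\Psi$, noting that $\partial_\lam$ falls on $2r\lam$, on the $\tau$-free part of $K_2$, or on $\cos\vp$ via $\partial_\lam\cos\vp=(\lam^2-r^2+T^2)/(2r\lam^2)$, and then rerun the split-at-$\tau=1/2$ analysis on the differentiated integrand; the additional negative powers of $\lam+s+r-t$, of $\lam_+-\lam$, and of $\lam^2-\lam_-^2$ that appear on the right-hand sides are precisely those produced by differentiating the factors $(1-\tau)^{-1/2}$ and $\bigl(2-(1-\cos\vp)\tau\bigr)^{-1/2}$ whose arguments degenerate as $\lam$ approaches an endpoint $\lam_\pm$ of the light cone. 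For \eqref{kernel6} and \eqref{kernel7} one is instead in the regime $0<\lam<\lam_-=T-r$, where $T^2-r^2-\lam^2+2r\lam\cos\psi\ge T^2-(r+\lam)^2=(\lam_--\lam)(\lam_++\lam)>0$ so that there is no angular singularity; estimating the quadratic from below by $c\bigl((\lam_--\lam)(\lam_++\lam)+r\lam\,\psi^2\bigr)$ and evaluating $\int_0^\pi\bigl((\lam_--\lam)(\lam_++\lam)+r\lam\,\psi^2\bigr)^{-1/2}d\psi$ produces \eqref{kernel6}, while for \eqref{kernel7} the extra factor $\bigl(T^2-r^2-\lam^2+2r\lam\cos\psi\bigr)^{-1}$ together with $|x_\ell-\lam\xi_\ell|\le r+\lam\le\lam_+$ gives the $(\lam_--\lam)^{-1}$ in front.

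The hard part will be the simultaneous bookkeeping of the two light-cone degeneracies $\lam=\lam_\pm$ and of the small-$r$ and small-$\lam$ regimes: extracting the sharp \emph{logarithmic} factor in \eqref{kernel1} (rather than a power blow-up) forces one to use the exact factorizations of $T^2-r^2-\lam^2+2r\lam\cos\psi$ and of $1\pm\cos\vp$ above and to carry out the endpoint analysis carefully, and keeping track of when the Heaviside factor $H(t-s-r)$ switches the logarithm off for $t-s\le r$ is an integral part of that analysis. Everything else should be elementary calculus once the change of variables $\psi\mapsto\tau$ and these factorizations are in hand.
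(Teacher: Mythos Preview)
The paper does not prove this lemma at all: it simply refers the reader to \cite[Proposition~5.3]{ay} for \eqref{kernel1}, \eqref{kernel1-}, \eqref{kernel6}, and to the proofs of (4.14) and (4.34) in \cite{hk2} for \eqref{kernel4-} and \eqref{kernel7}. Your outline---the substitution $1-\cos\psi=(1-\cos\vp)\tau$ yielding the equality in \eqref{kernel1}, the split-at-$\tau=1/2$ endpoint analysis producing the logarithm via the factorization $1+\cos\vp=(\lam-\lam_-)(\lam+\lam_+)/(2r\lam)$ when $t-s>r$, the differentiation of $K_2$ and $\Psi$ for \eqref{kernel1-}--\eqref{kernel4-}, and the quadratic lower bound on $T^2-r^2-\lam^2+2r\lam\cos\psi$ for the interior regime $0<\lam<\lam_-$---is precisely the scheme followed in those references, and the algebraic checks you recorded are correct. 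So your proposal is essentially the same as the (cited) proofs, with nothing substantive to add.
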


Now we are in a position to state our basic estimates for solutions to
the Cauchy problem.

\begin{proposition}
Let $0<\nu<3/2$, $\mu\ge 0$, $\kappa \ge 1$, and $\eta>0$.
Then we have
\begin{align} \label{der1}
& | L_0[\pa_{t,x} g](x,t)| (1+|x|)^{\frac12}(1+|t-|x||)^{\nu}
\\ \notag
& \quad \le C \Psi_{1+\mu}(t+|x|)\,\Psi_{\kappa}(t+|x|)
  \| g(t) : M_1(z_{\nu+\mu,\kappa;0}) \|,
\end{align}
\begin{align} \label{der1Bis}
& | L_0[\pa_{t,x} g](x,t)| (1+|x|)^{\frac12}(1+|t-|x||)^{1-\eta}
\\ \notag
& \quad \le C \log(2+t+|x|)\, \| g(t) : M_1(z_{1,1;1}) \|
\end{align}
for $(x,t) \in \R^2 \times [0,T)$, where 
$C$ depends on $\nu$, $\mu$, $\kappa$, and $\eta$.
\end{proposition}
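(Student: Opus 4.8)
Since $|\pa_{t,x}g|\le|g|_1$, one has $\|\pa_{t,x}g(t)\!:\!M_0(z_{\nu+\mu,\kappa;0})\|\le\|g(t)\!:\!M_1(z_{\nu+\mu,\kappa;0})\|$, so it suffices to estimate $L_0[\pa_t g]$ and $L_0[\pa_\ell g]$ ($\ell=1,2$) separately. For the time derivative I would use $L_0[\pa_t g]=\pa_t L_0[g]-K_0[(0,g(0))]$ and dispose of the initial-data term by \eqref{decay}, applied with $\nu$ replaced by $1+\nu$ when $0<\nu<1/2$ and by $(3/2)+\mu$ when $\nu\ge1/2$, exactly as in the proof of Lemma~\ref{freeD}; this is precisely what the right-hand sides of \eqref{der1} and \eqref{der1Bis} already accommodate. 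The task thus becomes to bound $\pa_{t,x}L_0[g]$.

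The plan for that is to work with the two-dimensional retarded potential written in polar coordinates \emph{about the origin}: with $y=\lam\xi$, $r=|x|$ and $\psi$ the angle between $\xi$ and $x/|x|$, the phase is $(t-s)^2-r^2-\lam^2+2r\lam\cos\psi$ and the kernel $K_1$ of Lemma~\ref{kova1} appears, the $\lam$-range splitting at $\lam_-=|t-s-r|$ into the arc region $\lam_-<\lam<\lam_+$ (angular opening $\vp=\vp(\lam;r,t-s)$) and the full-circle region $0<\lam<\lam_-$ (present only when $t-s>r$). Differentiating this representation in $x$ or $t$ and decomposing $\nabla_y g=\xi\,\pa_\lam g+\lam^{-1}\xi^{\perp}\pa_\psi g$, the angular term is already the angular component of $\nabla g$, hence $\le|g|_1$, and needs no further work; in the radial term I would integrate by parts in $\lam$, after the substitution $\psi=\Psi(\lam,\tau;r,t-s)$ (which turns $K_1\,d\psi$ into $K_2\,d\tau$ over a $\lam$-independent range $\tau\in(0,1)$). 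Because $\vp\to0$ at $\lam=\lam_+$ and, according to the sign of $t-s-r$, at $\lam=\lam_-$, and because $K_2$ is integrable at $\tau=0,1$, the endpoint contributions are finite and controlled, while the $\lam$-derivative lands on $\lam K_1$, on $\lam K_2$, or produces $K_3^{(\ell)}$ in the full-circle region. In the end $\pa_{t,x}L_0[g]$ is expressed through $K_1$, $K_2$, $\pa_\lam K_2$, $(\pa_\lam\Psi)K_2$ and $K_3^{(\ell)}$, acting on $|g|_1$ only.

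Into this I would insert the pointwise bound $|g(s,\lam\xi)|_1\le C\langle\lam\rangle^{-1/2}\langle\lam\rangle^{-\kappa}\langle s+\lam\rangle^{-\nu-\mu}\|g(t)\!:\!M_1(z_{\nu+\mu,\kappa;0})\|$ (and its $z_{1,1;1}$-analogue for \eqref{der1Bis}) and estimate the $\lam$-integrals with \eqref{kernel1}, \eqref{kernel6} for the undifferentiated kernel and with \eqref{kernel1-}, \eqref{kernel4-}, \eqref{kernel7} for the differentiated ones; the subsequent $s$-integration is then carried out as in \cite{ay}, \cite{hk2}, \cite{DiF03}. The factor $\langle\lam\rangle^{-\kappa}$ concentrates $g$ near $\lam\lesssim1$, which is exactly what improves the decay from the $\langle t+|x|\rangle^{1/2-\nu}$-type bound of \eqref{ba1} to the sharper $\langle x\rangle^{-1/2}\langle t-|x|\rangle^{-\nu}$; the borderline cases $\mu=0$ and $\kappa=1$ generate the logarithmic losses $\Psi_{1+\mu}(t+|x|)$ (from the $\lam$-integral) and $\Psi_\kappa(t+|x|)$ (from the angular $K_2$-integral), while \eqref{der1Bis}, in which $z_{1,1;1}$ also measures the distance to the light cone, yields only $\log(2+t+|x|)$ along the lines of \cite[Proposition~3.1]{DiF03}. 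The one genuinely new point is the range $0<\nu<1$: there the factor $\langle t-|x|\rangle^{\nu}$ must be propagated through the $\lam$-integral near the light cone, which I would handle by the same case analysis as in the proof of \eqref{DiF0}--\eqref{v1}, namely $t+r\lesssim1$, the near-cone region $|t-r|\lesssim\langle r\rangle$, and the interior $t\gg r$.

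I expect the main obstacle to be the control of the differentiated kernels near the light cone. By \eqref{kernel1-}, \eqref{kernel4-}, \eqref{kernel7} the pieces $\pa_\lam K_2$, $(\pa_\lam\Psi)K_2$ and $K_3^{(\ell)}$ are individually more singular than $K_1$ (they carry factors $(\lam+s+r-t)^{-1}$, $((\lam_+-\lam)(\lam-\lam_-))^{-1/2}$, $(\lam_--\lam)^{-1}$), so no term-by-term crude estimate can close; one must keep each piece $\lam$-integrable, exploit the cancellation of the endpoint contributions between the arc and full-circle regions, and only then multiply by $\langle t-|x|\rangle^{\nu}$. Making this uniform over the whole range $0<\nu<3/2$ --- in particular simultaneously for $\nu$ near $3/2$ and for $\nu<1$, where the effective power of $\langle t-|x|\rangle$ crosses the threshold $1/2$ built into $\Phi_{\nu-1}$ --- is the delicate bookkeeping, but once the above case division is set up it proceeds as in \cite{hk2}.
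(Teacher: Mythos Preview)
Your proposal is correct and follows essentially the same route as the paper: polar coordinates about the origin, the split at $\lambda_-=|t-s-r|$ into the arc region and the full-circle region (the paper's $E_1$/$E_2$), the $\psi\mapsto\Psi(\lambda,\tau)$ substitution turning $K_1\,d\psi$ into $K_2\,d\tau$, integration by parts in $\lambda$ (respectively in $y$ in the full-circle region, producing $K_3^{(\ell)}$), and then the kernel bounds of Lemma~\ref{kova1}. Two small differences worth noting: the paper does not handle $0<\nu<1$ by the case analysis you borrow from \eqref{DiF0}--\eqref{v1}, but instead introduces the cutoff $\delta=\min\{r,1/2\}$ and the $D_1/D_2$ split (no integration by parts near the edges, full integration by parts in the interior), after which the $s$-integration goes through uniformly in $\nu\in(0,3/2)$ via the change of variables $\alpha=\lambda+s$, $\beta=\lambda-s$; and for the time derivative the paper simply defers to \cite[Proposition~5.3]{ay} rather than peeling off $K_0[(0,g(0))]$ as you do.
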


\begin{proof}
We prove only \eqref{der1}, because the other can be treated
analogously.
In addition, we evaluate only the spatial derivatives,
since the time derivative can be handled by using Proposition 5.3
in \cite{ay}.
Besides, since the case where $\mu>0$ is treated by modifying a little
the argument for handling the case $\mu=0$, we let $\mu=0$
in the following.

We set
\begin{eqnarray}
&& E_1=\{(y,s) \in \R^2 \times [0,t):\  |y|+s>t-r, \ |x-y|<t-s \},
\non
\\
&& E_2=\{(y,s) \in \R^2 \times [0,t):\ |y|+s<t-r \},
\non
\end{eqnarray}
so that 
$\overline{E_1} \cup \overline{E_2}=\{(y,s) \in \R^2 \times [0,t):\ |x-y|<t-s \}$.
According to this decomposition, we define
\begin{align}
P_j[g](x,t)=\frac1{2\pi}\iint_{E_j} \frac{g(y,s)}{\sqrt{(t-s)^2-|x-y|^2}}
  dy ds \quad (j=1,2).
\label{der2}
\end{align}
Then we have $L_0[\pa_\ell g](x,t)=P_1[\pa_\ell g](x,t)+P_2[\pa_\ell g](x,t)$
with $\ell=1,2$.

 Firstly we deal with $P_1[\pa_\ell g](x,t)$.
Following the computation made in the section 4 of \cite{hk}, we find that
\begin{equation}
|P_1[\pa_\ell g](x,t)| \le   \| g(t): M_1(z_{\nu,\kappa;0}) \|  \sum_{k=0}^5 I_{k},
\label{der5}
\end{equation}
where we have set
\begin{eqnarray}
I_{1}&=&
\iint_{D_1} \frac{\lam^{\frac12}}{z_{\nu,\kappa;0}
 (\lam,s)} d\lam ds \int_{-\vp}^{\vp}
 K_1(\lam,\psi;r,t-s) d\psi,
\nonumber \\
I_{2}&=&
\int_{D_2^{'}} 
\frac{\lam^{\frac12}}{z_{\nu,\kappa;0}
 (\lam,s)} d\sigma 
\int_{0}^{1} K_2(\lam,\tau;r,t-s) d\tau,
\nonumber \\
I_{3}&=&
\iint_{D_2} \frac1{\lam^{\frac12} z_{\nu,\kappa;0}
 (\lam,s)} d\lam ds \int_{0}^{1}
 K_2(\lam,\tau;r,t-s) d\tau, 
\nonumber \\
I_{4}&=&
\iint_{D_2} 
\frac{\lam^{\frac12}}{z_{\nu,\kappa;0}
 (\lam,s)} d\lam ds \int_{0}^{1}
 |\pa_\lam K_2(\lam,\tau;r,t-s)| d\tau, 
\nonumber \\
I_{5}&=&
\iint_{D_2} \frac{\lam^{\frac12}}{z_{\nu,\kappa;0}
 (\lam,s)} d\lam ds \int_{0}^{1}
 |(\pa_\lam \Psi \cdot K_2)(\lam,\tau;r,t-s)| d\tau
\nonumber
\end{eqnarray}
and
\begin{align}
& D_1 = \{(\lam,s)\in (0,\infty) \times (0,t):\
\lam_-<\lam \le \lam_- +\delta \ \mbox{or} \
\lam_+ -\delta \le \lam <\lam_+ \},
\nonumber \\
& D_2 =\{(\lam,s)\in (0,\infty) \times (0,t):\
\lam_- +\delta \le \lam \le \lam_+ -\delta \},
\nonumber
\\
& D_2^{'}=\{(\lam,s) \in (0,\infty) \times (0,t):\
\lam=\lam_- +\delta \ \mbox{or} \ \lam=\lam_+ -\delta \}
\non
\end{align}
with $\delta=\min\{r,1/2\}$ and $\lam_-=|t-s-r|$, $\lam_+=t-s+r$.

Now we are going to show that
\begin{equation}
I_k \le C (1+r)^{-\frac12} (1+|t-r|)^{-\nu} \log(2+t+r)\,\Psi_{\kappa}(t+r)
\label{der11}
\end{equation}
holds for $k=1,\dots,5$.
First we evaluate $I_1$.
Notice that when $0<s<t-r$ and $\lam>\lam_+-\delta$, we have
$\lam-\lam_- >r$, so that
$$
\log \bigg[2+\frac{r\lam}{(\lam-\lam_-)(\lam_++\lam)} \bigg]
\le \log 3.
$$
For $0<s<t-r$ and $\lam>\lam_-$, we get
\begin{equation}
\log \bigg[2+\frac{r\lam}{(\lam-\lam_-)(\lam_++\lam)} \bigg]
\le \log \bigg[2+\frac{\lam}{\lam-\lam_-} \bigg].
\label{der8}
\end{equation}
Moreover, we note that $z_{\nu,\kappa;0}(\lam,s)$
is equivalent to $z_{\nu,\kappa;0}(\lam_+,s)$~(resp.
$z_{\nu,\kappa;0}(\lam_-,s)$) for
$\lam_+-\delta<\lam<\lam_+$~(resp. $\lam_- < \lam < \lam_-+\delta$).
Hence by (\ref{kernel1}), we get
\begin{equation}
I_1 \le C r^{-\frac12} [A_{1,0}+A_{2,0}+A_{3,0}],
\label{der6}
\end{equation}
where we have set
\begin{eqnarray}
A_{1,0}&=&\int_0^t \int_{\lam_+-\delta}^{\lam_+}
\frac{1}{z_{\nu,\kappa;0}(\lam_+,s)} d\lam ds,
\non
\\
A_{2,0}&=&\int_0^{(t-r)_+} \int_{\lam_-}^{\lam_-+\delta}
  \frac{1}{z_{\nu,\kappa;0}(\lam_-,s)}
   \log \bigg[2+\frac{\lam}{\lam-\lam_-} \bigg] d\lam ds,
\non
\\
A_{3,0}&=&\int_{(t-r)_+}^t \int_{\lam_-}^{\lam_-+\delta}
 \frac{1}{z_{\nu,\kappa;0}(\lam_-,s)}  d\lam ds.
\non
\end{eqnarray}

\par
It is easy to see that
\begin{eqnarray}
A_{1,0}&\le & \frac{\delta}{(1+t+r)^{\nu}} 
 \int_0^t \frac{1}{(1+\lambda_+)^{\kappa}} ds
\label{der7}
\\
&\le & C \delta(1+t+r)^{-\nu} \Psi_{\kappa}(t+r).
\non
\end{eqnarray}

To evaluate $A_{2,0}$, observe that
\begin{align}
\int_{\lam_-}^{\lam_-+\delta}
   \log \bigg[2+\frac{\lam}{\lam-\lam_-} \bigg] d\lam 
\le C\delta^{1/2} \log(3+\lambda_-).
\end{align}
Indeed, the left-hand side is equal to
$$
\delta \log \bigg[3+\frac{\lam_-}{\delta} \bigg] 
 +\int_{\lam_-}^{\lam_-+\delta}
   \frac{\lam_-}{3(\lam-\lam_-)+\lam_-} d\lam, 
$$
which is bounded by the right-hand side, if we use
$0\le \delta\le 1/2$, and 
an inequality $|x^{1/2} \log x| \le 2e^{-1}$ for $0<x<1$.

Since $s+\lambda_- \ge |t-r|$, we get
\begin{eqnarray}
A_{2,0}&\le & \frac{C\delta^{1/2}}{(1+|t-r|)^{\nu}} 
 \int_{0}^{(t-r)_+} \frac{\log(3+\lambda_-)}{(1+\lambda_-)^{\kappa}} ds
\label{der9}
\\
&\le & C\delta^{1/2}(1+|t-r|)^{-\nu} (\Psi_{\kappa}(t-r))^2.
\non
\end{eqnarray}
We easily have
\begin{eqnarray}
A_{3,0} \le C\delta(1+|t-r|)^{-\nu}\Psi_{\kappa}(t+r).
\label{der10}
\end{eqnarray}
Summing up (\ref{der7}), (\ref{der9}) and (\ref{der10}), we see
from \eqref{der6} that \eqref{der11} holds for $k=1$.

\par
 In the following, we assume $r \ge 1/2$ so that $\delta=1/2$,
because $D_2$ is the empty set when $0<r<1/2$.
Since $\lam=\lam_-+(1/2)$ or $\lam=\lam_+-(1/2)$
for $(\lam,s) \in D_2^{'}$, we get \eqref{der11} for $k=2$ 
similarly to the previous argument.

\par
 Next we evaluate $I_3$.
Note that $\lam \ge 1/2$ if $(\lam,s) \in D_2$ and that
$$
\log \bigg[2+\frac{r\lam}{(\lam-\lam_-)(\lam_++\lam)} \bigg]
\le C \log (2+\lam)
$$
for $0<s<t$ and $\lam \ge \lam_-+(1/2)$.
Therefore we get from (\ref{kernel1})
\begin{eqnarray} \notag
r^{\frac12} I_3 \le C \iint_{D_2} \frac{ \log (2+\lam)\,d\lam ds}
  {(1+\lam) z_{\nu,\kappa; 0}(\lam,s)}
=C \iint_{D_2} \frac{ \log (2+\lam)\,d\lam ds}{(1+s+\lam)^{\nu}
   (1+\lam)^{1+\kappa}}.
%\label{der13}
\end{eqnarray}
Since $s+\lam \ge |t-r|$ for $(\lam,s) \in D_2$, the right-hand side is
bounded by
\begin{eqnarray}
\frac{C}{(1+|t-r|)^{\nu}} \iint_{D_2} \frac{ \log (2+\lam)\,d\lam ds}
 {(1+\lam)^{1+\kappa}}
\le  \frac{C(\Psi_\kappa(t,r))^2}{(1+|t-r|)^{\nu}}.
\non
\end{eqnarray}
Therefore, (\ref{der11}) holds for $k=3$.

Next we evaluate $I_4$.
Since $\lam +s+r-t\ge 1/2$ for $\lam\ge\lam_- +(1/2)$,
we get from (\ref{kernel1-}) 
\begin{eqnarray}
r^{\frac12}I_4 &\le& C
 \iint_{D_2} \frac{ d\lam ds}{ z_{\nu,\kappa; 0}(\lam,s)
 (\lam+s+r-t+1)}
\non
\\ \notag
&\le& C \int_{|t-r|}^{t+r} \frac{d\alpha}{(\alpha-t+r+1)
      (1+\alpha)^{\nu}}
  \int_{-\alpha}^{\alpha}
      \frac{d\beta}{(1+\alpha+\beta)^{\kappa}}
\\
\non
&\le&  \frac{C}{(1+|t-r|)^{\nu}}\int_{|t-r|}^{t+r}
 \frac{\Psi_\kappa(\alpha)}{(\alpha-t+r+1)}
 d\alpha
\\ \notag
&\le&  C(1+|t-r|)^{-\nu} \Psi_\kappa(t+r)\,{\rm log}(2+r),
\end{eqnarray}
where we have changed the variables by
\begin{align} \label{b16}
\alpha=\lambda+s, \quad \beta=\lambda-s.
\end{align}

Next we evaluate $I_5$.
It follows from (\ref{kernel4-}) that
\begin{eqnarray}
\non
r^{\frac12}I_5 \le C
(A_{5,0}+B_{5,0}+C_{5,0}),
\end{eqnarray}
where we have set 
\begin{eqnarray}
\non
A_{5,0}&=& \iint_{D_2} \frac{ d\lam ds}{
 z_{\nu,\kappa; 0}(\lam,s)
\sqrt{t-s+r-\lam+1}\sqrt{\lam-t+s+r+1}},
\\
\non
B_{5,0}&=& \iint_{D_2} \frac{ d\lam ds}{
  z_{\nu,\kappa; 0}(\lam,s)
\sqrt{t-s+r-\lam+1}\sqrt{\lam+t-s-r+1}},
\\
\non
C_{5,0}&=& \iint_{D_2} \frac{ d\lam ds}{
 z_{\nu,\kappa; 0}(\lam,s)
\sqrt{\lam-t+s+r+1}\sqrt{\lam+t-s-r+1}}.
\end{eqnarray}
Changing the variables by (\ref{b16}), we have
\begin{eqnarray}
A_{5,0} &\le& C \int_{|t-r|}^{t+r}
 \frac{d\alpha}{(1+\alpha)^{\nu}
\sqrt{t+r-\alpha}\sqrt{\alpha-t+r}} 
   \int_{-\alpha}^\alpha
  \frac{d\beta}{(1+\alpha+\beta)^{\kappa}}
\non
\\ \notag
&\le & C (1+|t-r|)^{-\nu}
\int_{t-r}^{t+r}
 \frac{\Psi_\kappa(\alpha)\,d\alpha}{\sqrt{t+r-\alpha}\sqrt{\alpha-t+r}} 
\\
&=& C \pi (1+|t-r|)^{-\nu} \Psi_\kappa(t+r).
\non
\end{eqnarray}
Moreover, we see that $B_{5,0}$ is bounded by
%changing the variables first by (\ref{b16})
% and then $\sigma=\psi_j(\alpha,\beta)$, we get
\begin{align}
  C \int_{|t-r|}^{t+r}
 \frac{d\alpha}{(1+\alpha)^{\nu}
\sqrt{t+r-\alpha+1}} 
   \int_{r-t}^\alpha
  \frac{d\beta}{(1+\alpha+\beta)^{\kappa}
\sqrt{t-r+\beta+1} }.
\non
\end{align}
By integrating by parts in the $\beta$-integral, it is estimated by
$$
2(1+2\alpha)^{(1/2)-\kappa}+2\kappa \int_{r-t}^\alpha
 (1+\alpha+\beta)^{-(1/2)-\kappa} d\beta
\le C(1+\alpha+r-t)^{(1/2)-\kappa},
$$
if $\alpha \ge |r-t|$ and $\kappa>1/2$.
%where $2\beta_j=(1-c_j)\alpha+(1+c_j)(r-t)$.
%It has been shown Lemma 3.13 in \cite{ky} that the $\sigma$--integral in the above is 
% estimated by $C(1+|\beta_j|)^{-(\frac12+\rho)}$.
Therefore we get
\begin{eqnarray}
(1+|t-r|)^{\nu} B_{5,0} \le C \int_{|t-r|}^{t+r}
 \frac{d\alpha}{\sqrt{t+r-\alpha+1}\, (1+\alpha+r-t)^{\kappa-(1/2)}}
\le C,
\non
\end{eqnarray}
for $\kappa \ge 1$.
Similarly, one can show
\begin{eqnarray}
(1+|t-r|)^{\nu} C_{5,0} \le C\,\Phi_\kappa(t+r).
\non
\end{eqnarray}
Thus we obtain (\ref{der11}) for all $k=1,\dots,5$ in conclusion.

\par
 Secondly we deal with $P_2[\pa_\ell g](x,t)$. 
First, suppose $0 \le t-r \le 2$.
Switching to the polar coordinates as
\begin{equation}
x=(r\cos \theta, r\sin \theta), \quad
y=\lam\xi=(\lam\cos(\theta+\psi), \lam\sin(\theta+\psi))
\label{der3}
\end{equation}
in \eqref{der2} with $j=2$, we get
\begin{equation}
P_2[\pa_\ell g](x,t)=\int_0^{t-r}\!\! \int_{0}^{t-s-r}\!\! \int_{-\pi}^{\pi} 
\lam \pa_\ell g(\lam\xi,s)
  K_1(\lam,\psi;r,t-s) d\psi d\lam ds.
\non %\label{der15}
\end{equation}
For $0<s<t-r$, $0< \lam <\lam_-$, we have
$0<{\lam_- -\lam} \le 2$, so that
\eqref{kernel6} yields
\begin{align}
& \int_{-\pi}^\pi K_1(\lam,\psi;r,t-s) d\psi 
  \le \frac{C}
{\sqrt{\lam} \sqrt{\lam_+ -\lam}} 
\log \left[ 2+\frac{\lambda}{\lam_--\lam} \right]
\label{kernel5}
\\ \notag
& \hspace{30mm}  \le \frac{C}
{\sqrt{\lam} \sqrt{r+1} \sqrt{\lam_- -\lam}} 
\log \left[ 2+\frac{\lam}{\lam_--\lam} \right].
\end{align}
Since $\lam <\lam_- \le 2$, we get
\begin{align}\notag
& \sqrt{r+1}\,|P_2[\pa_\ell g](x,t)| 
\\ \notag
& \quad  \le C \| g(t) : M_1({z}_{\nu,\kappa;0}) \|
 \int_0^{t-r}\!\! \int_{0}^{\lambda_-}
 \frac{1}{\sqrt{\lam_--\lam}}
\log \left[ 2+\frac{2}{\lam_--\lam} \right] d\lam ds.
\end{align}
The last integral is bounded, because $0 \le t-r \le 2$. 
Hence we obtain
\begin{align}\label{p20}
(1+r)^{1/2}  (1+|t-r|)^{\nu} |P_2[\pa_\ell g](x,t)| 
 \le C \| g(t) : M_1({z}_{\nu,\kappa;0}) \|.
\end{align}

In the following, suppose $t-r \ge 2$, so that $t-r-1 \ge (t-r)/2$.
We decompose $P_2[\pa_\ell g](x,t)$ as
\begin{align*}
P_2[\pa_\ell g](x,t) = & 
 \frac1{2\pi}\iint_{E_2} \frac{H(|y|>t-s-r-1)\,\pa_\ell g(y,s)}{\sqrt{(t-s)^2-|x-y|^2}}
  dy ds 
\\
& \  +\frac1{2\pi}\iint_{E_2} \frac{H(|y|<t-s-r-1)\,\pa_\ell g(y,s)}{\sqrt{(t-s)^2-|x-y|^2}}
  dy ds 
\\
\equiv & \, Q_1(x,t)+Q_2(x,t).
\end{align*}
Since $0<\lam_- -\lam <1$ for $t-s-r-1 < \lam <{t-s-r}$, 
one can proceed as in the previous case and get
\begin{align} \notag %\label{p2}
& \sqrt{r+1}\,|Q_1(x,t)| \le C \| g(t) : M_1({z}_{\nu,\kappa;0}) \|
\\ \notag 
& \quad \times \int_0^{t-r}\!\! \int_{(t-s-r-1)_+}^{t-s-r}
 \frac{\log(2+\lam)-\log(\lam_--\lam)}{z_{\nu,\kappa;0}(s,\lambda) \sqrt{\lam_--\lam}}
 d\lam ds.
\end{align}
Changing the variables by (\ref{b16}), the last integral is estimated by
\begin{align} \notag
& C \int_{t-r-1}^{t-r}
 \frac{d\alpha}{(1+\alpha)^{\nu}(t-r-\alpha)^{\frac12}}
  \int_{-\alpha}^{\alpha}
       \frac{\log(4+\alpha+\beta)-\log(t-r-\alpha)}{(1+\alpha+\beta)^{\kappa}}
   d\beta
\\ \notag
&  \ \le C(1+|t-r|)^{-\nu} (\Psi_\kappa(t-r))^2.
\end{align}
Thus we get
\begin{align} \label{p2}
& \sqrt{r+1}\,(1+|t-r|)^{\nu} |Q_1(x,t)| 
\\ \notag
& \quad \le C \| g(t) : M_1({z}_{\nu,\kappa;0}) \| (\Psi_\kappa(t-r))^2.
\end{align}

 Finally, we deal with $Q_2(x,t)$.
Making the integration by parts in $y$ and
switching to the polar coordinates as (\ref{der3}), we get
\begin{align}
 Q_2(x,t) 
  &=\int_0^{t-r-1}\!\! \int_0^{t-s-r-1}\!\! \int_{-\pi}^{\pi}  \lam g(\lam\xi,s)
  K_3^{(\ell)} (\lam,\psi;x,t-s) d\psi d\lam ds
\label{p3}
\\
& \quad +\int_0^{t-r-1}\!\! \int_{-\pi}^{\pi} \lam\xi_\ell g(\lam\xi,s)
    K_1(\lam,\psi;r,t-s) \bigg|_{\lam=t-s-r-1} d\psi ds.
\non
\end{align}
%where $K_3$ was defined by \eqref{defK3}.
We see from (\ref{kernel5}) that the second term in the right-hand side of (\ref{p3})
is evaluated by $C \| g(t) : M_0({z}_{\nu,\kappa;0}) \|/\sqrt{r+1}$ times
\begin{eqnarray}\notag
&& \int_0^{t-r-1}
 \frac1{z_{\nu,\kappa;0}(\lam,s) \sqrt{\lam_- -\lam}}  
\log \left[ 2+\frac{\lambda}{\lam_--\lam} \right] \bigg|_{\lam=t-s-r-1} ds
\\
&\le&
 C (1+|t-r|)^{-\nu}(\Psi_\kappa(t-r))^2.
\non
\end{eqnarray}
Noting $\lam_+-\lam \ge 2r+1$ for $\lam<t-s-r-1$, 
we see from \eqref{kernel7} that the first term in the right-hand side 
of (\ref{p3}) is estimated by
$C \| g(t) : M_0({z}_{\nu,\kappa;0}) \|/\sqrt{r+1}$ times
\begin{align}
& \int_0^{t-r-1}\!\! \int_{(\frac{t-r}2-s)_+}^{t-s-r-1}
 \frac{ d\lam ds}{z_{\nu,\kappa;0}(\lam,s)
 (\lam_- -\lam)}
+\int_0^{\frac{t-r}2}\!\! \int_0^{\frac{t-r}2-s}
 \frac{ \sqrt{\lam} d\lam ds}{z_{\nu,\kappa;0}(\lam,s)
(\lam_- -\lam)^{\frac32}}
\non
\\
& \quad \le \frac{C}{(1+|t-r|)^{\nu}}
 \int_{\frac{t-r}2}^{t-r-1} \frac{d\alpha}{t-r-\alpha}
  \int_{-\alpha}^{\alpha} \frac{d\beta}
 {(1+\alpha+\beta)^{\kappa}}
\non
\\
& \quad \quad +\frac{C}{(1+|t-r|)^{\frac32}}
 \int_0^{\frac{t-r}2} \frac{d\alpha}{(1+\alpha)^{\nu-\frac12}}
  \int_{-\alpha}^{\alpha} \frac{d\beta}  {(1+\alpha+\beta)^{\kappa}}
\non
\\
& \quad \le C(1+|t-r|)^{-\nu} \log (2+t-r)\,\Psi_\kappa(t-r),
\non
\end{align}
because we have assumed $\nu<3/2$.
Thus we get
\begin{align} \label{p31}
& \sqrt{r+1}\,(1+|t-r|)^{\nu} |Q_2(x,t)| 
\\ \notag
& \quad \le C \| g(t) : M_0({z}_{\nu,\kappa;0}) \| 
    \log (2+t-r)\,\Psi_\kappa(t-r).
\end{align}
Now, (\ref{der1}) follows from (\ref{der5}), (\ref{der11}), \eqref{p20}, \eqref{p2}
and (\ref{p31}).
This completes the proof of Proposition B.2.
\end{proof}

\begin{center}
{\bf Acknowledgments}
\end{center}
The author is grateful to Professor Sandra Lucente for valuable discussion.
He is partially supported by Grant-in-Aid for Science Research No. 20224013, JSPS.

%%%%%%%%%%%%%%%%%%%%%%%%

%%%%%%%%%%%%%%%%%%%%%


\begin{thebibliography}{99}
%%%%%%%%%%%%%%%%%%%%% AAA
% \bibitem{Ada75} R.~A.~Adams,
% ``Sobolev spaces'', Academic Press, New York--London, 1975.
%\bibitem{agemi}R. Agemi,
%{\it Blow-up of solutions to nonlinear wave equations
%in two space dimensions},
% Manuscripta Math.
%{\bf 73} (1991), 153--162.

\bibitem{ay}R. Agemi and K. Yokoyama,
{\it The null condition and global existence of
solutions to systems of wave equations with different speeds},
{\sl in }
\lq\lq Advances in nonlinear partial differential equations and
stochastics" (S. Kawashima and T. Yanagisawa ed.),
Series on Adv. in Math. for Appl. Sci., Vol. 48,
43--86, World Scientific, Singapore, 1998.

%\bibitem{Chr86}D.~Christodoulou, 
%{\it Global solutions of nonlinear hyperbolic equations for small initial data},
%{Comm.~Pure Appl.~Math.}
%{\bf 39} (1986), {267--282}.


%%%%%%%%%%%%%%%% DDD
\bibitem{DiF03}M. Di Flaviano,
{\it
Lower bounds of the life span of classical solutions to a system
of semilinear wave equations in two space dimensions},
 J. Math. Anal. Appl.
{\bf 281} (2003), 22--45.
%%%%%%%%%%%%%%%% GGG

\bibitem{GiTr}D.~Gilbarg and N.~S.~Trudinger,
\lq\lq Elliptic partial differential equations of second order",
Second edition, Springer-Verlag, Berlin, 1983.

\bibitem{God89}P. Godin,
{\it Long time behavior of solutions to some nonlinear invariant
mixed problems}, Comm. Partial Differential Equations {\bf 14}
(1989), 299--374.

\bibitem{God95}P. Godin,
{\it Global existence of solutions to some exterior radial
quasilinear Cauchy-Dirichlet problems}, 
Amer. J. Math. {\bf 117}
(1995), 1475--1505.

%%%%%%%%%%%%%%%% HHH
\bibitem{Ha95}N. Hayashi,
{\it
Global existence of small solutions to quadratic nonlinear wave
equations in an exterior domain},
 J. Funct. Anal.
{\bf 131} (1995), 302--344.

\bibitem{hk}A.~Hoshiga and H.~Kubo,
{\it
Global small amplitude solutions of nonlinear hyperbolic 
systems with a critical exponent 
under the null condition},
SIAM J.~Math.~Anal. {\bf 31}
(2000),~486--513. 

\bibitem{hk2}A.~Hoshiga and H.~Kubo,
{\it
Global solvability for systems of nonlinear wave equations
with multiple speeds in two space dimensions},
 Diff.~Integral Eqs.
{\bf 17} (2004), 593--622.

%%%%%%%%%%%%%%%%%%%%%  III

%%%%%%%%%%%%%%%%%%%%%  JJJ

%%%%%%%%%%%%%%%%%%%%%  KKK

%\bibitem{Kat04:02}S.~Katayama, 
%    {\it Global and almost--global existence for
%                systems of nonlinear wave equations with different
%                propagation speeds}, 
%  Diff.~Integral Eqs. {\bf 17} (2004), 1043--1078.
 
\bibitem{KatKub}S.~Katayama and H.~Kubo,
{\it 
An alternative proof of global existence for nonlinear wave equations 
in an exterior domain},
J. Math. Soc. Japan {\bf 60} (2008), 1135--1170. 

\bibitem{KeSmiSo02G}M.~Keel, H.~Smith and C.~D.~Sogge,
{\it Global existence for a quasilinear wave equation outside of
star-shaped domains}, J.~Funct.~Anal. {\bf 189} (2002), 155--226.

\bibitem{KeSmiSo04}M. Keel, H. Smith and C. D. Sogge,
{\it Almost global existence for quasilinear wave equations in
three space dimensions}, J. Amer. Math. Soc. {\bf 17} (2004), 109--153.

\bibitem{kl0}S. Klainerman,
{\it Uniform decay estimates and the Lorentz invariance
of the classical wave equation},
 Comm. Pure Appl. Math.
{\bf 38} (1985), 321--332.

\bibitem{kov87}M. Kovalyov,
{\it
Long-time behaviour of solutions of a system on
non-linear wave equations},
Comm. in P.D.E.
{\bf 12} (1987), 471--501.

\bibitem{Kub94}H.~Kubo,
{\it
Blow-up for semilinear wave equations with initial data
of slow decay in low space dimensions},
Differential and Integral Equations,
{\bf 7} (1994), 315--321.

\bibitem{Kub06}H.~Kubo,
{\it
Uniform decay estimates for the wave equation in an exterior domain},
~{\rm in} \lq\lq Asymptotic analysis and singularities", 31--54,
Advanced Studies in Pure Mathematics 47-1, Math.~Soc. of Japan, 2007.

\bibitem{kubota}K. Kubota,
{\it
Existence of a global solutions to a semi-linear wave equation
with initial data of non-compact support in low space dimensions},
 Hokkaido Math. J. 
{\bf 22} (1993), 123--180.
%%%%%%%%%%%%%%%%%%%%%%%%%% LLL

%\bibitem{LaPh}P.~D.~Lax and R.~S.~Phillips,
%\lq\lq  Scattering theory", 
%Academic Press, New York and London, 1967.

%\bibitem{LioMag72} J.~L.~Lions and E.~Magenes,
%``Non-Homogeneous Boundary Value Problems and Applications'', Vol. I,
%Springer--Verlag, Berlin, 1972.
%%%%%%%%%%%%%%%%%%%%%%%%%% MMM
%\bibitem{Mel79}R.~B.~Melrose, 
%{\it 
%Singularities and energy decay in acoustical scattering},
% Duke Math.~J. 
%{\bf 46} (1979), 43--59.

\bibitem{Met04}J.~Metcalfe,
{\it
Global existence for semilinear wave equations exterior to
nontrapping obstacles},
 Houston J.~Math.
{\bf 30} (2004), 259--281.

\bibitem{MetNaSo05b}J.~Metcalfe, M.~Nakamura and C.~D.~Sogge,
{\it
Global existence of quasilinear, nonrelativistic wave equations
satisfying the null condition},
 Japan.~J.~Math. (N.S.) 
{\bf 31} (2005), 391--472.

\bibitem{MetSo05}J.~Metcalfe and C.~D.~Sogge,
{\it
Hyperbolic trapped rays and global existence of quasilinear wave equations},
 Invent.~Math.
{\bf 159} (2005), 75--117.

%\bibitem{MetSo07}J.~Metcalfe and C.~D.~Sogge,
%{\it Global existence of null-form wave equations in exterior domains
%}, Math. Z. 256 (2007), 521--549.

\bibitem{Mor75}C. S.~Morawetz,
{\it
Decay for solutions of the exterior problem for the wave equation},
 Comm. Pure Appl. Math.
{\bf 28} (1975), 229--264.
%%%%%%%%%%%%%%%%%%%%%%%%%% SSS
%\bibitem{Ral69} J.~Ralston, 
%{\it Solutions of the wave equation with localized energy},
%Comm.~Pure Appl.~Math. {\bf 22} (1969), 807--823.

%\bibitem{Shi83}Y.~Shibata,
%{\it On the global existence theorem of classical solutions of
%second order fully nonlinear hyperbolic equations with first order
%dissipation in an exterior domain}, Tsukuba J.~Math. {\bf 7}
%(1983), 1--68.

%\bibitem{ShiTsu83}Y.~Shibata and Y.~Tsutsumi,
%{\it
%Global existence theorem for nonlinear wave equation in exterior domain},
%in \lq\lq Recent topics in nonlinear PDE (Hiroshima, 1983)", 
%155--196, North-Holland Math. Stud., 98, North-Holland, Amsterdam, 1984.

\bibitem{ShiTsu86}Y.~Shibata and Y.~Tsutsumi,
{\it
On a global existence theorem of small amplitude solutions
for nonlinear wave equations in an exterior domain},
 Math.~Z.
{\bf 191} (1986), 165--199.
 
\bibitem{Vai75} B. R. Vainberg,
{\it
The short-wave asymptotic behavior of the solutions of stationary problems, 
and the asymptotic behavior as $t\rightarrow \infty $ of the solutions of 
nonstationary problems},
(Russian) Uspehi Mat. Nauk 
{\bf 30} (1975), 3--55. 


%%%%%%%%%%%%%%%%%%%%%%%%%% YYY

\end{thebibliography}
\end{document}